\setlist[itemize]{label=--}
\setlist[enumerate]{label=(\arabic*),labelindent=\parindent,leftmargin=*}
\newtheorem{theorem}{Theorem}
\newtheorem{lemma}[theorem]{Lemma}
\newtheorem{corollary}[theorem]{Corollary}
\newtheorem{definition}[theorem]{Definition}
\theoremstyle{remark}
\theoremstyle{definition}
\newcommand{\congest}{\ensuremath{\mathsf{CONGEST}}\xspace}
\newcommand{\cc}{\ensuremath{\mathsf{CC}}\xspace}
\newcommand{\rcc}{\ensuremath{\mathsf{RCC}}\xspace}
\newcommand{\disj}{\ensuremath{\mathsf{DISJ}}\xspace}
\newcommand{\A}{\EuScript{A}}
\newcommand{\C}{\EuScript{C}}
\newcommand{\D}{\EuScript{D}}
\newcommand{\F}{\EuScript{F}}
\newcommand{\G}{\EuScript{G}}
\newcommand{\degen}{d}
\newcommand{\vc}{\tau}
\DeclareMathOperator{\tw}{tw}
\newcommand{\mcistoC}[1]{C_{#1,\to C}}
\newcommand{\mcistoI}[1]{C_{#1,\to I}}
\definecolor{citecolor}{HTML}{0000C0}
\definecolor{urlcolor}{HTML}{000080}
\newenvironment{myabstract}
{\list{}{\listparindent 1.5em%
		\itemindent    \listparindent
		\leftmargin    1cm
		\rightmargin   1cm
		\parsep        0pt}%
	\item\relax}
{\endlist}
\newenvironment{mycover}
{\list{}{\listparindent 0pt
		\itemindent    \listparindent
		\leftmargin    1cm
		\rightmargin   1cm
		\parsep        0pt}%
	\raggedright
	\item\relax}
{\endlist}
\newcommand{\myemail}[1]{\,$\cdot$\, {\small #1}}
\newcommand{\myaff}[1]{\,$\cdot$\, {\small #1}\par\medskip}
\begin{document}

% ------------------------------------------------------
\begin{mycover}
	{\LARGE\bfseries\boldmath Beyond Distributed Subgraph Detection: Induced Subgraphs, Multicolored Problems and Graph Parameters \par}
    
	\bigskip
	\bigskip    
    \textbf{Janne H.\ Korhonen}
    \myemail{janne.korhonen@ist.ac.at}
    \myaff{IST Austria}
    
    \textbf{Amir Nikabadi}
    \myemail{amir.nikabadi@ens-lyon.fr}
    \myaff{ENS de Lyon}
\end{mycover}

\medskip

\begin{myabstract}
\noindent\textbf{Abstract.}
Subgraph detection has recently been one of the most studied problems in the \congest model of distributed computing. In this work, we study the distributed complexity of problems closely related to subgraph detection, mainly focusing on \emph{induced subgraph detection}. The main line of this work presents lower bounds and parameterized algorithms w.r.t \emph{structural parameters} of the input graph:
\begin{itemize}
    \item On general graphs, we give unconditional lower bounds for induced detection of cycles and patterns of treewidth $2$ in \congest. Moreover, by adapting reductions from centralized parameterized complexity, we prove lower bounds in \congest for detecting patterns with a $4$-clique, and for induced path detection conditional on the hardness of triangle detection in the congested clique.
    \item On graphs of bounded degeneracy, we show that induced paths can be detected fast in \congest using techniques from parameterized algorithms, while detecting cycles and patterns of treewidth $2$ is hard.
	\item On graphs of bounded vertex cover number, we show that induced subgraph detection is easy in \congest for any pattern graph. More specifically, we adapt a centralized parameterized algorithm for a more general \emph{maximum common induced subgraph detection} problem to the distributed setting.
    \end{itemize}
In addition to these induced subgraph detection results, we study various related problems in the \congest and congested clique models, including for \emph{multicolored} versions of subgraph-detection-like problems.
\end{myabstract}
\clearpage
% \tableofcontents

% ------------------------------------------------------

% ------------------------------------------------------
\section{Introduction}
% ------------------------------------------------------

\emph{Subgraph detection} is one of the most studied problems in the \congest and congested clique models of distributed computing~\cite{tritri, censor2015algebraic, korhonen2017deterministic,drucker13,disc2017property,fraigniaud2017spaa,fischer2018possibilities,partition-trees,censor2021tight}. The complexity of distributed subgraph detection is understood for many pattern graphs -- for example, in the \congest model, tight bounds are known for \emph{path}~\cite{korhonen2017deterministic,disc2017property} and \emph{odd cycle} detection~\cite{korhonen2017deterministic,drucker13}, and it is known that pattern graphs requiring almost quadratic time exist~\cite{fischer2018possibilities}. However, unresolved questions remain about the exact complexity of, e.g., \emph{triangle} detection in either \congest or congested clique, and \emph{even cycle} detection in \congest.

In this work, we look at the closely related \emph{induced subgraph detection} problem, which has so far not received any attention in the distributed setting.
In particular, we aim to understand the complexity of induced subgraph detection for common pattern graphs, such as paths and cycles, as well as how the situation contrasts with the non-induced case. It is well known that in the centralized setting, induced subgraph detection is generally more difficult than non-induced subgraph detection, so one would expect that situation is the same also in the distributed setting. 

\subsection{Background and setting}

Before presenting our results, we start by discussing the wider context of distributed subgraph detection problems. As mentioned above, we work in the \congest and congested clique models of distributed computing and use $G$ and $n$ to denote the input graph and the number number of nodes in the input graph, respectively.

In the paper, we mostly consider subgraph detection and induced subgraph detection problems; we are given a pattern graph $H$ with $k$ nodes, known to all nodes in $G$, and the task is to decide if the input graph $G$ contains $H$ as a subgraph or an induced subgraph; more precisely, any node $v$ that is part of an admissible copy of $H$ should report that the input is a yes-instance.

\paragraph{Fixed-parameter tractability.} Subgraph and induced subgraph detection problems can be viewed as \emph{parameterized problems}; such problems are studied in centralized setting under the field of \emph{parameterized complexity}~\cite{cygan2015parameterized}. A parameterized problem is defined by the input and a problem parameter $k$ -- formally, a \emph{(complexity) parameter $k$} is a mapping from the input instance to natural numbers. The basic question of centralized parameterized complexity is to understand which problems are \emph{fixed-parameter tractable}, i.e. have algorithms with running time $f(k) |x|^{O(1)}$, where $f$ is an arbitrary function and $x$ is the binary encoding of the input instance. For example, $k$-cycle detection can be viewed as a parameterized problem.

Similarly, one can consider fixed-parameter tractability in the distributed setting. The strictest definition is to ask which problems have distributed algorithms where the running time depends only on the parameter $k$~\cite{siebertz2019parameterized,ben2018parameterized}. However, this arguably does not capture all fixed-parameter tractability phenomena in distributed models -- e.g. $k$-cycle detection cannot be solved in $f(k)$ rounds for any function $f$ in the \congest model.

A more general perspective is to ask what is the smallest function $T$ such that a parameterized problem can be solved in $f(k) \cdot T(n)$ rounds,
for some function $f \colon \mathbb{N} \to \mathbb{N}$. Several results of this type are known for subgraph detection problems; for example, $k$-cycle detection can be solved in $O(k2^k n)$ rounds in the \congest model~\cite{korhonen2017deterministic,disc2017property}, and in $2^{O(k)}n^{0.158}$ rounds in the congested clique model~\cite{censor2015algebraic}, though these bounds are not tight for \emph{even-length} cycles~\cite{partition-trees,fischer2018possibilities}. 

\paragraph{Parameters and graph structure.} For subgraph and induced subgraph detection problems, the natural complexity parameter is the number of nodes $k$ in the pattern graph. However, parameterized complexity frequently studies other complexity parameters -- for our purposes, the most relevant are structural graph parameters, in particular degeneracy $\degen(G)$,
treewidth $\tw(G)$, and vertex cover number $\vc(G)$ (see Section~\ref{sec:preliminaries} for the precise definitions). While bounded degeneracy (equivalently, bounded arboricity)~\cite{barenboim2010sublogarithmic,korhonen2017deterministic,coloringbook}
has been studied in the distributed setting, bounded treewidth and bounded vertex cover number less so.

Given a structural parameter $p$, we can consider the complexity of subgraph or induced subgraph detection parameterized either by the structural parameter $p(G)$ of the input graph, or by the structural parameter $p(H)$ of the pattern graph. Note that we have
\[  \degen(G) \le \tw(G) \le \vc(G)\,.\]
For parameters $p_1$ and $p_2$ with $p_1(G) \le p_2(G)$, upper bounds w.r.t.\ parameter $p_2$ imply upper bounds w.r.t.\ parameter $p_1$, and lower bounds w.r.t.\ parameter $p_1$ imply lower bounds w.r.t. parameter $p_2$. 

\paragraph{Lower bounds and reductions.} The standard technique for proving unconditional \congest lower bounds is by reduction from communication complexity problems, most often using \emph{families of lower bound graphs}~\cite{dassarma12,drucker13,CHKP17,AbboudCK16,frischknecht2012,czumaj2019detecting} (see Section~\ref{sec:preliminaries}). By contrast, reductions between problems are less useful in the \congest model, as the model can implement only very limited reductions efficiently.

However, there are still uses for reductions in distributed complexity theory, which we will apply in this work. First, in the congested clique, sub-polynomial round reductions can be used to establish relative complexities of problems~\cite{korhonen2018towards}. Second, as noted by Bacrach et al.~\cite{bacrach2019hardness}, centralized reductions can be used to transform families of lower bound graphs for one problem into families of lower bound graphs for a second problem.

\begin{table}[!t]
\newcommand{\myitem}{\ \ $\cdot$ }
\centering
\caption{Lower bounds on general graphs. Improved lower bounds of Le Gall and Miyamoto~\cite{legall2021induced} are independent and concurrent work (see main text.)}\label{tab:results-lb}
\vspace{0.2em}
\begin{tabular*}{\linewidth}{@{}l@{\extracolsep{\fill}}l@{}l@{}r@{}}
\toprule
Problem & Bound &  & \\
\midrule
Induced $2k$-cycle ($k \ge 3$) & $\Omega(n/\log n)$ & & Section~\ref{sec:induced-even-cycle-lb} \\
Induced $H$-detection &  &  & \\
\myitem any $H$ with $4$-clique    & $\Omega(n^{1/2}/\log n)$ & & Section~\ref{sec:lb-pattern-with-clique} \\
\myitem some $H$ with $\tw(H) = 2$${}^\dagger$ & $\Omega(n^{2-\varepsilon})$ & & Section~\ref{sec:lb-low-treewidth} \\
\midrule
Multicolored $k$-cycle ($k \ge 4$)        &  $\Omega(n/\log n)$ & & Section~\ref{sec:multicolor-lb} \\
Multicolored induced path of length $k$ ($k \ge 6$) &  $\Omega(n/\log n)$ & & Section~\ref{sec:multicolor-lb} \\
\midrule
Induced $k$-cycle ($k \ge 4$) & $\tilde{\Omega}(n)$ & & \cite{legall2021induced} \\
Induced $k$-cycle ($k \ge 8$) & $\tilde{\Omega}(n^{2-\Theta(1/k)})$ & & \cite{legall2021induced} \\

\bottomrule
\multicolumn{4}{l}{\ \footnotesize{${}^\dagger$holds for any $\varepsilon > 0$, for some $H$ that is chosen depending $\varepsilon$}}
\end{tabular*}
\vspace{0.5em}
\caption{Bounds w.r.t.~structural graph parameters. Results attributed to \cite{korhonen2017deterministic} follow directly from the proofs in that work, but are not stated in that work for induced subgraphs.}\label{tab:results-param-ub}
\vspace{0.2em}
\begin{tabular*}{\linewidth}{@{}l@{\extracolsep{\fill}}l@{}l@{}r@{}}
\toprule
Problem & Bound & & \\
\midrule
Induced $k$-tree${}^\dagger$  & $2^{O(k\degen(G))} k^k + O(\log n)$ & & Section \ref{sec:induced-trees-degen} \\
(Induced) $H$-detection, & \multirow{2}{*}{$\Omega(n^{1-\varepsilon})$} &\multirow{2}{*}{holds for $\degen(G) = 2$}& \multirow{2}{*}{Section \ref{sec:lb-low-treewidth-degen}}\\
\ \ some $H$ with $\tw(H) = 2$${}^\ddagger$  &&& \\
\midrule
(Induced) $k$-cycle ($k \ge 6$)  & $\Omega(n^{1/2}/\log n)$ & holds for $\degen(G) = 2$& \cite{korhonen2017deterministic} \\
Induced $4$-cycle & $O(\degen(G) + \log n)$ && \cite{korhonen2017deterministic} \\
Induced $5$-cycle & $O(\degen(G)^2 + \log n)$ && \cite{korhonen2017deterministic} \\
\midrule
MCIS    & $2^{O(\vc^2)}$ & $\vc = \vc(G) + \vc(H)$ & Section \ref{sec:mcis} \\
Induced subgraph & $2^{O((\vc(G) + k)^2)}$ & & Section \ref{sec:mcis} \\
\bottomrule
\multicolumn{4}{l}{\ \footnotesize{${}^\dagger$randomized algorithm, can be derandomized with extra assumptions and worse running time}}\\
\multicolumn{4}{l}{\ \footnotesize{${}^\ddagger$holds for any $\varepsilon > 0$, for some $H$ that is chosen depending $\varepsilon$}}
\end{tabular*}
\end{table}

\subsection{Results: induced subgraph detection on general graphs}

First, we consider the hardness of induced subgraph detection on general graphs. We show that for common pattern graphs, the induced version of the problem is at least as hard as the non-induced version, and in many cases harder.

\paragraph{Unconditional lower bounds.} We start with unconditional lower bounds for induced subgraph detection in \congest; see Table~\ref{tab:results-lb} for a summary of these results.

For cycles of length at least $6$, we show that the induced cycle detection problem requires at least $O(n/\log n)$ rounds in the \congest model. The result follows from a combination of the existing lower bound construction for odd-length cycles, and new construction for induced even cycles. By comparison, the existing lower bounds for non-induced subgraph detection in \congest are $\Omega(n^{1/2}/\log n)$ for even cycle detection~\cite{korhonen2017deterministic}, and $\Omega(n/\log n)$ for odd cycle detection excluding triangles~\cite{drucker13}; it is also known that even cycles can be detected in $O(n^\delta)$ time, for $\delta < 1$ that depends on the length of the cycle~\cite{fischer2018possibilities}.

We also prove that there are pattern graphs for which induced subgraph detection (and also non-induced detection) requires near-quadratic time in \congest, in a similar spirit as the hard pattern graphs for non-induced subgraph detection presented by Fischer et al.~\cite{fischer2018possibilities}. Moreover, we show that these pattern graphs can be constructed to have treewidth $2$; contrast this with the centralized setting, where low-treewidth patterns are easy to detect~\cite{alon1995color}.

\paragraph{Unconditional lower bounds: recent independent work.} After submitting the first version of this paper, we learned about the independent and concurrent work of \citet{legall2021induced}, which gives lower bounds for induced cycle detection and diamond listing. In particular, they show that detecting induced $k$-cycles requires $\tilde{\Omega}(n)$ rounds for any $k \ge 4$, and $\tilde{\Omega}(n^{2-\Theta(1/k)})$ rounds for any $k \ge 8$. These results subsume our lower bounds for induced cycle and treewidth-2 subgraph detection. 

\paragraph{Reductions.} Next, we turn our attention to conditional lower bounds for problems where standard \congest lower bound techniques do not immediately yield unconditional lower bounds. See Figure~\ref{fig:problem-relations} for a summary of these results.

We adapt a centralized reduction of Dalirrooyfard et al.~\cite{10.1145/3313276.3316329} between clique and independent set detection and induced subgraph detection. Specifically, they show that detecting an induced subgraph $H$ that contains a $k$-clique ($k$-independent set) is as hard as detecting $k$-clique ($k$-independent set, resp.). We show that this reduction can also be implemented in the congested clique model.

It follows that detecting induced paths of length at least $5$ in either the \congest or congested clique model is at least as hard \emph{triangle detection} in the congested clique model, and more generally, detecting paths of length at least $2k-1$ in \congest or congested clique is as hard as detecting $k$-cliques in the congested clique.  By comparison, the best known upper bounds in the congested clique are $O(n^{0.158})$ for triangle detection~\cite{censor2015algebraic}, and $O(n^{1-1/k})$ for $k$-clique detection~\cite{tritri}; while no lower bounds for the congested clique model are known, improving over the $O(n^{0.158})$-round matrix multiplication based triangle detection would have major implications for distributed algorithms. However, it is worth noting that induced paths of length $2$ can be detected in $O(1)$ rounds in \congest, in contrast to triangles (see Appendix~\ref{sec:induced-short-paths}).

Moreover, the reduction allows us to lift the $\Omega(n^{1/2}/\log n)$ \congest lower bound of Czumaj and Konrad~\cite{czumaj2019detecting} for $4$-clique detection to induced and non-induced detection of any pattern graph $H$ that contains a $4$-clique.

\begin{figure}[!t]
\begin{center}
\includegraphics[width=\textwidth]{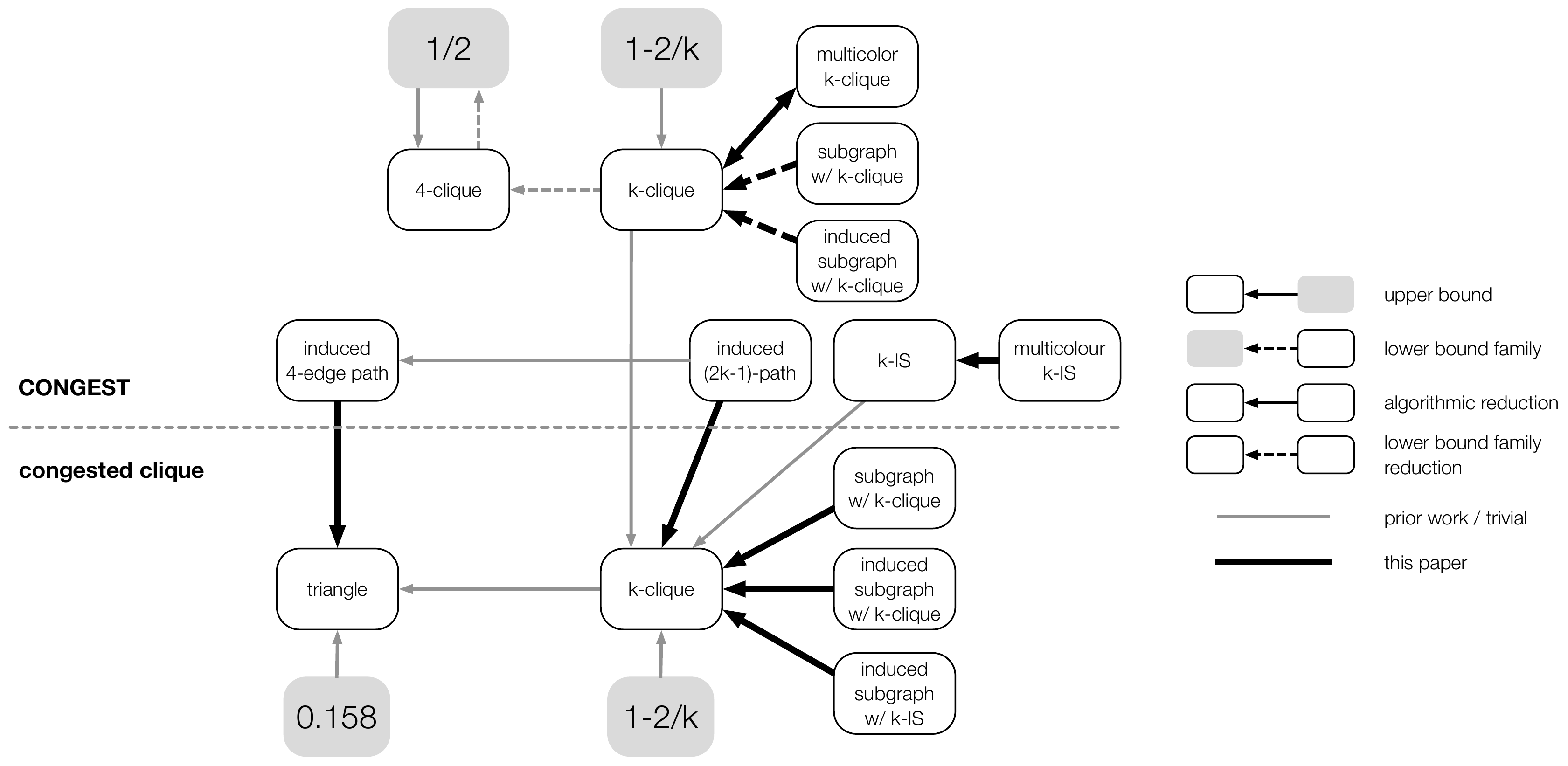}
\end{center}
\caption{Relationships between problems in \congest and congested clique. Results hold for any sufficiently large constant $k$. \emph{Upper bound} indicates an $\tilde{O}(n^\delta)$ round algorithm for the problem for specified $\delta$; \emph{lower bound family} indicates that there is a lower bound family giving $\tilde{\Omega}(n^\delta)$ lower bound for the problem for specified $\delta$; \emph{algorithmic reduction} from $P_1$ to $P_2$ indicates that an algorithm solving $P_2$ in $O(n^\delta)$ rounds implies the existence of an algorithm solving $P_1$ in $\tilde{O}(n^\delta)$ rounds, for any $\delta > 0$, and \emph{lower bound reduction} from $P_1$ to $P_2$ indicates that a lower bound family giving $\Omega(n^{\delta})$ lower bound for $P_1$ implies the existence of a lower bound family giving $\tilde{\Omega}(n^\delta)$ lower bound for $P_2$ for any $\delta > 0$. Notation $\tilde{O}$ and $\tilde{\Omega}$ hides polylogarithmic factors in $n$, as well as factors only depending on $k$, as we assume $k$ to be constant.}\label{fig:problem-relations}
\end{figure}

\paragraph{Multicolored problems.}

Finally, we consider \emph{multicolored} versions of subgraph detection tasks. In multicolored (induced) $H$-detection, we are given a labelling of the input graph $G$ with $k$ colors, and the task is to find a (induced) copy of $H$ that contains exactly one node of each color. Multicolored versions of problems have proven to be useful starting points for reductions in fixed-parameter complexity, and algorithms for a multicolored version of a problem can often be turned into an algorithm for the standard version via color-coding~\cite{alon1995color}.

We observe that multicolored versions of $k$-clique and $k$-independent set are closely related to their standard versions in the distributed setting, by adapting the simple centralized reductions to the distributed setting (see Figure~\ref{fig:problem-relations}). We then prove unconditional lower bounds of $\Omega(n / \log n)$ in \congest for multicolored versions of $k$-cycle detection, for $k \ge 4$, and for detection of induced paths of length $k$, for $k \ge 6$. These results imply that color-coding algorithms cannot be used directly to improve the state of the art for these problems --  for comparison, note that $k$-cycle detection can be solved in \congest in $o(n/\log n)$ rounds for even $k$, non-induced multicolored paths can be detected in $O(1)$ round in \congest, and we have no unconditional lower bounds for induced path detection. 

\subsection{Results: induced subgraph detection with structural parameters}

Next, we consider subgraph and induced subgraph detection tasks w.r.t.~structural graph parameters. We focus on the degeneracy $\degen(G)$ and the vertex cover number $\vc(G)$ of the input graph as the parameters in this section. See Table~\ref{tab:results-param-ub} for a summary of the results.

\paragraph{Bounded degeneracy.} We show that induced subgraph detection for any \emph{tree} on $k$ nodes can be solved in time $2^{O(k\degen(G))} k^k + O(\log n)$ rounds in \congest. As with the prior results on non-induced path, tree and cycle detection algorithms in \congest, this upper bound is based on centralized fixed-parameter algorithms, in this case using color-coding and random separation techniques~\cite{alon2007linear,cai2006random}.

On the lower bounds side, we show that there are treewidth $2$ pattern graphs that require near-linear time to detect as induced and non-induced subgraphs in \congest on input graphs of degeneracy $\degen(G) = 2$, via a slight modification of the proof for the general case discussed above. Note that any fixed pattern graph can be detected in $O(n)$ rounds when degeneracy is bounded, by having all nodes gather their distance-$k$ neighborhood.

For cycles, we note that results of Korhonen and Rybicki~\cite{korhonen2017deterministic} can be easily seen to imply that detecting induced $k$-cycles for $k \ge 6$ requires at least $\Omega(n^{1/2}/\log n)$ rounds to detect in \congest on graphs of degeneracy $\degen(G) = 2$, as well as that induced $4$-cycles can be detected in $O(\degen(G) + \log n)$ rounds, and induced $5$-cycles in $O(\degen(G)^2 + \log n)$ rounds.

\paragraph{Bounded vertex cover number.} For a more restrictive parameter than degeneracy, we consider induced subgraph detection parameterized by the vertex cover number $\vc(G)$ of the input graph. More precisely, we show a more general problem of \emph{maximum common induced subgraph} (\emph{MCIS}) can be solved fast; in this problem, we are given two graphs $G = (V_G, E_G)$ and $H = (V_H, E_H)$ as input, and the task is to find the maximum-size graph $G^*$ such that $G^*$ appears as an induced subgraph of both $G$ and $H$. In the distributed setting, we assume that $G$ is the input graph, and the second graph $H$ is known to every node.

In more detail, we show that a centralized branching algorithm from MCIS of Abu-Khzam et al.~\cite{ABUKHZAM201769} can be implemented in $2^{O((\vc(G) + \vc{H})^2)}$ rounds, i.e. without dependence on $n$, in the \congest model. This immediately implies that induced subgraph detection for any pattern graph $H$ on $k$ nodes can also be solved in $2^{O((\vc(G) + k)^2)}$ rounds.

% ------------------------------------------------------
\subsection{Additional related work}

\paragraph{Centralized subgraph and induced subgraph detection.} Subgraph detection has been widely studied in the centralized parameterized setting. Fixed-parameter algorithms, parameterized by the number of nodes $k$ of the pattern graph, are known for example for paths~\cite{monien1985find,alon1995color,williams2009finding,sieves2017}, trees~\cite{alon1995color}, even cycles~\cite{yuster1997finding}, odd cycles~\cite{alon1995color}, and patterns of constant treewidth~\cite{alon1995color}. By contrast, $k$-clique detection is known to be W[1]-hard, suggesting that it does not have a fixed-parameter algorithm~\cite{DOWNEY1995109}.

Induced subgraph detection, on the other hand, is W[1]-hard even for paths of length $k$~\cite{chen2007parameterized}. Any induced or non-induced subgraph on $k$ nodes can be detected in $n^{\omega k/3 + O(1)}$ time, where $\omega < 2.3729$ is the matrix multiplication exponent, due to a classical result of Ne{\v{s}}et{\v{r}}il and Poljak~\cite{nevsetvril1985complexity}.

\paragraph{Distributed subgraph detection.} As mentioned above, distributed subgraph detection has also received attention in the distributed setting recently. In \congest, non-trivial upper bounds are known e.g. for path and tree detection~\cite{korhonen2017deterministic,disc2017property}, cycle detection~\cite{fraigniaud2017spaa,korhonen2017deterministic, fischer2018possibilities} and clique detection~\cite{censor2021tight}. Likewise, lower bounds have been studied for cycle detection~\cite{drucker13,korhonen2017deterministic} and cliques~\cite{czumaj2019detecting}, and pattern graphs requiring near-quadratic time are known to exist~\cite{fischer2018possibilities}. Triangle detection remains a particularly interesting open question––the best known upper bound is $n^{1/3+o(1)}$ rounds~\cite{chang2019improved}, but no lower bounds are known.

In the congested clique, triangles can be detected in $O(n^{0.158})$ rounds and odd $k$-cycles in $2^{O(k)} n^{0.158}$ rounds using fast matrix multiplication~\cite{censor2015algebraic}. Even cycles can be detected even faster, in $O(k^2)$ rounds for $k = O(\log n)$~\cite{partition-trees}. Moreover, any induced or non-induced subgraph detection for $k$-node patterns can be solved in $O(n^{1-2/k})$ rounds in congested clique~\cite{tritri}.

\paragraph{Distributed parameterized complexity.} Parameterized distributed algorithms have appeared implicitly in many of the above-mentioned subgraph detection works, and recently \citet{ben2018parameterized} and \citet{siebertz2019parameterized} have explicitly studied aspects of distributed parameterized complexity. In terms of structural parameters, \emph{maximum degree} is a standard parameter in distributed setting, and algorithms parameterized degeneracy has been studied for various problems and models~\cite{barenboim2018distributed,barenboim2010sublogarithmic,ghaffari2019distributed}. Recently, \citet{li2018distributed} has shown that the treewidth of the input graph can be approximated in $\tilde{O}(D)$ rounds in \congest, and many classical optimization problems that are fixed-parameter tractable w.r.t. treewidth can be solved in $\tilde{O}\bigl(\tw(G)^{O(\tw(G))} D\bigr)$ rounds in \congest, where $\tilde{O}$ hides polylogarithmic factors in $n$.

% ------------------------------------------------------

% ------------------------------------------------------
\section{Preliminaries}\label{sec:preliminaries}
% ------------------------------------------------------

\paragraph{Degeneracy.}  A graph $G$ is called $d$-\textit{degenerate} if every induced subgraph of $G$ has a vertex of degree at most $d$. The minimum number $d$ for which $G$ is $d$-\textit{generate} is called \textit{degeneracy} of $G$, denoted by $\degen(G)$. It is easy to see that every $d$-degenerate graph admits an acyclic orientation such that the out-degree of each vertex is at most $d$.

\paragraph{Vertex cover number.} A \textit{vertex cover} of $G$ is a subset of vertices $S \subseteq V(G)$ such that every edge in $E(G)$ is incident with at least one vertex in $S$. The \textit{vertex cover number} $\vc(G)$ of $G$ is the minimum size of a vertex cover of $G$.

\paragraph{Treewidth.} A \emph{tree decomposition} of a graph $G = (V, E)$ is a pair $(\EuScript{X}, T)$, where $\EuScript{X} = \{ X_1, X_2, \dotsc, X_m \}$ is a collection of subsets of $V$ and $T$ is a tree on $\{ 1, 2, \dotsc, m\}$, such that
\begin{enumerate}
 \item $\bigcup_{i = 1}^{m} X_i = V$,
 \item for all edges $e \in E$ there exist $i$ with $e \subseteq X_i$
 \item for all $i$, $j$ and $k$, if $j$ is on the (unique) path from $i$ to $k$ in $T$, then $X_i\cap X_k\subseteq X_j$.
\end{enumerate}
The \emph{width} of a tree-decomposition $(\EuScript{X}, T)$ is defined as $\max_{i} |X_i| -1$. The \emph{treewidth} of a graph $G$ is the minimum width over all possible tree decompositions of $G$. Connected graphs of treewidth $1$ are trees, and connected graphs of treewidth $2$ are \emph{series-parallel graphs} (see e.g.~\cite{graphclasses}.)

\paragraph{Lower bound families.} For unconditional lower bounds in the \congest model, we use the standard framework of reducing from two-party communication complexity. Let $f \colon \{0,1\}^{2k} \to \{0,1\}$ be a Boolean function. In~the two-party communication game on $f$, there are two players who receive a private $k$-bit string $x_0$ and $x_1$ as input, and the task is to have at least one of the players compute $f(x) = f(x_0, x_1)$.

The template for these reductions is captured by \emph{families of lower bound graphs}:

\begin{definition}[e.g.~\cite{drucker13,frischknecht2012,AbboudCK16}]\label{def:lower-bound-family}
    Let $f_n \colon \{ 0, 1 \}^{2k(n)} \to \{ 0, 1 \}$ and $C \colon \mathbb{N} \to \mathbb{N}$ be functions and $\Pi$ a graph predicate. Suppose there is $n_0$ such that for any $n \ge n_0$ and all $x_0,x_1 \in \{0,1\}^{k(n)}$ there exists a (weighted) graph $G(n,x_0,x_1)$ satisfying the following properties:
\begin{enumerate}
    \item $G(n, x_0,x_1)$ satisfies $\Pi$ if and only if $f_n(x_0,x_1)=1$,
    \item $G(n, x_0,x_1) = (V_0 \cup V_1, E_0 \cup E_1 \cup S)$, where
        \begin{enumerate}[label=(\alph*),noitemsep]
            \item $V_0$ and $V_1$ are disjoint and $|V_0 \cup V_1| = n$,
            \item $E_i \subseteq V_i \times V_i$ for $i \in \{0,1\}$,
            \item $S \subseteq V_0 \times V_1$ is a cut and has size at least $C(n)$, and
            \item subgraph $G_i = (V_i, E_i)$ only depends on $i$, $n$ and $x_i$, i.e., $G_i = G_i(n, x_i)$.
        \end{enumerate}
\end{enumerate}
We then say that  $\F = (\G(n))_{n \in I}$ is a \emph{family of lower bound graphs}, where \[\G(n) = \{ G(n,x_0, x_1) \colon x_0, x_1 \in \{0,1\}^{k(n)} \}\,.\]
\end{definition}

\emph{Deterministic communication complexity} $\cc(f)$ of a function $f$ is the maximum number of bits the two players need to exchange in the worst case, over all deterministic protocols and input strings, in order to compute $f(x_0,x_1)$. \emph{Randomized communication complexity} $\rcc(f)$ is the worst-case complexity of protocols which compute $f$ with probability at least $2/3$ on all inputs.

\begin{theorem}[e.g.~\cite{drucker13,frischknecht2012,AbboudCK16}]\label{lower-bounds}
    Let $\F$ be a family of lower bound graphs. Any algorithm deciding $\Pi$ on a graph family $\EuScript{H}$ containing $\bigcup \G(n)$ for all $n \ge n_0$ in the \congest model with bandwidth $b(n)$ needs
    \[
    \Omega\left( \frac{\cc(f_n)}{C(n) b(n)} \right) \quad \textrm{ and } \quad \Omega\left( \frac{\rcc(f_n)}{C(n) b(n)} \right)
    \]
    deterministic and randomized rounds, respectively.
\end{theorem}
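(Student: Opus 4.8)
The plan is to prove the theorem by the standard \emph{simulation argument}: from any \congest algorithm $A$ that decides $\Pi$ in $T$ rounds with bandwidth $b(n)$ on the graphs of $\F$, I would build a two-party communication protocol for $f_n$ whose total cost is $O(T \cdot C(n) \cdot b(n))$ bits, and then read off $T = \Omega(\cc(f_n)/(C(n) b(n)))$ from the definition of deterministic communication complexity (and analogously for the randomized case). The two players will split the vertex set of $G(n,x_0,x_1)$ along the bipartition guaranteed by Definition~\ref{def:lower-bound-family}: player~$0$ takes charge of the nodes in $V_0$ and player~$1$ of the nodes in $V_1$. Since communication-complexity players are computationally unbounded, the unbounded local computation of $A$ is never an issue; only messages crossing the cut cost anything.

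First I would set up the shared information. Both players know $n$, the partition $(V_0,V_1)$, the cut $S$, and the node identifiers and port numberings, none of which depend on the inputs. By the structural property that $G_i = G_i(n,x_i)$ depends only on $i$, $n$ and $x_i$, player~$i$ can reconstruct its entire subgraph $G_i$ from its own input $x_i$ alone; hence it knows every edge incident to its own nodes, the only exception being the behaviour along the crossing edges of $S$. This is exactly the information needed to locally simulate the nodes on its own side of the cut.

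Next comes the round-by-round simulation. Inductively, before round $r$ each player knows the full internal state of every node it owns. Since a node's outgoing messages and next state in round $r$ are determined by its current state and the messages it received in round $r-1$, player~$i$ can compute every message sent \emph{within} $V_i$ on its own; the only messages it cannot produce are those crossing $S$. For each edge $(u,v) \in S$ with $u \in V_0$ and $v \in V_1$, player~$0$ sends to player~$1$ the $b(n)$-bit message that $u$ transmits to $v$ in round $r$, and symmetrically player~$1$ reports the message from $v$ to $u$. As the cut $S$ has at most $C(n)$ edges, each carrying a $b(n)$-bit message in each direction, this costs at most $2C(n)b(n)$ bits per round, so the whole simulation uses $O(T \cdot C(n) \cdot b(n))$ bits.

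Finally I would argue correctness and extract the bound. After $T$ rounds $A$ halts, and by assumption at least one node of $G(n,x_0,x_1)$ outputs the correct decision for $\Pi$; the owner of that node knows this, so one extra bit of communication lets both players learn whether $G$ satisfies $\Pi$, which by the first property of the lower-bound family equals $f_n(x_0,x_1)$. Thus the protocol computes $f_n$, and its cost $O(T \cdot C(n) b(n))$ must be at least $\cc(f_n)$, giving the deterministic bound; running the identical construction on a randomized algorithm yields a protocol computing $f_n$ with the same success probability, whose cost is at least $\rcc(f_n)$, giving the randomized bound. The part to be careful about is the bookkeeping at the cut --- verifying that the two players always jointly hold exactly the messages needed to advance the simulation, which hinges essentially on the property that each side's subgraph depends only on that side's input --- and, in the randomized case, fixing a common source of randomness (public coins, or the standard private-coin reduction) so that the two local simulations remain consistent.
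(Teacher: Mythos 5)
Your proposal is correct and is exactly the standard cut-simulation argument that the cited works use for this theorem; the paper itself states the result without proof, deferring to those references. The only point worth flagging is that Definition~\ref{def:lower-bound-family} as written says the cut $S$ has size \emph{at least} $C(n)$, whereas your argument (correctly) needs an upper bound of $C(n)$ on the cut size to charge $O(C(n)\,b(n))$ bits per simulated round --- this is evidently a typo in the definition, and your reading is the intended one.
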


We reduce from the two-player \emph{set disjointness} function $\mathsf{DISJ}_{n} \colon \{ 0, 1 \}^{2n} \to \{ 0, 1 \}$, defined as $\mathsf{DISJ}_{n}(x_0,x_1) = 0$ if and only there is $i \in [n]$ such that $x_0(i) = x_1(i) = 1$. The communication complexity of set disjointness is $\cc(\mathsf{DISJ}_n) = \Omega(n)$  and $\rcc(\mathsf{DISJ}_n) = \Omega(n)$~\cite{KushilevitzN97,Razborov92}.

% ------------------------------------------------------
\section{Induced subgraph detection on general graphs}
% ------------------------------------------------------

\subsection{Patterns with cliques and independent sets: framework}

For the complexity results on detecting pattern graphs that contain a large independent set or a clique, we borrow the centralized reduction of Dalirrooyfard et al.~\cite{10.1145/3313276.3316329}. We present the reduction here in full, as we will need to analyze its implementation in the distributed setting.

We will start from instance $G$ of $s$-clique detection. The reduction will transform $G$ into an instance of (induced) $H$-detection, where the pattern graph $H$ contains a clique of size $s$, while increasing the number of nodes by a small factor. We first need the following definition:

\begin{definition}[\cite{10.1145/3313276.3316329}]
Let $G = (V,E)$ be a graph. A family $\C \subseteq 2^V$ is an \emph{$s$-clique cover}~if
\begin{enumerate}
    \item for each $s$-clique $K$ in $G$, there is a $C \in \C$ that contains the nodes of $K$, and
    \item the induced subgraph $G[C]$ is $s$-colorable for each $C \in \C$.
\end{enumerate}
We say that $\C$ is a \emph{minimum $s$-clique cover} if all $s$-clique covers of $G$ have at least $|\C|$ sets.
\end{definition}

Note that if $\C$ is a minimum $s$-clique cover, all induced subgraphs $G[C]$ for $C \in \C$ contain an $s$-clique, and thus require exactly $s$ colors to color.

\paragraph{Reduction overview.} Let $G = (V_G, E_G)$ be the original graph and let $H = ( V_H, E_H)$ be the pattern graph. Let $\C = \{ C_1, C_2, \dotsc, C_t \}$ be a minimum $s$-clique cover of $H$. We construct a graph $G^*$ as from the input graph $G$ follows:

\begin{enumerate}

\item The node set $V_{G^*}$ of $G^*$ consists of the following nodes:
\begin{enumerate}
    \item For each $i \in C_1$, there is a copy $V_{G,i} = V_G \times \{ i \}$ of the node set of $G$.
    \item For each $j \in V_H \setminus C_{1}$, there is a copy $j^*$ of the node $j$ in $G^{*}$.
\end{enumerate}
\item The edge set of $G^*$ is defined by the following rules:
\begin{enumerate}
    \item Each $V_{G,i}$ is an independent set.
    \item For each $i,j \in C_1$ and $v, u \in V_G$, we add edge between $(v,i)$ and $(u,j)$ if both $\{ i, j \} \in E_H$ and $\{ v, u \} \in E_G$.
    \item For each $i \in C_1$ and $j \in V_H \setminus C_1$ with $\{ i, j \} \in E_H$, we add edges between $j^*$ and all nodes $(v,i)$ for $v \in V_G$.
    \item For each $i, j \in V_H \setminus C_1$ with $\{ i, j \} \in E_H$, we add edge between $i^*$ and $j^*$.
\end{enumerate}
\end{enumerate}
Note that the graph $G^{*}$ has  $sn + |V_H|$ nodes.

\begin{lemma}[\cite{10.1145/3313276.3316329}]
If $G$ has an $s$-clique, then $G^*$ has $H$ as an induced subgraph, and if $G^*$ has $H$ as a subgraph, then $G$ has an $s$-clique.
\end{lemma}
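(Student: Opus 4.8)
The plan is to prove the two implications separately, treating the forward (``easy'') direction by an explicit construction and the backward direction through a projection argument.

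For the forward direction, suppose $G$ has an $s$-clique on vertices $w_1, \dots, w_s \in V_G$. Since $\C$ is a clique cover, $H[C_1]$ is $s$-colorable; fix a proper coloring $c \colon C_1 \to [s]$. I would define a map $\varphi \colon V_H \to V_{G^*}$ by sending each $i \in C_1$ to the copy node $(w_{c(i)}, i)$ and each $j \in V_H \setminus C_1$ to the pattern node $j^*$. The map is clearly injective, and the claim is that it realizes $H$ as an induced subgraph. This is checked by going through the four edge rules: for $i, i' \in C_1$ rule~(b) puts an edge between $(w_{c(i)}, i)$ and $(w_{c(i')}, i')$ exactly when $\{i,i'\} \in E_H$, because adjacency in $H$ forces $c(i) \neq c(i')$ and hence $\{w_{c(i)}, w_{c(i')}\}$ is an edge of the clique, while non-adjacency kills the edge regardless of the first coordinates; the mixed and the pattern--pattern cases are handled directly by rules~(c) and~(d). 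Since $\{a,b\} \in E_H$ iff $\{\varphi(a), \varphi(b)\} \in E_{G^*}$ in every case, $\varphi$ is an induced embedding.

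For the backward direction, the observation I would start from is that the map $\pi \colon V_{G^*} \to V_H$ sending $(v,i) \mapsto i$ and $j^* \mapsto j$ is a graph homomorphism $G^* \to H$: each of the rules~(a)--(d) only ever connects two nodes whose $\pi$-images are adjacent in $H$. Equivalently, the copy-node part of $G^*$ is exactly the tensor product $G \times H[C_1]$, so that projecting a clique among copy nodes onto its $G$-coordinates is a homomorphism into $G$, and in particular sends an $s$-clique to an $s$-clique (distinctness of the $G$-coordinates is automatic, since each $V_{G,i}$ is independent and a clique meets it in at most one node). Now, given a subgraph copy of $H$ via an embedding $\psi \colon V_H \to V_{G^*}$, let $A \subseteq V_H$ be the set of vertices that $\psi$ maps to copy nodes. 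Composing $\psi|_A$ with the $G$-coordinate projection gives a homomorphism $H[A] \to G$, so it suffices to show that $H[A]$ contains an $s$-clique.

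The main obstacle is exactly this last step: ruling out that the embedding hides part of every $s$-clique of $H$ on the pattern nodes $j^*$. Here I would exploit that there are only $|V_H \setminus C_1|$ pattern nodes and $\psi$ is injective, so the complementary set $B = V_H \setminus A$ satisfies $|B| \le |V_H \setminus C_1|$; moreover $\pi \circ \psi$ restricts to an injection of $B$ into $V_H \setminus C_1$, so the ``pattern part'' of the embedding cannot be more populous than $V_H \setminus C_1$. Combined with the minimum-clique-cover hypothesis---which guarantees that $H[C_1]$ genuinely contains an $s$-clique and is exactly $s$-chromatic---this should force some $s$-clique of $H$ to lie entirely in $A$, and projecting it to $G$ yields the desired $s$-clique. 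I expect verifying this forcing carefully (rather than the two projections, which are routine) to be the delicate part of the argument.
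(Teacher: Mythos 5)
Your forward direction is complete and matches the paper's argument exactly: the embedding $i \mapsto (w_{c(i)},i)$, $j \mapsto j^*$ together with the case analysis over the four edge rules is precisely what the paper does, and your verification that it is an \emph{induced} embedding is correct. Your setup for the backward direction is also the right one -- the projection $\pi$ to $V_H$ is a homomorphism, a clique among copy nodes meets each independent set $V_{G,i}$ at most once, and projecting such a clique onto its $G$-coordinates yields an $s$-clique of $G$ -- so the whole statement indeed reduces to showing that some $s$-clique of the embedded copy lies entirely among the copy nodes.

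However, exactly at the step you flag as delicate there is a genuine gap, and the counting argument you sketch will not close it. Knowing that $|B| \le |V_H \setminus C_1|$, i.e.\ $|A| \ge |C_1|$, tells you nothing about whether $H[A]$ contains an $s$-clique: $A$ is just \emph{some} subset of $V_H$ of that cardinality, and it can perfectly well meet every $s$-clique of $H$ without containing one (cardinality does not control clique content, and $A$ need not contain or even resemble $C_1$). The hypothesis that actually does the work is the \emph{minimality} of the cover $\C = \{C_1,\dotsc,C_t\}$, used structurally rather than numerically. The paper argues by contradiction: suppose every $s$-clique $K$ of the embedded copy $H^*$ contains at least one pattern node $j^*$ with $j \in V_H \setminus C_1$. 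Since $\pi$ is a homomorphism and $K$ meets each $V_{G,i}$ at most once, $\pi(K)$ is an $s$-clique of $H$ containing a vertex outside $C_1$, hence not covered by $C_1$, hence covered by some $C_i$ with $i \ge 2$. Setting $D_i = \pi^{-1}(C_i) \cap V_{H^*}$, the family $\EuScript{D} = \{D_2,\dotsc,D_t\}$ then covers every $s$-clique of $H^*$, and each $H^*[D_i]$ is $s$-colorable by pulling back an $s$-coloring of $H[C_i]$ along $\pi$. Since $H^* \cong H$, this is an $s$-clique cover of $H$ of size $t-1$, contradicting minimality of $\C$. Without this cover-shrinking argument (or an equivalent), your proof of the backward direction is incomplete.
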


\begin{proof}[Proof sketch.]
First assume that $G$ has an $s$-clique $K = \{v_{1}, \dots, v_{s} \}$. Let $\chi \colon C_1 \to [s]$ be a valid $s$-coloring of $H[C_1]$. We define a mapping $h$ from $V_H$ to the nodes of $G^*$ by
\[
h(i) =
\begin{cases}
    (v_{\chi(i)},i) & \text{ if $i \in C_1$, and}\\
    i^*             & \text{ if $i \notin C_1$.}
\end{cases}
\]
Now $G^*[h(V_H)]$ is isomorphic to $H$ by the construction of $G^*$ and the assumption that $K$ is a clique.

For the second direction, assume that $G^*$ subgraph $H^*$ isomorphic to $H$. First, assume that $H^*$ has an $s$-clique $K$ contained in $\bigcup_{i \in C_1} V_{G,i}$. Consider any two nodes $(v,i)$ and $(u,j)$ in $K$; since they are neighbors in $G^*$, we must have by construction of $G^*$ that $v \ne u$ and $i \ne j$, and we have $\{ v, u \} \in E_H$. Thus, the set of nodes in $V_G$ corresponding to $K$ in $G^*$, that is, $\{ v \in V_G \colon (v,i) \in K \}$, is an $s$-clique in $G$.

To finish the proof, we show that $H^*$ must have an $s$-clique $K$ contained in $\bigcup_{i \in C_1} V_{G,i}$. The proof is by contradiction; if this is not the case, we can construct an $s$-clique cover of $H$ that has less that $t = |\C|$ sets, which contradicts the minimality of $\C$. Thus, assume any $s$-clique in $H^*$ contains at least one node $j^*$ for some $j \in V_H$.

Now define a function $g \colon V_{G^*} \to V_H$ as
\[
h(w) =
\begin{cases}
    i   & \text{ if $w \in V_{G_i}$ for some $i \in C_1$, and}\\
    j^* & \text{ if $w = j^*$ for some $j \in V_H\setminus C_1$,}
\end{cases}
\]
that is, all nodes of $G^*$ are mapped to the `corresponding' nodes of $H$ as per construction of $G^*$; note that $h$ does not necessarily agree with the isomorphism between $H^*$ and $H$. Denoting by $V_{H^*}$ the node set of $H^*$, let $D_i = h^{-1}(C_i) \cap V_{H^*}$. We now show that the family $\D = \{ D_2, D_3, \dotsc, D_t \}$ is an $s$-clique cover of $H^*$ and $H^*$ is isomorphic to $H$, this implies that $H$ has an $s$-clique cover of size $t-1$. We observe that $\D$ satisfies the definition of $s$-clique cover:
\begin{itemize}
    \item \emph{Each $s$-clique of $H^*$ is contained in some $D_i \in \D$}: Consider an $s$-clique $K = \{ w_1, w_2, \dotsc, w_s \}$  in $H^*$. Let $L = h(K) \subseteq V_H$ be the image of $K$ under $h$. We have that $L$ is an $s$-clique in $H$: since $K$ is a clique in $G^*$, it can contain only one node from each independent set $V_{G,i}$, so all nodes in $K$ map to different nodes in $H$, and furthermore all nodes in $L$ are adjacent by the construction of $G^*$. 
    
    Since by assumption, $K$ contains at least one node $j^*$ for some $j \in V_H \setminus C_1$, we have that $f(j^*) = j \in L$ is not in $C_1$. Thus, the clique $L$ in $H$ is not covered by $C_1$, and there is $C_i \in \C \setminus \{ C_1 \}$ that covers $L$. It follows that the clique $K$ in $H^*$ is covered by $D_i$.
    \item \emph{Induced subgraph $H^*[D_i]$ is $s$-colorable for every $D_i \in \D$}: By assumption, there is an $s$-coloring $\chi$ of $H[C_i]$. We obtain a proper $s$-coloring of $H^*[D_i]$ by defining the coloring as $\phi(w) = \chi(h(v))$.
\end{itemize}
\end{proof}

\subsection{Patterns with cliques and independent sets: implications}\label{sec:induced-path-hardness}

\paragraph{Implementing the reduction in the congested clique.} Let $H$ be a pattern graph on $k$ nodes containing an $s$-clique. We now show that the reduction we gave above can be implemented efficiently in the congested clique model.

Assume we have algorithm $\A$ for (induced) $H$-detection running in $O(n^\delta)$ rounds in the congested clique. We now show that we can implement the above reduction in the congested clique to obtain an algorithm for detecting an $s$-clique, as follows:
\begin{enumerate}
    \item Each node $v \in V_G$ simulates nodes $(v, i)$ for $i \in C_1$, as well as one node from $V_H$.
    \item Since the incident edges of $(v,i)$ for $i \in C_1$ and nodes in $V_H \setminus C_1$ in $G^*$ only depend on the pattern graph $H$ and on the edges incident to $v$ in $G$, node $v$ can construct the inputs of its simulated nodes locally.
    \item Nodes then simulate the execution of $\A$ on a congested clique with $O(sn + k) = O(kn)$ nodes. The running time of $\A$ on the simulated instance is $O\bigl((kn)^\delta\bigr)$, and the simulation incurs additional overhead of $O(k^2)$, for a total running time of $O(k^{2\delta} n^\delta)$.
\end{enumerate}

Thus, we obtain the following:

\begin{theorem}\label{thm:path-lower-bound}
Let $H$ be a pattern graph with $k$ nodes that has a clique of size $s$. Then if we can solve $H$-detection or induced $H$-detection in the congested clique model in $O(n^\delta)$ rounds, we can find an $s$-clique in the congested clique in $O(k^{2\delta} n^\delta)$ rounds.
\end{theorem}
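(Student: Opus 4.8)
The plan is to show that the centralized reduction constructed above can be carried out in the congested clique with only constant-factor round overhead, so that the assumed detector $\A$ for (induced) $H$-detection becomes an $s$-clique detector. Suppose the input is an $n$-node graph $G$ on which we wish to find an $s$-clique. Since the pattern $H$ is known to every node and has constant size $k$, every node can first compute the same minimum $s$-clique cover $\C = \{ C_1, \dots, C_t \}$ of $H$ locally, with no communication. The virtual graph $G^*$ produced by the reduction then has node set $V_{G^*}$ of size $|C_1| n + (k - |C_1|) = O(kn)$, and the idea is to simulate it on our $n$-node clique.

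For the assignment of virtual nodes to real nodes, each $v \in V_G$ hosts the $|C_1| \le k$ nodes $(v,i)$, $i \in C_1$, and we distribute the $k - |C_1| < k$ remaining nodes $j^*$ (for $j \in V_H \setminus C_1$) to distinct real nodes, which is possible once $n \ge k$. The point that makes the reduction communication-free is that the edge rules defining $G^*$ factor through $H$ and the local neighborhoods of $G$: each $V_{G,i}$ is independent; an edge joins $(v,i)$ and $(u,j)$ exactly when $\{ i,j \} \in E_H$ and $\{ v,u \} \in E_G$, both of which $v$ already knows; and every edge incident to a node $j^*$ is determined solely by $E_H$. Hence each real node can compute, with no messages, the full adjacency of every virtual node it hosts.

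It then remains to simulate $\A$ on the congested clique over $V_{G^*}$. Each real node carries $O(k)$ virtual nodes, so one round of $\A$ requires each ordered pair of real nodes to relay the $O(k^2)$ messages exchanged between their respective virtual nodes; since each such message has $O(\log(kn)) = O(\log n)$ bits for constant $k$, this is done in $O(k^2)$ rounds of the real clique. With $\A$ running in $O((kn)^\delta)$ rounds on $G^*$, the total is the claimed $O(k^{2\delta} n^\delta)$ (the precise polynomial factor in the constant $k$ being immaterial). Correctness follows from the preceding lemma: $G$ has an $s$-clique iff $G^*$ contains $H$ as an induced subgraph, and equally iff it contains $H$ as a subgraph, so the output of either flavour of $\A$, aggregated by a single global-OR round, decides whether $G$ has an $s$-clique.

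I would expect no deep obstacle here; the content lies entirely in two bookkeeping checks, namely that the edges of $G^*$ depend only on $H$ and on each node's own incident edges in $G$ (so that no communication is needed to build $G^*$), and that hosting $\Theta(k)$ virtual nodes per real node inflates each round of $\A$ by only an $O(k^2)$ factor. Both are direct consequences of the explicit construction of $G^*$ and of the standard simulation of a larger congested clique by a smaller one, and both are asymptotically free once $k$ is treated as a constant.
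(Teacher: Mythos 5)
Your proposal is correct and follows essentially the same route as the paper: each real node locally builds the adjacency of the $O(k)$ virtual nodes it hosts (no communication needed, since the edges of $G^*$ depend only on $H$ and on that node's incident edges in $G$), then simulates $\A$ on the $O(kn)$-node virtual clique with an $O(k^2)$ per-round overhead, and correctness comes from the sandwich in the reduction lemma. Your observation that the stated $k^{2\delta}$ factor is a harmless imprecision in the constant-$k$ bookkeeping is also accurate.
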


As an immediate corollary, we obtain a similar hardness result for induced subgraph detection for pattern graphs with large independent set, by observing that we can simply complement the pattern and input graphs. Note that this version only applies for \emph{induced} subgraph detection.

\begin{corollary}\label{cor:is-lower-bound}
Let $H$ be a pattern graph with $k$ nodes that has an independent set of size $s$. Then if we can solve induced $H$-detection in the congested clique model in $O(n^\delta)$ rounds, we can find an $s$-clique in the congested clique in $O(k^{2\delta} n^\delta)$ rounds.
\end{corollary}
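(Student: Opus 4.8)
The plan is to reduce induced $\overline{H}$-detection to induced $H$-detection by complementing the input graph, and then invoke Theorem~\ref{thm:path-lower-bound} applied to the complement pattern $\overline{H}$. First I would record the self-duality of the induced-subgraph relation under complementation: for any graphs $G$ and $H$, the graph $G$ contains $H$ as an induced subgraph if and only if the complement $\overline{G}$ contains $\overline{H}$ as an induced subgraph, witnessed by the identity map on any embedded vertex set (a pair of vertices is adjacent in $G$ exactly when it is non-adjacent in $\overline{G}$, and likewise for $H$ and $\overline{H}$). In particular, since $H$ has an independent set of size $s$, its complement $\overline{H}$ has a clique of size $s$, and $\overline{H}$ still has $k$ nodes.

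Next I would use this to turn an $O(n^\delta)$-round algorithm $\A$ for induced $H$-detection into an $O(n^\delta)$-round algorithm $\A'$ for induced $\overline{H}$-detection. Given an input graph $G'$ in the congested clique, each node locally complements its adjacency information---that is, it reinterprets its set of neighbors as its set of non-neighbors---to obtain $\overline{G'}$. Since the communication network of the congested clique is the complete graph regardless of the input graph, this relabeling costs no rounds and changes neither the number of nodes nor the available bandwidth. Running $\A$ on $\overline{G'}$ then detects an induced copy of $H$ in $\overline{G'}$, which by the self-duality above is exactly an induced copy of $\overline{H}$ in $G'$; the nodes reporting membership carry over directly. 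Hence $\A'$ runs in $O(n^\delta)$ rounds.

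Finally, I would apply Theorem~\ref{thm:path-lower-bound} to the pattern $\overline{H}$, which has a clique of size $s$: composing the construction of that theorem with the algorithm $\A'$ for induced $\overline{H}$-detection yields an algorithm for $s$-clique detection running in $O(k^{2\delta} n^\delta)$ rounds, as claimed.

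The one step that genuinely needs care---and the reason the statement is restricted to \emph{induced} detection---is the self-duality observation. For non-induced subgraph detection the trick breaks down: a (non-induced) copy of $H$ in $G$ imposes no constraint on the non-edges, so it has no corresponding structure in $\overline{G}$, and complementing the input no longer preserves the detection problem. Everything else is routine, since complementing the input is a cost-free local operation in the congested clique and the factor $k^{2\delta}$ is unchanged because $\overline{H}$ has the same number of nodes $k$ as $H$.
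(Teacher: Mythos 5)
Your proposal is correct and matches the paper's argument exactly: the paper also complements the pattern and the input graph (a cost-free local operation in the congested clique) and invokes Theorem~\ref{thm:path-lower-bound} on $\overline{H}$, which contains an $s$-clique. Your write-up simply fills in the details, including the correct observation of why the argument is restricted to induced detection.
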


\paragraph{Induced path detection.} Corollary~\ref{cor:is-lower-bound} immediately implies a conditional lower bound for induced path detection in the \congest model, as paths contain large independent sets:

\begin{corollary}
Let $k$ be fixed. If an induced $2k$-edge path or an induced $(2k+1)$-edge path can be detected in $O(n^\delta)$ rounds in the \congest model, then a $k$-clique can be detected in $O(n^\delta)$ rounds in the congested clique model. In particular, if an induced $4$-edge path can be detected in $O(n^\delta)$ rounds in the \congest model, then triangles can be detected $O(n^\delta)$ rounds in the congested clique model.
\end{corollary}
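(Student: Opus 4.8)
The plan is to derive this statement directly from Corollary~\ref{cor:is-lower-bound}, whose sole purpose is to convert an induced $H$-detection algorithm for a pattern $H$ with a large independent set into a clique-detection algorithm, both in the congested clique. The one conceptual point to handle is that our hypothesis is phrased for the \emph{weaker} \congest model, whereas Corollary~\ref{cor:is-lower-bound} wants its induced-detection hypothesis in the congested clique. This gap is free to bridge: any $T$-round \congest algorithm is also a $T$-round congested clique algorithm, since the congested clique provides every communication link that \congest does (and more) at the same bandwidth, and a node can simply ignore the extra links. Hence an $O(n^\delta)$-round \congest algorithm for induced path detection immediately yields an $O(n^\delta)$-round congested clique algorithm for the same problem, which is exactly the kind of algorithm Corollary~\ref{cor:is-lower-bound} consumes.

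Next I would do the elementary graph bookkeeping. A path with $\ell$ edges has $\ell+1$ vertices and independence number $\ceil{(\ell+1)/2}$, realized by taking alternate vertices from an endpoint. For $\ell = 2k$ this gives $k+1$, and for $\ell = 2k+1$ it again gives $k+1$; in either case the path $H$ contains an independent set of size $k$. Since $k$ is fixed, $H$ has $O(k) = O(1)$ nodes. Feeding the congested clique induced $H$-detection algorithm from the previous paragraph into Corollary~\ref{cor:is-lower-bound} with $s = k$ then produces a $k$-clique detection algorithm in the congested clique running in $O(k^{2\delta} n^\delta)$ rounds; because $k$ is constant, $k^{2\delta} = O(1)$ and this is $O(n^\delta)$, giving the general claim. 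For the specific statement, the $4$-edge path $P_5$ has five vertices and independence number $3$, so the identical argument with $s = 3$ turns an $O(n^\delta)$-round \congest algorithm for induced $P_5$-detection into an $O(n^\delta)$-round congested clique algorithm for $3$-clique (triangle) detection.

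I do not expect a genuine obstacle here, since no nontrivial simulation or reduction must be carried out beyond invoking Corollary~\ref{cor:is-lower-bound}. The one place to be careful is the \emph{direction} of the model inclusion: it is essential that \congest is the weaker model, so that its algorithms transfer upward to the congested clique for free, rather than the reverse. The only other thing I would double-check is the independence-number accounting, so that the clique size appearing in the conclusion correctly matches the stated path length; this is the single spot where an off-by-one could slip in.
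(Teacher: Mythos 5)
Your proposal is correct and follows exactly the route the paper intends: the paper offers no separate proof, stating only that the corollary follows immediately from Corollary~\ref{cor:is-lower-bound} because paths contain large independent sets. Your two additional observations -- that a \congest algorithm can be run verbatim in the congested clique, and the independence-number count showing a $2k$-edge or $(2k+1)$-edge path contains an independent set of size $k$ (and $P_5$ one of size $3$) -- are precisely the implicit steps, and both are handled correctly.
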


% ------------------------------------------------------
\paragraph{Patterns with cliques in \congest}\label{sec:lb-pattern-with-clique}

As a further application of the reduction of~\citet{10.1145/3313276.3316329}, we can transform the unconditional lower bound of \citet{czumaj2019detecting} for $4$-clique detection in \congest into a lower bound for induced subgraph detection for any pattern containing a $4$-clique.

\begin{lemma}[\cite{czumaj2019detecting}]\label{lemma:clique-lb-family}
Let $\Pi$ the graph predicate for existence of a $4$-clique. There exists a family of lower bound graphs for $\Pi$ with $f_n = \disj_{\Theta(n^2)}$ and $C(n) = \Theta(n^{3/2})$.
\end{lemma}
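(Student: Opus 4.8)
The plan is to exhibit a family of lower bound graphs in the sense of Definition~\ref{def:lower-bound-family} for the predicate $\Pi = $ ``contains a $4$-clique'', taking $f_n = \disj_{\Theta(n^2)}$ and cut size $C(n) = \Theta(n^{3/2})$; the stated bound then follows by combining this family with Theorem~\ref{lower-bounds} and the bounds $\cc(\disj_N), \rcc(\disj_N) = \Omega(N)$ for $N = \Theta(n^2)$. (This is a reconstruction of the construction of Czumaj and Konrad~\cite{czumaj2019detecting}.) The template I would follow is to reserve the four ``roles'' of the sought $4$-clique and split them across the two sides of the cut so that \emph{both} players can inject their input: place two role-classes of vertices, say $W_1, W_2$, on Alice's side $V_0$ and the other two, $W_3, W_4$, on Bob's side $V_1$. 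Each side then has an internal bipartite slot ($W_1$ vs.\ $W_2$ for Alice, $W_3$ vs.\ $W_4$ for Bob) of $\Theta(n^2)$ potential edges, which is exactly what is needed to host a $\disj$ instance of $\Theta(n^2)$ bits.

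Concretely, I would identify the disjointness universe with a common index set $U$, $|U| = \Theta(n^2)$, and let index $z \in U$ correspond to one designated internal Alice-edge $e_0(z)$ (a $W_1$--$W_2$ pair) and one designated internal Bob-edge $e_1(z)$ (a $W_3$--$W_4$ pair). Alice includes $e_0(z)$ in $E_0$ iff $x_0(z) = 1$, and Bob includes $e_1(z)$ in $E_1$ iff $x_1(z) = 1$; by construction $G_0 = (V_0, E_0)$ depends only on $(n, x_0)$ and $G_1 = (V_1, E_1)$ only on $(n, x_1)$, satisfying condition (d). The input-independent cut $S$ would be engineered so that the four cross pairs needed to close the two slots into a genuine $K_4$ are present \emph{precisely} for matched indices; that way $G(n,x_0,x_1)$ contains a $4$-clique iff there is a $z$ with $e_0(z), e_1(z)$ both present, i.e.\ iff $x_0$ and $x_1$ intersect, giving property~1. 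Properties 2(a)--(b) are immediate from the partition, and the remaining task is to certify $|S| = \Theta(n^{3/2})$ for property 2(c).

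I expect the cut-size bound to be the main obstacle, and it is where all the cleverness sits. The naive realization of the above template fails: with a genuinely two-dimensional index $z = (a,b)$, two of the four cross pairs closing a $K_4$ (the ``matching'' pairs aligned with a single coordinate) cost only $\Theta(n)$ cut edges each, but the two ``crossing'' pairs depend on \emph{both} coordinates and force $\Theta(n^2)$ cut edges — which collapses the lower bound to triviality. The heart of the proof is therefore a gadget that encodes $\Theta(n^2)$ disjointness bits internally yet keeps the cut \emph{sub-quadratic}, at $\Theta(n^{3/2})$, without destroying the clique $\Leftrightarrow$ intersection equivalence; the natural way to reach $n^{3/2}$ is to route the cross connections through a sparse, structured bipartite layer (e.g.\ $n$ vertices each carrying $\Theta(\sqrt n)$ cut edges) so that only matched indices complete a $4$-clique. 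Once the construction is calibrated so that $|S| = \Theta(n^{3/2})$ while still yielding a bijection between intersecting indices and $4$-cliques, the family satisfies all conditions of Definition~\ref{def:lower-bound-family}, and feeding $\cc(\disj_{\Theta(n^2)}) = \Omega(n^2)$, $C(n) = \Theta(n^{3/2})$, and bandwidth $b(n) = O(\log n)$ into Theorem~\ref{lower-bounds} recovers the $\Omega(n^{1/2}/\log n)$ lower bound for $4$-clique detection that this lemma underpins.
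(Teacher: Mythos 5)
The paper does not prove this lemma at all: it is imported as a black box from Czumaj and Konrad~\cite{czumaj2019detecting}, and the only thing the paper uses is the interface data ($f_n = \disj_{\Theta(n^2)}$, $C(n) = \Theta(n^{3/2})$) that is then fed into the reduction of Lemma~\ref{lemma:lb-with-4clique}. So the relevant question is whether your reconstruction stands on its own, and it does not. You set up the framework correctly (four role classes split two-and-two across the cut, $\Theta(n^2)$ input-dependent edge slots per side, Theorem~\ref{lower-bounds} combined with $\cc(\disj_N), \rcc(\disj_N) = \Omega(N)$ at the end), and you correctly diagnose that the naive realization forces a $\Theta(n^2)$ cut. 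But the step that resolves this tension --- an explicit, input-independent cross-edge set $S$ of size $\Theta(n^{3/2})$ together with a proof that a $4$-clique exists \emph{iff} the inputs intersect --- is precisely the content of the lemma, and your proposal replaces it with ``route the cross connections through a sparse, structured bipartite layer'' and ``once the construction is calibrated.'' That is a restatement of the goal, not a construction.

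Two concrete things are missing. First, completeness: you never specify which cross pairs lie in $S$, so there is no argument that a matched index $z$ actually closes into a $K_4$. The sparse layer you gesture at must guarantee that for \emph{every} one of the $\Theta(n^2)$ indices the four required cross edges are simultaneously present, and it is not clear that a generic arrangement of ``$n$ vertices each carrying $\Theta(\sqrt n)$ cut edges'' does this. Second, soundness: once $S$ contains $\Theta(n^{3/2})$ fixed edges, you must rule out spurious $4$-cliques formed by cross edges alone, or by combining internal edges belonging to two \emph{different} indices; this requires structural control over $S$ and is where the actual construction in~\cite{czumaj2019detecting} does its work. As written, your proposal establishes conditions 2(a), 2(b), 2(d) of Definition~\ref{def:lower-bound-family} but neither condition 1 nor the size bound in 2(c), so the lemma is not proved. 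If you intend to treat the statement as a citation (as the paper does), say so explicitly rather than sketching a construction that stops at its hardest point.
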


\begin{lemma}\label{lemma:lb-with-4clique}
Let $H$ be a pattern graph on $k$ nodes that contains a $4$-clique, and let $\Pi$ the graph predicate for existence of either induced or non-induced copy of $H$. Then there exists a family of lower bound graphs for $\Pi$ with $f_n = \disj_{\Theta(n^2)}$ and $C(n) = \Theta(n^{3/2})$.
\end{lemma}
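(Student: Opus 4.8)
The plan is to apply the vertex-blowup reduction $G \mapsto G^*$ of Dalirrooyfard et al.~\cite{10.1145/3313276.3316329} (stated above) to the $4$-clique lower bound family of Czumaj and Konrad~\cite{czumaj2019detecting} from Lemma~\ref{lemma:clique-lb-family}, following the observation of Bacrach et al.~\cite{bacrach2019hardness} that a sufficiently \emph{local} centralized reduction turns a family of lower bound graphs for one predicate into a family for another. Concretely, I would set $s = 4$, fix a minimum $4$-clique cover $\C$ of $H$ and a set $C_1 \in \C$ (so that $H[C_1]$ contains a $4$-clique, since $H$ does and $\C$ is minimum), and apply the construction $G \mapsto G^*$ to every graph $G(n,x_0,x_1)$ of the family $\F$ given by Lemma~\ref{lemma:clique-lb-family}.

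First I would verify property~(1) of Definition~\ref{def:lower-bound-family}. The reduction lemma gives the implications $G$ has a $4$-clique $\Rightarrow$ $G^*$ contains $H$ as an induced subgraph $\Rightarrow$ $G^*$ contains $H$ as a subgraph $\Rightarrow$ $G$ has a $4$-clique, so all three conditions are equivalent. Hence $G^*(n,x_0,x_1)$ satisfies $\Pi$ --- simultaneously for the induced and the non-induced predicate --- exactly when $G(n,x_0,x_1)$ has a $4$-clique, which by Lemma~\ref{lemma:clique-lb-family} is exactly when $f_n(x_0,x_1)=1$. Thus the new family reduces from the same function $f_n = \disj_{\Theta(n^2)}$.

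Next I would lift the partition. The construction is local: a copy $(v,i)$ is joined to $(u,j)$ only according to whether $v$ and $u$ are adjacent in $G$ (edge rule~(b)), while the $k - |C_1|$ pattern nodes $j^*$ have adjacencies fixed by $H$ alone (edge rules~(c) and~(d)). I would therefore take $V_0^* = \{(v,i): v \in V_0,\ i \in C_1\} \cup \{j^* : j \in V_H \setminus C_1\}$ and $V_1^* = \{(v,i): v \in V_1,\ i \in C_1\}$, giving $|V_0^* \cup V_1^*| = |C_1|\,n + (k-|C_1|) = \Theta(n)$. With this split $E_0^*$ depends only on $E_0$ (hence $x_0$) and the fixed structure of $H$, and $E_1^*$ only on $E_1$ (hence $x_1$), matching property~2(d); placing all $j^*$ on one side is precisely what keeps their input-independent adjacencies inside $E_0^*$ instead of introducing cross-part input dependence.

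The step I expect to be most delicate is bounding the cut $S^*$ without spoiling the partition. Cut edges arise only from edge rule~(b), as edges $(v,i)$--$(u,j)$ with $\{v,u\}$ in the original cut $S$, and from edge rule~(c), joining each $j^*$ to the $V_1$-copies. The former number at most $\binom{k}{2}\,|S| = O(k^2)\cdot\Theta(n^{3/2}) = \Theta(n^{3/2})$, and since $H[C_1]$ contains an edge each original cut edge contributes at least one, so they already total $\Theta(n^{3/2})$; the latter number at most $k\,|C_1|\,n = O(n)$, which is absorbed because $O(n) = o(n^{3/2})$. Hence $|S^*| = \Theta(n^{3/2})$. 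Re-indexing the resulting family by the node count $N = \Theta(n)$ of $G^*$ yields a family of lower bound graphs with $f_N = \disj_{\Theta(N^2)}$ and $C(N) = \Theta(N^{3/2})$, the constant $k$ vanishing into the $\Theta(\cdot)$ since $k$ is fixed. This is the claimed statement.
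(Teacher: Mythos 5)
Your proposal is correct and follows essentially the same route as the paper: apply the Dalirrooyfard et al.\ reduction with $s=4$ to the Czumaj--Konrad lower bound family, lift the partition by splitting each copy $V_{G,i}$ according to the original $V_0/V_1$ split and placing all nodes $j^*$ on the $V_0$ side, and bound the cut by the $O(k^2)$ copies of original cut edges plus the $O(n)$ edges from the $j^*$ nodes. Your extra remarks (the explicit equivalence chain for property (1), and the matching lower bound on the cut size via an edge of $H[C_1]$) are minor refinements of the same argument rather than a different approach.
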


\begin{proof}
Let $\F$ be the family of lower bound graphs for the existence of $4$-cliques given by Lemma~\ref{lemma:clique-lb-family}. We now want to apply the reduction given in Section~\ref{sec:induced-path-hardness} to construct a lower bound family $\F^*$ for $\Pi$. That is, we take $F^*$ to be the family of graphs $G^*(N,x_0,x_1)$, where $G^*(N,x_0,x_1)$ is obtained by applying the reduction to $G(n,x_0,x_1)$ and $N = sn + (k-s)$. 

First, observe that when the reduction with pattern graph $H$ is applied to a graph $G$, the resulting graph $G^*$ has $4n + k$ nodes. Moreover, since $\F$ is a family of lower bound graphs, it follows that $\F^*$ is satisfies condition (1) of Definition~\ref{def:lower-bound-family} with $f_n = \disj_{\Theta(n^2)}$.

Thus, it remains to give a partition of the node set of $G^*$ that satisfies condition (2) of Definition~\ref{def:lower-bound-family}. Denoting by $V_0$ and $V_1$ the partition in the original graph $G = G(n,x_0,x_1)$, we define the partition into $V^*_0$ and $V^*_1$ as follows:
\begin{enumerate}
    \item For each independent set $V_{G,j} = V \times \{ j \}$ in $G^*$ corresponding to a node $j \in C_1$ in $H$, we include the nodes $V_0 \times \{ j \}$ in $V^*_0$, and the nodes $V_1 \times \{ j \}$ in $V^*_1$.
    \item All nodes that are copies $j^*$ of nodes $j \in V_H \setminus C_1$ are included in $V_0$.
\end{enumerate}
Now since all the edges that depend on $x_i$ in $G(n,x_0,x_1)$ are in $E_i \subseteq V_i \times V_i$, all edges in $G^*$ that depend on $x_i$ are between nodes in $V_i \times C_1 \subseteq V_i^*$. Moreover, the edges crossing the cut between $V^*_0$ and $V_1^*$ are (a) the copies of cut edges $S$ in $G$ between two independent sets $V_{G,j}$ and $V_{G,k}$, and (b) then edges between the independent sets and the copies of nodes in $V_H \setminus C_1$. Thus, the total number of cut edges is at most $8C(n) + (k-4)n$, which is $O(n^{3/2})$ for constant $k$.
\end{proof}

Theorem~\ref{lower-bounds} and Lemma~\ref{lemma:lb-with-4clique} now immediately imply the following: 

\begin{theorem}
Let $H$ be a pattern graph that contains a $4$-clique. Any \congest algorithm solving either $H$-detection or induced $H$-detection needs at least $\Omega(n^{1/2}/\log n)$ rounds.
\end{theorem}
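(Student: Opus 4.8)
The plan is to combine the general reduction framework of Theorem~\ref{lower-bounds} with the lower bound family constructed in Lemma~\ref{lemma:lb-with-4clique}, so that the theorem reduces to a short computation with the parameters. First I would recall that Lemma~\ref{lemma:lb-with-4clique} supplies, for any fixed pattern $H$ containing a $4$-clique, a family $\F^*$ of lower bound graphs for the predicate $\Pi$ (existence of an induced or non-induced copy of $H$) with associated function $f_n = \disj_{\Theta(n^2)}$ and cut bound $C(n) = \Theta(n^{3/2})$. Each graph in this family has $4n + k = \Theta(n)$ nodes for constant $k$. Feeding this family into Theorem~\ref{lower-bounds} is then purely mechanical.

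The arithmetic I would carry out is as follows. The communication complexity of set disjointness on $\Theta(n^2)$ coordinates is $\cc(\disj_{\Theta(n^2)}) = \Omega(n^2)$ and $\rcc(\disj_{\Theta(n^2)}) = \Omega(n^2)$, as quoted in the Preliminaries. In the standard \congest model the bandwidth is $b(n) = O(\log n)$. Substituting into the bound of Theorem~\ref{lower-bounds} gives, for both the deterministic and the randomized case,
\[
\Omega\!\left(\frac{\cc(f_n)}{C(n)\,b(n)}\right) = \Omega\!\left(\frac{n^2}{\Theta(n^{3/2}) \cdot O(\log n)}\right) = \Omega\!\left(\frac{n^{1/2}}{\log n}\right).
\]

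The only point requiring care---and the closest thing to an obstacle in an otherwise immediate corollary---is the change of variable between the parameter $n$ indexing the family $\F^*$ and the actual number of nodes in the input graph, in terms of which the bound must be phrased. Since each graph in $\F^*$ has $\Theta(n)$ nodes, every occurrence of $n$ above can be replaced by $\Theta(N)$ where $N$ is the true input size, without affecting the asymptotics; renaming $N$ back to $n$ yields the stated bound. I would also note explicitly that Lemma~\ref{lemma:lb-with-4clique} treats induced and non-induced $H$-detection uniformly through the single predicate $\Pi$, so no separate argument is needed for the two cases, and that the randomized bound follows identically from the $\rcc$ estimate, giving the lower bound against randomized algorithms as well.
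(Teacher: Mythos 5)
Your proposal is correct and follows exactly the route the paper takes: the paper derives this theorem as an immediate consequence of Theorem~\ref{lower-bounds} applied to the lower bound family of Lemma~\ref{lemma:lb-with-4clique}, and your arithmetic $\Omega\bigl(n^2/(n^{3/2}\log n)\bigr)=\Omega(n^{1/2}/\log n)$ together with the observation that the family's graphs have $\Theta(n)$ nodes is precisely the (unwritten) computation behind the paper's ``immediately imply.'' No gaps.
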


% ------------------------------------------------------
\subsection{Induced even cycle detection}\label{sec:induced-even-cycle-lb}

We next prove an unconditional lower bound for induced even cycle detection in \congest. Note that for induced odd cycles, one can easily verify that the construction of \citet{drucker13} immediately implies a $\Omega(n/\log n)$ lower bound.

\begin{lemma}\label{lemma:even-cycle-lb}
Let $k \ge 3$ be fixed, and let $\Pi$ the graph predicate for existence of an induced $2k$-cycle. There exists a family of lower bound graphs for $\Pi$ with $f_n = \disj_{\Theta(n^2)}$ and $C(n) = n$.
\end{lemma}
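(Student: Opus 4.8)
The plan is to exhibit a family of lower bound graphs in the sense of Definition~\ref{def:lower-bound-family} with $f_n = \disj_{\Theta(n^2)}$ and $C(n) = n$, and then invoke Theorem~\ref{lower-bounds}: since $\cc(\disj_{\Theta(n^2)}) = \rcc(\disj_{\Theta(n^2)}) = \Theta(n^2)$ and the bandwidth is $b(n) = O(\log n)$, this immediately gives the claimed $\Omega(n/\log n)$ round bound. The key design decision is to encode the two inputs \emph{internally}: player $i$'s string $x_i$, which has $\Theta(n^2)$ bits, is written into the edge set $E_i$ of $G_i = (V_i, E_i)$ on only $\Theta(n)$ vertices, and the two sides are joined by just $\Theta(n)$ cut edges $S$. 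Note that this forces each $G_i$ to be \emph{dense}, since $\Theta(n)$ vertices must carry $\Theta(n^2)$ freely-settable edges; the whole difficulty of the construction is to make this dense internal structure interact with the thin cut so that an induced $2k$-cycle appears exactly when the two sets intersect. (By the above this corresponds to $\disj = 0$; since complementing the output changes neither $\cc$ nor $\rcc$, reducing from $\disj$ or from its complement gives the same bound, so this matches the stated family.)

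Concretely, I would route every cut crossing through a \emph{bridge} of fixed length $k-1$, and arrange the remaining edges so that the \emph{canonical} cycle produced by a coordinate $(i,j)$ on which both players play $1$ is an induced $2k$-cycle: it leaves $V_0$ along one bridge, traverses one $E_0$-edge selected by $x_0(i,j)$, one $E_1$-edge selected by $x_1(i,j)$, and returns along a second bridge, the two bridges and two selected edges contributing $(k-1)+(k-1)+1+1 = 2k$ to the length. The forward direction then just exhibits this cycle and checks it is chordless. For the converse I would run a length-and-shape analysis: each bridge has length $k-1 \ge 2$, so any cycle crossing the cut four or more times already has length $\ge 4(k-1) > 2k$ (this is exactly where $k \ge 3$ enters), whence an induced $2k$-cycle either stays inside one side or crosses the cut exactly twice; in the latter case the two bridges it uses, together with the chordless requirement, pin the two selected edges to the \emph{same} coordinate $(i,j)$, witnessing an intersection.

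The hard part, and the reason this demands a genuinely new construction rather than a reuse of the odd-cycle lower bound, is excluding induced $2k$-cycles that live entirely inside a single $G_i$. For an \emph{odd} target this is automatic, because a bipartite internal bit-gadget contains only \emph{even} closed walks; for an even target of length $2k$ the internal cycles have precisely the matching parity, so the parity argument is useless. My plan is to exploit the \emph{induced} constraint together with $2k \ge 6$: I would lay the $\Theta(n^2)$ bit-edges inside each side within a dense block scaffold chosen so that every internal cycle of length at least $6$ is forced to carry a chord, and is therefore never induced, regardless of how the bits are set. Establishing that such a scaffold can simultaneously (a) encode $\Theta(n^2)$ arbitrary bits, (b) still admit the canonical induced $2k$-cycle across the cut, and (c) destroy all long internal \emph{induced} cycles, is the main obstacle; once this is in place, the remaining requirements of Definition~\ref{def:lower-bound-family} --- that $|S| = \Theta(n)$, that each $G_i$ depends only on $i$, $n$ and $x_i$, and that the partition is valid --- are routine bookkeeping.
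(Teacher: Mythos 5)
Your proposal correctly identifies the architecture (a $\Theta(n)$-edge cut, $\Theta(n^2)$ bits encoded as internal edges on each side, bridge lengths tuned so that a $2k$-cycle cannot cross the cut four times) and correctly isolates the crux: ruling out induced $2k$-cycles that live entirely inside one player's dense gadget, where the parity argument from the odd-cycle lower bound is unavailable. But you leave exactly that crux unresolved --- you say that exhibiting a scaffold satisfying your conditions (a)--(c) ``is the main obstacle'' and never construct one --- so this is a plan, not a proof. The paper's resolution is concrete and simple: the four blocks $A_1, A_2, B_1, B_2$ are \emph{cliques} of size $N$, the bit strings are encoded as arbitrary bipartite edge sets between $A_1$ and $A_2$ (for $x_0$) and between $B_1$ and $B_2$ (for $x_1$), and the fixed cross-cut structure consists of a length-$3$ path from $a_{1,i}$ to $b_{1,i}$ and a single edge from $a_{2,i}$ to $b_{2,i}$ for each $i$. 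An induced cycle meets each clique in at most two vertices (a third would force a chord), and within a single side the bridge prefixes are pendant paths that cannot close a cycle, so every induced cycle confined to one side has at most four vertices --- length at most $4 < 2k$. The rest is the case analysis over how the cycle meets each clique, which pins the surviving $6$-cycle to a single coordinate $(j,\ell)$ with $x_0$ and $x_1$ both $1$.

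A second, more concrete problem: your symmetric choice of two bridges of length $k-1$ each is incompatible with the natural clique scaffold that resolves the crux. With cliques as blocks, the cycle $a_{1,i} \to \text{bridge} \to b_{1,i} \to b_{1,j} \to \text{bridge} \to a_{1,j} \to a_{1,i}$ uses two same-type bridges and one clique edge on each side, has length $2(k-1) + 2 = 2k$, is chordless, and exists for \emph{every} input --- an input-independent yes-instance. The paper avoids this precisely by making the two bridge types have different lengths ($2k-3$ versus $1$), so that any cycle using two bridges of the same type has length $4k-4$ or $4$, neither of which equals $2k$ for $k \ge 3$. So even the parts of the construction you did commit to would need to change.
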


\paragraph{Construction.} For simplicity, let $k = 3$, and let $N \in \mathbb{N}$ and $x_0, x_1 \in \{ 0, 1 \}^{N^2}$.  For $n = 6N$, we construct a graph $G(n,x_0,x_1)$ as follows:
\begin{enumerate}
    \item The node set of $G(n,x_0,x_1)$ contains four sets of $N$ nodes, denoted by $A_1$, $A_2$, $B_1$, and $B_2$, as well as further $2N$ nodes as described below. For the purposes of the construction, we denote the nodes in these sets as $A_i = \{ a_{i,1}, a_{i,2}, \dotsc, a_{i,N}\}$ and $B_i = \{ b_{i,1}, b_{i,2}, \dotsc, b_{i,N}\}$.
    \item Graph $G(n,x_0,x_1)$ has the following edges for all $x_0$ and $x_1$:
    \begin{enumerate}[label=(\alph*),noitemsep]
        \item Each set $A_1$, $A_2$, $B_1$, and $B_2$ is a clique.
        \item For $i = 1, 2, \dotsc, N$, there is a path of length $3$ between node $a_{1,i}$ and node $b_{1,i}$.
        \item For $i = 1, 2, \dotsc, N$, there is an edge between node $a_{2,i}$ and node $b_{2,i}$.
    \end{enumerate}
    \item Depending on $x_0$ and $x_1$, we add the following edges to the graph. Fix an arbitrary bijection between $[N^2]$ and $[N] \times [N]$. For each $i \in [N^2]$ such that $x_0(i) = 1$, we add the edge $\{ a_{1,j}, a_{2,k} \}$ for the corresponding pair $(j,k)$. Likewise, for each $i \in [N^2]$ such that $x_1(i) = 1$, we add the edge $\{ b_{1,j}, b_{2,k} \}$ for the corresponding pair $(j,k)$.
\end{enumerate}
The partition required by Definition~\ref{def:lower-bound-family} is obtained by setting $V_0 = A_1 \cup A_2$. The size of the cut $S$ is $2N$.

\begin{lemma}
Let $G(n, x_0, x_1)$ be the graph defined above. Then we have that $G(n, x_0, x_1)$ contains an induced $6$-cycle if and only if $x_0$ and $x_1$ are not disjoint.
\end{lemma}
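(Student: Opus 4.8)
The plan is to prove the equivalence by exhibiting the ``intended'' induced $6$-cycle in the forward direction, and by a case analysis that eliminates every other candidate induced $6$-cycle in the reverse direction. Throughout I write $p_{1,i}, p_{2,i}$ for the two internal nodes of the length-$3$ path between $a_{1,i}$ and $b_{1,i}$, so that this path reads $a_{1,i} - p_{1,i} - p_{2,i} - b_{1,i}$; the key local observation is that $p_{1,i}$ and $p_{2,i}$ have degree exactly $2$, since the input-dependent edges only touch $A_1, A_2, B_1, B_2$, and there are no edges between $A_1$ and $B_1$, between $A_1$ and $B_2$, or between $A_2$ and $B_1$. For the forward direction, suppose $x_0$ and $x_1$ are not disjoint and let $i^*$ be a common index, corresponding to the pair $(j,k)$ under the fixed bijection. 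Then $x_0(i^*) = 1$ gives the edge $\{a_{1,j}, a_{2,k}\}$ and $x_1(i^*) = 1$ gives the edge $\{b_{1,j}, b_{2,k}\}$; together with the fixed matching edge $\{a_{2,k}, b_{2,k}\}$ and the fixed path $a_{1,j} - p_{1,j} - p_{2,j} - b_{1,j}$, these yield the closed walk $a_{1,j} - a_{2,k} - b_{2,k} - b_{1,j} - p_{2,j} - p_{1,j} - a_{1,j}$. I would verify it is induced by checking that every non-consecutive pair is a non-edge, which is immediate from the absence of $A_1$--$B_1$, $A_1$--$B_2$, and $A_2$--$B_1$ edges and from the fact that $p_{1,j}, p_{2,j}$ are adjacent only to each other and to $a_{1,j}$, $b_{1,j}$ respectively.

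For the reverse direction I would take any induced $6$-cycle $C$ and use two structural facts: (i) each clique $A_1, A_2, B_1, B_2$ meets $C$ in at most two nodes, which must then be consecutive on $C$, since any two non-consecutive cycle nodes lying in one clique would give a chord; and (ii) by the degree-$2$ observation, as soon as $C$ uses an internal node $p_{1,i}$ or $p_{2,i}$ it must contain the entire subpath $a_{1,i} - p_{1,i} - p_{2,i} - b_{1,i}$. I then split on whether $C$ uses an internal node. If it does not, all six nodes lie in the cliques, and I would consider the cut separating $A_1 \cup A_2$ from $B_1 \cup B_2$: the only edges crossing it are the vertex-disjoint matching edges $\{a_{2,m}, b_{2,m}\}$, so $C$ crosses an even number of times. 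Zero crossings confine $C$ to one side and hence to at most four clique nodes; four or more crossings already require at least eight distinct clique nodes; and exactly two crossings force both the two $A_2$-endpoints and the two $B_2$-endpoints of the crossing edges to be consecutive on $C$ (otherwise their clique edge is a chord), which collapses $C$ into a $4$-cycle. All three subcases contradict $|C| = 6$, so $C$ must use an internal node.

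In the remaining case $C$ contains $a_{1,i} - p_{1,i} - p_{2,i} - b_{1,i}$ for some $i$ (and only one such path, since a second would need four more nodes), so the two leftover nodes form a closing path $b_{1,i} - w_1 - w_2 - a_{1,i}$. Since the only internal neighbors of $a_{1,i}$ and $b_{1,i}$ are the already-used $p_{1,i}$ and $p_{2,i}$, the nodes $w_1, w_2$ are clique nodes with $w_2 \in N(a_{1,i}) \subseteq A_1 \cup \{a_{2,k} : x_0\text{-edge}\}$ and $w_1 \in N(b_{1,i}) \subseteq B_1 \cup \{b_{2,k} : x_1\text{-edge}\}$, and $\{w_1, w_2\}$ must be an edge. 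Using again the absence of $A_1$--$B_1$, $A_1$--$B_2$, and $A_2$--$B_1$ edges, the only surviving option is $w_2 = a_{2,k}$ and $w_1 = b_{2,k}$ joined by a matching edge, which requires $x_0(\mathrm{pair}(i,k)) = 1$ and $x_1(\mathrm{pair}(i,k)) = 1$, i.e.\ a common index. This closes the reverse direction.

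The main obstacle is the no-internal-node subcase of the reverse direction: the argument must simultaneously bound how many nodes are drawn from each clique and track the chords that the clique edges inevitably create, and it is exactly the induced requirement---forbidding those clique chords---that forces every purely-clique candidate down to at most four vertices. Making the crossing-count bookkeeping on the $A_2$--$B_2$ matching cut airtight (in particular, that two crossings genuinely collapse the cycle rather than leaving room for longer chordless arcs) is the step I would write out most carefully.
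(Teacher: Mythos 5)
Your proof is correct, but it takes a genuinely different route from the paper's. The paper classifies, for each of the four cliques $A_1,A_2,B_1,B_2$, the possible ``types'' of intersection of the induced cycle $C$ with that clique (one node with an inward and an outward edge, one node with two inward edges, or two adjacent nodes with various edge combinations), and then runs a five-case analysis showing that only the type ``one node, one inward edge, one outward edge'' is consistent with $C$ being an induced $6$-cycle w.r.t.\ all four sets; this pins $C$ down to the intended form. You instead split on whether $C$ contains an internal node of one of the length-$3$ paths. In the affirmative case the degree-$2$ propagation argument forces the entire segment $a_{1,i}\!-\!p_{1,i}\!-\!p_{2,i}\!-\!b_{1,i}$ into $C$ and the two remaining vertices are forced onto a matching edge $\{a_{2,k},b_{2,k}\}$ by the absence of $A_1$--$B_1$, $A_1$--$B_2$ and $A_2$--$B_1$ edges, which immediately yields a common index. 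In the negative case your parity-and-counting argument on the $A_2$--$B_2$ matching cut (zero crossings give at most four vertices, two crossings force a $4$-cycle inside $C$, four crossings need eight vertices) disposes of every purely-clique candidate. I checked each subcase and the bookkeeping you were worried about is sound: with exactly two crossings the two $A_2$-endpoints and the two $B_2$-endpoints must each be consecutive on $C$ to avoid clique chords, so those four vertices are $2$-regular within $C$ and $C$ collapses to a $4$-cycle. Your decomposition is arguably more elementary and self-contained than the paper's type taxonomy; the paper's classification has the advantage of treating all four cliques uniformly and symmetrically, which makes the extension to longer even cycles (lengthening the $A_1$--$B_1$ paths) slightly more mechanical, though your argument generalizes as well. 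As a minor point in your favour, your explicit forward-direction cycle $a_{1,j}\!-\!a_{2,k}\!-\!b_{2,k}\!-\!b_{1,j}\!-\!p_{2,j}\!-\!p_{1,j}\!-\!a_{1,j}$ corrects a typo in the paper, which mislabels the internal path nodes as lying between $a_{2,k}$ and $b_{2,k}$.
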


\begin{proof}
First, we observe that if there is $i \in [N^2]$ such that $x_0(i) = x_1(i) = 1$, then $G(n, x_0, x_1)$ has an induced $6$-cycle of from
\[ \bigr(a_{1,j}, b_{1,j}, b_{2,k}, p_1, p_2, a_{2,k}, a_{1,j} \bigl)\,,\]
where $(j,k) \in [N] \times [N]$ is the pair corresponding to $i$, and $p_1$ and $p_2$ are the nodes on the path between $a_{2,k}$ and $b_{2,k}$. We now claim that if there is an induced $6$-cycle in $G(n, x_0, x_1)$, it must have this form, implying that $x_0$ and $x_1$ are not disjoint.

Assume that $C$ is an induced $6$-cycle in $G(n, x_0, x_1)$, and let $X$ be one of the sets $A_1$, $A_2$, $B_1$, and $B_2$. Let us call the edges between $A_1$ and $A_2$, and edges between $B_1$ and $B_2$ \emph{inward} edges, and the remaining edges \emph{outward} edges. We will first show that if $C$ contains any nodes from $X$, then $C$ is has one of the following \emph{types} with regard to $X$:
\begin{enumerate}[label=T\arabic*,noitemsep]
    \item There is one node $v$ from $X$ in $C$, and $v$ is incident to one inward edge and one outward edge in $C$.
    \item There is one node $v$ from $X$ in $C$, and $v$ is incident to two inward edges in $C$.
    \item There are two nodes $u$, $v$ from $X$ in $C$, and the nodes $u$ and $v$ are incident in $C$.
    \begin{enumerate}[label*=(\alph*),noitemsep]
        \item Node $u$ is incident to an outward edge and $v$ is incident to an inward edge.
        \item Both $u$ and $v$ are incident to a single outward edge in $C$.
        \item Both $u$ and $v$ are incident to a single inward edge in $C$.
    \end{enumerate}
\end{enumerate}
\begin{center}
\includegraphics[width=0.75\textwidth]{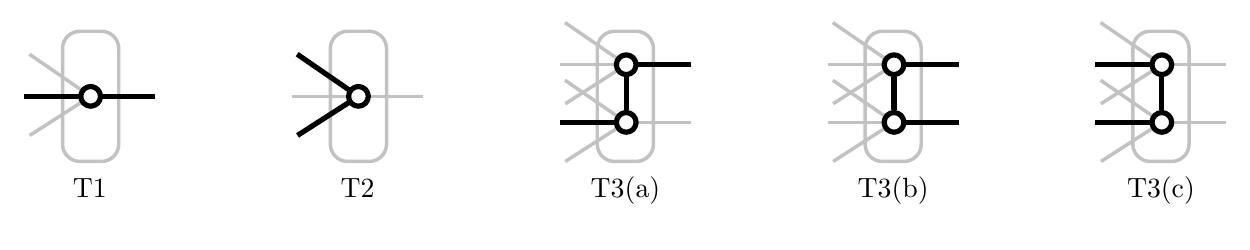}
\end{center}

To see why this is the case, we first observe that $C$ cannot contain three or more nodes from $X$, since it would then contain a chord. Each node in $X$ is incident to exactly one outward edge, so if $C$ contains a single $v$ node from $X$, then it must be of type T1 or T2. The remaining case is that $C$ contains two nodes from $X$; these must be incident on $C$, since otherwise $C$ would have a chord. 

By the construction of $G(n, x_0, x_1)$, the induced cycle $C$ must contain a node from one of $A_1$, $A_2$, $B_1$, and $B_2$; by symmetry, we can assume without loss of generality that $C$ contains a node from either $A_1$ or $A_2$. We show that $C$ has to be type T1 w.r.t. both $A_1$ and $A_2$, as in all other cases we get a contradiction:
\begin{enumerate}[label=\emph{Case \arabic*:}]
    \item Cycle $C$ is type T2 w.r.t.~$A_1$. Then $C$ is type T3(c) w.r.t.~$A_2$, which implies that $C$ is triangle. (\emph{Symmetric case: $C$ is type T2 w.r.t. $A_2$.})
    \item Cycle $C$ is type T3(c) w.r.t.~$A_1$. Then $C$ is either type T2 or T3(c) w.r.t.~$A_2$, which implies that $C$ is a triangle or a $4$-cycle. (\emph{Symmetric case: $C$ is type T3(c) w.r.t.~$A_2$.})
    \item Cycle $C$ is type T3(b) w.r.t.~$A_1$. Then $C$ is type T3(b) w.r.t.~$B_1$, which implies that $C$ is a $4$-cycle.
    \item Cycle $C$ is type T3(b) w.r.t.~$A_2$. Then $C$ is type T3(b) w.r.t.~$B_2$, which implies that $C$ is an $8$-cycle.
    \item Cycle $C$ is type T3(a) w.r.t.~$A_1$. Then $C$ is either type T1 or T3(a) w.r.t.~$A_2$, $B_2$, and $B_1$. This implies that $C$ must have length at least $7$. (\emph{Symmetric case: $C$ is type T3(a) w.r.t.~$A_2$.})
\end{enumerate}
\begin{center}
\includegraphics[width=0.95\textwidth]{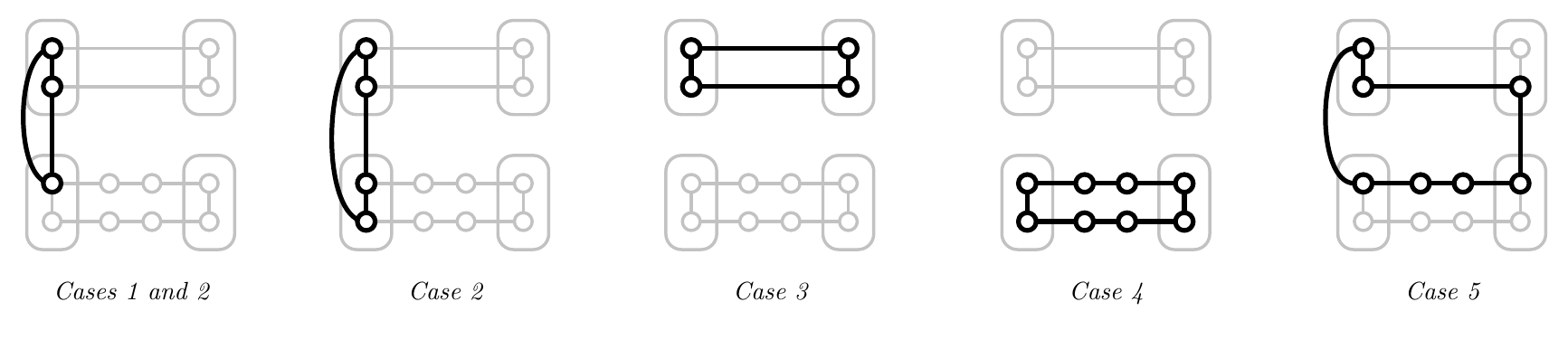}
\end{center}
Thus, the only case that remains is that $C$ is type T1 w.r.t.~$A_1$, $A_2$, $B_2$, and $B_1$. In this case, $C$ contains $a_{1,j}$ and $b_{1,j}$ for some $i \in [N]$, as well as $a_{2,k}$ and $b_{2,k}$ for some $i \in [N]$, implying the edges $\{ a_{1,j}, a_{2,k} \}$ and $\{ b_{1,j}, b_{2,k} \}$ exist in $G(n, x_0, x_1)$.
\end{proof}

Theorem~\ref{lower-bounds} and Lemma~\ref{lemma:even-cycle-lb} now immediately imply the following: 

\begin{theorem}
Any \congest algorithm solving induced $2k$-cycle detection for $k \ge 3$ needs at least $\Omega(n/\log n)$ rounds.
\end{theorem}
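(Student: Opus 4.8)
The plan is to obtain the theorem as an immediate consequence of the generic communication-complexity lower bound of Theorem~\ref{lower-bounds}, instantiated with the family of lower bound graphs constructed in Lemma~\ref{lemma:even-cycle-lb}. Concretely, for each fixed $k \ge 3$ that lemma supplies a family $\F$ of lower bound graphs for the induced $2k$-cycle predicate $\Pi$ with hard communication function $f_n = \disj_{\Theta(n^2)}$ and cut parameter $C(n) = n$, so all that remains is to substitute the relevant quantities into the bound of Theorem~\ref{lower-bounds}.

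First I would recall the communication complexity of set disjointness, $\cc(\disj_m) = \Omega(m)$ and $\rcc(\disj_m) = \Omega(m)$; taking $m = \Theta(n^2)$ gives $\cc(f_n) = \rcc(f_n) = \Omega(n^2)$. In the standard \congest model the bandwidth is $b(n) = O(\log n)$. Plugging these into Theorem~\ref{lower-bounds} shows that any \congest algorithm deciding $\Pi$ needs
\[
\Omega\!\left( \frac{\cc(f_n)}{C(n)\,b(n)} \right) = \Omega\!\left( \frac{n^2}{n \log n} \right) = \Omega\!\left( \frac{n}{\log n} \right)
\]
rounds, and the identical estimate with $\rcc$ in place of $\cc$ yields the same bound for randomized algorithms. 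Since detecting an induced $2k$-cycle is exactly deciding $\Pi$, this is the claimed bound; the whole step is a routine substitution once the lower bound family is in hand.

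The only genuine content has already been discharged inside Lemma~\ref{lemma:even-cycle-lb}, whose combinatorial core is the claim that the gadget graph contains an induced $2k$-cycle precisely when the two disjointness inputs intersect. The theorem statement covers all $k \ge 3$, whereas the construction in the lemma is spelled out only for $k = 3$, so the step I expect to require the most care is checking that the construction generalizes uniformly. For general $k$ one lengthens the fixed connecting gadgets (the path between $A_1$ and $B_1$ and the edge between $A_2$ and $B_2$) so that a coordinate $i$ with $x_0(i) = x_1(i) = 1$ closes up into a cycle of length exactly $2k$, and then re-runs the type/case analysis to confirm that no shorter induced cycle and no induced cycle of a different shape can arise. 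Crucially, this generalization changes neither the cut size (which stays $\Theta(n)$, independent of $k$) nor the size of the underlying disjointness instance (which stays $\Theta(n^2)$), so the arithmetic above is unaffected and the $\Omega(n/\log n)$ bound persists for every fixed $k \ge 3$.
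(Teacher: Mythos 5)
Your proof is correct and follows exactly the paper's route: the theorem is obtained by substituting the lower bound family of Lemma~\ref{lemma:even-cycle-lb} (with $f_n = \disj_{\Theta(n^2)}$, $C(n) = n$, and $b(n) = O(\log n)$) into Theorem~\ref{lower-bounds}. Your added caveat that the lemma's construction is only spelled out for $k=3$ and must be checked to generalize (by lengthening the connecting paths without affecting the cut size or the disjointness instance size) is a fair and accurate observation about what the paper leaves implicit.
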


% ------------------------------------------------------
\subsection{Induced subgraph detection for bounded treewidth patterns}\label{sec:lb-low-treewidth}

Finally, we consider subgraph and induced subgraph detection for pattern graphs of low treewidth. Recall that in centralized setting, a subgraph $H$ with treewidth $t$ can be detected in time $2^{O(k)} n^{t+1}\log n$~\cite{alon1995color}, implying that detecting constant-treewidth subgraphs is fixed-parameter tractable. However, in \congest model, turns out that pattern of treewidth $2$ are already maximally hard.

Our construction for the hard pattern graph uses similar ideas as the hard non-induced subgraph detection instances presented by \citet{fischer2018possibilities}. However, the pattern graphs they use a fairly dense and have treewidth higher than $2$.

\begin{theorem}\label{thm:bounded-treewidth-pattern}
For any $k \ge 2$, there exists a pattern graph $H_k$ of treewidth $2$ such that \congest algorithm solving either $H_k$-detection or induced $H_k$-detection needs at least $\Omega(n^{2-1/k})$ rounds.
\end{theorem}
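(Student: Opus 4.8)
The plan is to exhibit a family of lower bound graphs (Definition~\ref{def:lower-bound-family}) for the predicate $\Pi=$ ``$G$ contains $H_k$ as an (induced) subgraph'' and invoke Theorem~\ref{lower-bounds}. I would reduce from $\disj_m$ with $m=\Theta(n^2)$, so that $\cc(\disj_m),\rcc(\disj_m)=\Theta(n^2)$, while arranging a cut of size $C(n)=\Theta(n^{1/k})$ and bandwidth $b(n)=O(\log n)$; Theorem~\ref{lower-bounds} then gives $\Omega\bigl(n^{2}/(n^{1/k}\log n)\bigr)=\Omega(n^{2-1/k}/\log n)$. The spurious logarithm is harmless: running the whole argument with parameter $k+1$ in place of $k$ yields $\Omega(n^{2-1/(k+1)}/\log n)$, and since $n^{1/(k(k+1))}$ dominates $\log n$ this is $\Omega(n^{2-1/k})$, so I may simply set $H_k$ to be the pattern produced for $k+1$. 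All of the actual difficulty therefore lives in designing a pattern $H_k$ that is simultaneously \emph{series-parallel} (equivalently, connected of $\tw=2$, as noted in Section~\ref{sec:preliminaries}) and rigid enough that its detection encodes $\Theta(n^2)$ disjointness bits through a cut as thin as $n^{1/k}$.

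The mechanism I would use to reconcile ``$\Theta(n^2)$ input bits'' with ``cut $n^{1/k}$'' while staying at treewidth $2$ is a base-$n^{1/k}$ \emph{digit decomposition}, in the spirit of the dense hard patterns of \citet{fischer2018possibilities} but flattened to a series-parallel shape. I would identify the $\Theta(n^2)$ disjointness bits with pairs $(a,b)\in[n]\times[n]$, writing $a$ in base $M=n^{1/k}$ as $k$ digits $a=(a_1,\dots,a_k)$. The pattern $H_k$ is assembled as a series composition of $k$ identical ``selector'' gadgets, each a small series-parallel block of width $O(M)$, capped by two fixed verification gadgets; because each block is built purely from series and parallel compositions and the blocks are themselves composed in series, the whole of $H_k$ is series-parallel and hence $\tw(H_k)=2$. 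On the host side I would place Alice's input on $V_0$ and Bob's on $V_1$, encoding their respective subsets of $[n]\times[n]$ by $\Theta(n^2)$ intra-side edges, and route the gluing part of a witness across the cut through $k$ thin matching-interfaces, one per selector level, each of width $M$. A witness path crosses each interface through exactly one of its $M$ vertices, transmitting one base-$M$ digit of $a$; the $k$ crossings together pin down $a\in[n]$, while the total cut is only $O(kM)=O(n^{1/k})$.

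The correctness lemma to establish is that $G(n,x_0,x_1)$ contains $H_k$, as a subgraph and in fact as an induced subgraph, if and only if $\disj_m(x_0,x_1)=0$, i.e.\ some pair $(a,b)$ is common to both sides. The forward direction is a direct embedding: from a shared pair $(a,b)$ I read off the digits of $a$ to select the interface vertices at the $k$ levels and use $b$ to complete the two verification gadgets, exhibiting an explicit induced copy of $H_k$. The reverse direction is the crux and must exploit the \emph{rigidity} of the series-parallel pattern: I would argue that any embedded copy must traverse the selector levels in order, that the only admissible vertices at level $\ell$ are the $M$ interface vertices encoding a single digit $a_\ell$, and that Alice's and Bob's local edges permit reaching the chosen interface vertex only when their respective digit values agree --- forcing one consistent $a$ on both sides and a locally witnessed common $b$, hence an intersection.

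The step I expect to be the main obstacle is precisely this reverse direction \emph{together with} the treewidth constraint: the selector gadgets must be rigid enough that no cross-level or shortcut embedding of $H_k$ exists in $G$ (which would create a false positive), yet assembled only from series and parallel compositions so that $H_k$ stays at treewidth $2$ rather than inheriting the higher treewidth of the denser \citet{fischer2018possibilities} patterns. Handling the induced case in lockstep adds the requirement of certifying all \emph{non}-edges of $H_k$ against $G$, so the host must introduce no spurious edges among the vertices of a prospective copy. I would arrange this by keeping the interface vertices of different levels mutually non-adjacent except along the intended gadget edges, so that a single structural argument certifies both the induced and the non-induced lower bound at once.
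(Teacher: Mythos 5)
Your high-level plan coincides with the paper's: build a lower bound family for $\disj_{\Theta(n^2)}$ with cut $C(n)=\Theta(n^{1/k})$, using a series-parallel pattern whose $k$ parallel crossing elements squeeze the $\log n$ bits needed to identify an index in $[n]$ through an interface of size $\Theta(n^{1/k})$. Your base-$M$ digit encoding is combinatorially interchangeable with the paper's device of an injection from $[N]$ into $k$-element subsets of an interface of size $K=\Theta(n^{1/k})$ (both say ``$k$ choices from an alphabet of size $n^{1/k}$ name an element of $[n]$''), and your trick for absorbing the $\log n$ factor by passing to $k+1$ is fine.

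The genuine gap is that your construction transmits only the first coordinate $a$ of the pair $(a,b)\in[n]\times[n]$ across the cut. The disjointness bits are indexed by \emph{pairs}, so the witness must pin down both coordinates consistently on the two sides; if only $a$ is synchronized, an instance where Alice holds $(a,b_1)$ and Bob holds $(a,b_2)$ with $b_1\ne b_2$ yields a copy of $H_k$ (each side ``locally witnesses'' its own $b$), i.e.\ a false positive on a disjoint instance, and the reduction is unsound. No rigidity argument about the selector levels can repair this, because the two sides' choices of $b$ are independent degrees of freedom of the embedding that never interact through the cut. The paper avoids this by running the identification mechanism \emph{twice}: the pattern consists of four triangles, with $A_1$--$B_1$ joined by $k$ disjoint length-$3$ paths through one interface of size $K$ (pinning down the first index $j$) and $A_2$--$B_2$ joined by $k$ disjoint paths through a second interface (pinning down $\ell$), while the bit $x_0(j,\ell)$ (resp.\ $x_1(j,\ell)$) is encoded by a purely intra-side edge between triangles $A_{1,j}$ and $A_{2,\ell}$ (resp.\ $B_{1,j}$ and $B_{2,\ell}$); the total cut is still $2K=\Theta(n^{1/k})$. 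You would need a second, symmetric bank of $k$ digit-crossings for $b$. A smaller slip: you describe the pattern's selector gadgets as having ``width $O(M)$'' with $M=n^{1/k}$, which would make $H_k$ depend on $n$; the width-$M$ interfaces must live in the host graph, and $H_k$ itself must contribute only a constant number of crossing edges (the paper's $H_k$ has exactly $2k$ of them, one per disjoint path).
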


Let $k \ge 2$ be fixed. We construct the graph $H_k$ as follows:
\begin{enumerate}
    \item We start with four triangles $A_1$, $A_2$, $B_1$ and $B_2$ with nodes labelled by $1$, $2$ and $3$.
    \item Nodes $1$ of $A_1$ and $A_2$ are connected by an edge, and nodes $1$ of $B_1$ and $B_2$ are connected by an edge.
    \item Nodes $2$ of $A_1$ and $B_1$ are connected with $k$ disjoint paths of length $3$. Likewise, Nodes $2$ of $A_2$ and $B_2$ are connected with $k$ disjoint paths of length $3$.
\end{enumerate}
The graph $H_k$ is a \emph{series-parallel} graph, and thus has treewidth~2~\cite{graphclasses}.

\begin{center}
\includegraphics[width=0.6\textwidth]{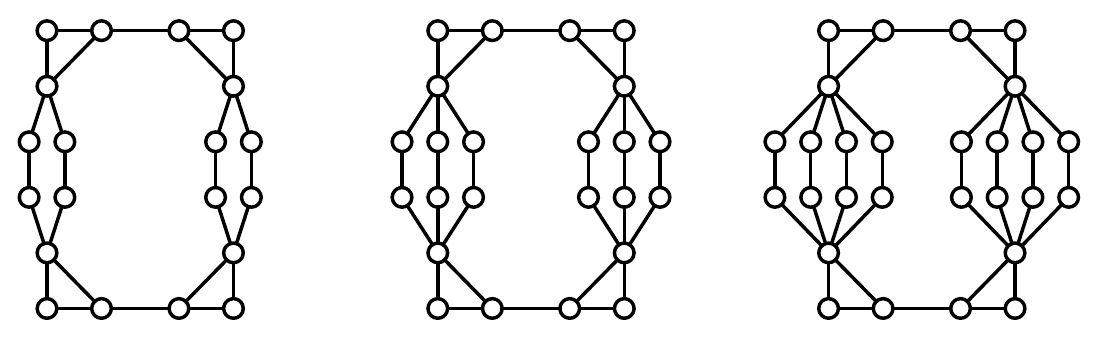}
\end{center}

\begin{lemma}\label{lemma:bounded-treewidth-lb}
Let $k \ge 2$ be fixed. There exists a family of lower bound graphs for $H_k$-detection and induced $H_k$-detection with $f_n = \disj_{\Theta(n^2)}$ and $C(n) = \Theta(n^{1/k})$.
\end{lemma}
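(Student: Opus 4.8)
The plan is to build the lower bound family by blowing up $H_k$ itself, in the spirit of the even-cycle construction of Lemma~\ref{lemma:even-cycle-lb} and the near-quadratic patterns of \citet{fischer2018possibilities}; the one new idea is to use the $k$ parallel paths of each bundle of $H_k$ to carry a $k$-dimensional index, which is what shrinks the cut from $\Theta(n)$ down to $\Theta(n^{1/k})$. I set $m = \Theta(n^{1/k})$ and index the top bundle (joining the triangles $A_1$ and $B_1$) by a vector $\vec p \in [m]^k$ and the bottom bundle (joining $A_2$ and $B_2$) by a vector $\vec q \in [m]^k$. The disjointness universe is the set of pairs $(\vec p, \vec q) \in [m]^k \times [m]^k$, of size $m^{2k} = \Theta(n^2)$, so I reduce from $\disj_{\Theta(n^2)}$.

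For the construction I cut the pattern along its two bundles. Alice's part $V_0$ contains, for every $\vec p \in [m]^k$, a triangle playing the role of $A_1$ and, for every $\vec q$, a triangle playing the role of $A_2$ -- giving $\Theta(m^k) = \Theta(n)$ nodes -- and Bob's part $V_1$ contains the symmetric $B_1$- and $B_2$-triangles. Each path $\ell \in [k]$ of a bundle is replaced by $m$ parallel length-$3$ tracks; track $j$ consists of one node on Alice's side and one on Bob's side joined by a single edge, and the node-$2$ vertex of the $A_1$-triangle for $\vec p$ is made adjacent to the Alice-endpoint of track $p_\ell$ of path $\ell$, for each $\ell$, with the symmetric wiring on Bob's side. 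The only edges crossing the cut are the $2km = \Theta(n^{1/k})$ track edges at the midpoints of the two bundles, and these are fixed, independent of the inputs; this gives the partition and the cut $C(n) = \Theta(n^{1/k})$ required by Definition~\ref{def:lower-bound-family}. The inputs enter solely through the intra-side connector edges corresponding to the $A_1$--$A_2$ and $B_1$--$B_2$ edges of $H_k$: for each pair with $x_0(\vec p, \vec q) = 1$, Alice joins node $1$ of her $\vec p$-triangle in $A_1$ to node $1$ of her $\vec q$-triangle in $A_2$, and Bob does the symmetric thing using $x_1$. Since all input-dependent edges lie strictly inside the respective sides, condition (2) of Definition~\ref{def:lower-bound-family} holds with $G_i = G_i(n, x_i)$.

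For correctness I would show that $G(n, x_0, x_1)$ contains $H_k$ as a subgraph -- equivalently, as an induced subgraph -- iff the two sets intersect. The triangles are the crux of the only-if direction: since no track node lies in a triangle, every bundle-anchoring vertex of a copy of $H_k$ (the node-$2$ vertices) must map to a genuine vector-triangle, and the node-$2$ vertex of an $A_1$-triangle for $\vec p$ is adjacent only to the tracks named by the coordinates $p_1, \dots, p_k$; traversing the $k$ tracks of the top bundle then forces the opposite corner onto the unique $B_1$-triangle adjacent to all of those tracks, namely the one for the same vector $\vec p$. The bottom bundle likewise forces a common $\vec q$, after which the two connector edges require exactly $x_0(\vec p, \vec q) = x_1(\vec p, \vec q) = 1$. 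Conversely, any intersecting pair $(\vec p, \vec q)$ gives a planted copy, which one verifies is chordless and hence induced.

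The delicate point, which I expect to be the main obstacle, is the heavy sharing of track nodes: all vectors agreeing in coordinate $\ell$ funnel through the same track of path $\ell$, so I must rule out ``diagonal'' copies that splice together tracks belonging to different vectors, and argue that the triangle anchors together with the length-$3$ (rather than length-$1$) tracks rigidly enforce coordinate-wise consistency and create no chords or spurious triangles. This is the same rigidity phenomenon underlying the product construction of \citet{fischer2018possibilities}, and carrying it out carefully -- simultaneously for the induced and the non-induced predicate, and double-checking that the object produced really is the treewidth-$2$ graph $H_k$ -- is where the real work lies. Once it is done, the counting ($m = \Theta(n^{1/k})$, universe $\Theta(n^2)$, cut $\Theta(n^{1/k})$) and the invocation of Theorem~\ref{lower-bounds} are routine.
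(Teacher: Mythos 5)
Your construction is essentially the paper's: it likewise blows up $H_k$ into $\Theta(n)$ triangles, routes the $k$ length-$3$ paths through a shared pool of $\Theta(n^{1/k})$ midpoint pairs (indexed there by an injection from $[N]$ into $k$-subsets of $[K]$ rather than your $k$-tuples in $[m]^k$, a cosmetic difference), places the disjointness bits on the $A_1$--$A_2$ and $B_1$--$B_2$ connector edges, and establishes rigidity exactly as you sketch -- the $k$ disjoint length-$3$ paths leaving a triangle's node $2$ pin down its partner triangle uniquely. The proposal is correct and takes the same route.
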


\paragraph{Construction.} Let $N \in \mathbb{N}$ and $x_0, x_1 \in \{ 0, 1 \}^{N^2}$. We construct a graph $G(n,x_0,x_1)$ as follows:
\begin{enumerate}
    \item We start with $4N$ triangles $A_{1,1}, \dotsc, A_{1,N}$, $A_{2,1}, \dotsc, A_{2,N}$, $B_{1,1}, \dotsc, B_{1,N}$ and $B_{2,1}, \dotsc, B_{2,N}$. The nodes of each triangle are labelled with $1$, $2$ and $3$ for the purposes of the construction.
    \item Let $K$ be the smallest integer such that $\binom{n}{k} \ge K$; one can verify that $K \le \lceil k n^{1/k} \rceil$. We add $4$ sets of $K$ nodes, denote by $a_{i,j}$ and $b_{i,j}$ for $i \in \{ 1, 2\}$ and $j \in \{ 1, 2, \dotsc, K \}$.
    \item Graph $G(n,x_0,x_1)$ has the following additional edges for all $x_0$ and $x_1$:
    \begin{enumerate}[label=(\alph*),noitemsep]
        \item For each $j \in \{ 1, 2, \dotsc, K \}$, there are edges $\{ a_{1,j}, b_{1,j}\}$ and $\{ a_{2,j}, b_{2,j}\}$.
        \item Fix an injection $\rho$ from $[N]$ to subsets of $[K]$ of size $k$. For each $i \in \{ 1,2 \}$ and $j \in [N]$, we add edge between node $2$ of triangle $A_{i,j}$ and $a_{i,\ell}$ for each element $\ell \in \rho(j)$, and likewise between between node $2$ of triangle $B_{i,j}$ and $b_{i,\ell}$.
    \end{enumerate}
    \item Depending on $x_0$ and $x_1$, we add the following edges to the graph. Fix an arbitrary bijection between $[N^2]$ and $[N] \times [N]$. For each $i \in [N^2]$ such that $x_0(i) = 1$, we add the edge between node $1$ of triangle $A_{1,j}$ and node $1$ of triangle $A_{2,\ell}$ for the corresponding pair $(j,\ell)$. Likewise, for each $i \in [N^2]$ such that $x_1(i) = 1$, we add the edge between nodes $1$ of triangles $B_{1,j}$ and $B_{2,\ell}$ for the corresponding pair $(j,\ell)$.
\end{enumerate}
Figure~\ref{fig:bounded-treewidth-lb} shows an example of this construction.

The total number of nodes in the graph $G(n,x_0,x_1)$ is $n = 12N + 4 kN^{1/k} = \Theta(N)$. The partition required by Definition~\ref{def:lower-bound-family} is obtained by setting $V_0$ to contain all triangles $A_{i,j}$ and nodes $a_{i,\ell}$. The size of the cut $S$ is $kN^{1/k} = \Theta(n^{1/k})$.

\begin{lemma}\label{lemma:bounded-treewidth-lb-aux1}
Let $G(n, x_0, x_1)$ be the graph defined above. Then we have that $G(n, x_0, x_1)$ contains a copy of $H_k$ if and only if $x_0$ and $x_1$ are not disjoint. Moreover, any copy of $H_k$ in $G(n, x_0, x_1)$ is an induced copy of $H_k$.
\end{lemma}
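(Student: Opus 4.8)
The plan is to prove the two directions of the equivalence separately and then read off the ``moreover'' claim from the rigidity that the hard direction produces. Throughout I keep the pattern triangles named $A_1,A_2,B_1,B_2$ (these are the triangles of $H_k$) and refer to the triangles of $G$ always with two subscripts, $A_{i,j}$ and $B_{i,j}$, so the two never collide; I call $\{A_{1,\cdot},A_{2,\cdot}\}$ the \emph{$A$-side}, $\{B_{1,\cdot},B_{2,\cdot}\}$ the \emph{$B$-side}, and the index $i\in\{1,2\}$ the \emph{layer}.

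For the \emph{if} direction, suppose the strings are not disjoint, say $x_0(i)=x_1(i)=1$ with $(j,m)\in[N]\times[N]$ the corresponding pair. I would exhibit an explicit copy of $H_k$ by sending $A_1,A_2,B_1,B_2$ to $A_{1,j},A_{2,m},B_{1,j},B_{2,m}$ (labelled node $r$ to labelled node $r$). The two node-$1$ edges of $H_k$ are realised by the input edges $\{A_{1,j},A_{2,m}\}$ and $\{B_{1,j},B_{2,m}\}$, which are present exactly because $x_0(i)=x_1(i)=1$. Each of the $k$ paths between node $2$ of $A_1$ and node $2$ of $B_1$ is realised by leaving node $2$ of $A_{1,j}$ through $a_{1,t}$, crossing the matching edge to $b_{1,t}$, and entering node $2$ of $B_{1,j}$, for the $k$ indices $t\in\rho(j)$; symmetrically the $A_2$--$B_2$ paths use $t\in\rho(m)$. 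These are internally disjoint since the gadget vertices are distinct, giving the required copy.

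The \emph{only if} direction is the substantial part, which I would carry out in stages. First, a short case analysis on the edge types (input node-$1$ edges, node-$2$-to-gadget edges, and the matching edges $\{a_{i,t},b_{i,t}\}$) shows that the only triangles in $G(n,x_0,x_1)$ are the $4N$ designated triangles; hence any copy maps the four vertex-disjoint triangles of $H_k$ injectively onto four distinct designated triangles. Second, the only edges between distinct designated triangles are node-$1$ input edges, so the pattern's node-$1$ edges force node $1$ of each pattern triangle onto node $1$ of its image, and a degree count (node $2$ has degree $k+2$ in $H_k$, whereas node $3$ of a designated triangle has degree $2$) then forces node $2$ onto node $2$ and node $3$ onto node $3$. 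Third, since input edges stay on one side and cross layers, the node-$1$ edges force $\{A_1,A_2\}$ and $\{B_1,B_2\}$ to land on opposite sides of $G$, with the two triangles of each pair in different layers. The heart of the argument, and the step I expect to be the main obstacle, is then to control the $k$ connecting paths: I would enumerate all length-$3$ paths in $G$ between node $2$'s of two distinct designated triangles and show there are exactly two kinds -- \emph{gadget paths}, which run from node $2$ of some $A_{i,j}$ through $a_{i,t}$ and $b_{i,t}$ to node $2$ of some $B_{i,j'}$, crossing between the two sides but fixing the layer $i$; and \emph{shortcut paths}, which run from node $2$ of $A_{i,j}$ through node $1$ of $A_{i,j}$ across an input edge to node $1$ of $A_{3-i,m}$ and on to node $2$ of $A_{3-i,m}$, fixing the side but switching layers. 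Because the images of $A_1$ and $B_1$ lie on opposite sides, \emph{every} one of the $k$ internally disjoint paths between them must be a gadget path; in particular they share a layer and use $k$ distinct indices $t\in\rho(j)\cap\rho(j')$ for the images $A_{\cdot,j}$ and $B_{\cdot,j'}$. Since $|\rho(j)|=|\rho(j')|=k$, this forces $\rho(j)=\rho(j')$, and injectivity of $\rho$ gives $j=j'$; the same reasoning on the $A_2$--$B_2$ paths ties those images to a common index. Combining with the layer constraint, the four images are $A_{1,j},A_{2,m},B_{1,j},B_{2,m}$, and the two node-$1$ edges are the input edges $\{A_{1,j},A_{2,m}\}$ and $\{B_{1,j},B_{2,m}\}$, which lie at the \emph{same} coordinate $(j,m)$ of $x_0$ and of $x_1$; hence the strings intersect. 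The delicate point is exactly the observation that shortcut paths cannot cross the $A$/$B$ boundary, since this is what rules out an ``off-by-one'' realisation in which the $k$ paths would only certify $|\rho(j)\cap\rho(j')|\ge k-1$ and leave $j\ne j'$ possible.

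Finally, the preceding analysis fixes the entire vertex set of any copy: the twelve triangle nodes of $A_{1,j},A_{2,m},B_{1,j},B_{2,m}$ together with the gadget vertices $\{a_{1,t},b_{1,t}:t\in\rho(j)\}$ and $\{a_{2,t},b_{2,t}:t\in\rho(m)\}$. To prove the ``moreover'' claim I would verify directly that $G$ induced on this set carries no edges beyond those of $H_k$: because only one designated triangle per (side, layer) is present, the fan-out adjacencies (a gadget vertex to the node $2$ of several triangles, or a node $1$ to several opposite-layer node $1$'s) each have only a single endpoint inside the image and create no chord, while the node-$3$ vertices have no external neighbours at all. Hence every copy of $H_k$ is induced.
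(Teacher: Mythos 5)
Your proof is correct and follows essentially the same route as the paper's, which pins down the two edge-connected triangle pairs and then uses the $k$ disjoint length-$3$ paths to force matching indices via the injectivity of $\rho$; your version is considerably more detailed than the paper's (which simply asserts that the $k$ disjoint paths from node $2$ of $A_{1,j}$ are ``only possible if $B'_1 = B_{1,j}$''), and your explicit classification into gadget and shortcut paths is exactly the right way to fill that in. One small ordering flaw: in your step ``Third'' you claim the node-$1$ edges alone force the two pairs onto \emph{opposite} sides, but they only force each pair onto a common side in different layers; the opposite-sides conclusion must instead be \emph{derived} from your path classification (between two node-$2$'s on the same side there are no gadget paths and at most one shortcut path, so $k \ge 2$ internally disjoint paths cannot exist), after which ``all $k$ paths are gadget paths'' follows rather than being assumed. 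Since you already establish every fact needed for that repair, this is a one-sentence fix rather than a real gap.
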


\begin{proof}
First, we observe that if there is $i \in [N^2]$ such that $x_0(i) = x_1(i) = 1$, then $G(n, x_0, x_1)$ has a copy of $H_k$. Let $(j,\ell) \in [N] \times [N]$ be the pair corresponding to $i$ under the bijection fixed in the construction. We obtain a copy of $H_k$ by taking the triangles $A_{1,j}$, $A_{2,\ell}$, $B_{1,j}$, and $B_{2,\ell}$. By assumption that $x_0(i) = x_1(i) = 1$, triangles $A_{1,j}$ and $A_{2,\ell}$ are connected by an edge, as are $B_{1,j}$ and $B_{2,\ell}$. Furthermore, by construction, there are $k$ disjoint paths of length $3$ between $A_{1,j}$ and $B_{1,j}$, as well as between $A_{2,\ell}$ and $B_{2,\ell}$.

Now assume that $G(n, x_0, x_1)$ has a copy of $H_k$. The copy of $H_k$ contains triangles $A'_1$ and $A'_2$ connected by an edge, and triangles $B'_1$ and $B'_2$ connected by an edge. We first focus on triangles $A'_1$ and $A'_2$; by construction of $G(n, x_0, x_1)$, any pair of two triangles connected by an edge must consist of triangles $A_{1,j}$ and $A_{2,\ell}$, or of triangles $B_{1,j}$ and $B_{2,\ell}$ for some $(j,\ell) \in [N] \times [N]$. By relabelling, we can thus assume that $A'_1 = A_{1,j}$ and $A'_2 = A_{2,\ell}$ for some $(j,\ell) \in [N] \times [N]$.

Now consider the triangle $B'_1$ in the copy of $H_k$. There are $k$ disjoint paths of length $3$ starting from node $2$ of $A'_1 = A_{1,j}$ and ending at some node of $B'_1$; this is only possible if $B'_1 = B_{1,j}$. Likewise, we have $B'_2 = B_{2,\ell}$. Since there is an edge between triangles $A_{1,j}$ and $A_{2,\ell}$ as well as triangles $B_{1,j}$ and $B_{2,\ell}$, we must have $x_0(i) = x_1(i) = 1$ for $i$ corresponding to pair $(j,\ell)$. Finally, we note that by construction of $G(n,x_0,x_1)$, the copy of $H_k$ is an induced copy.
\end{proof}

Theorem~\ref{thm:bounded-treewidth-pattern} now follows immediately by Theorem~\ref{lower-bounds}.

\begin{figure}
\begin{center}
\includegraphics[width=0.5\textwidth]{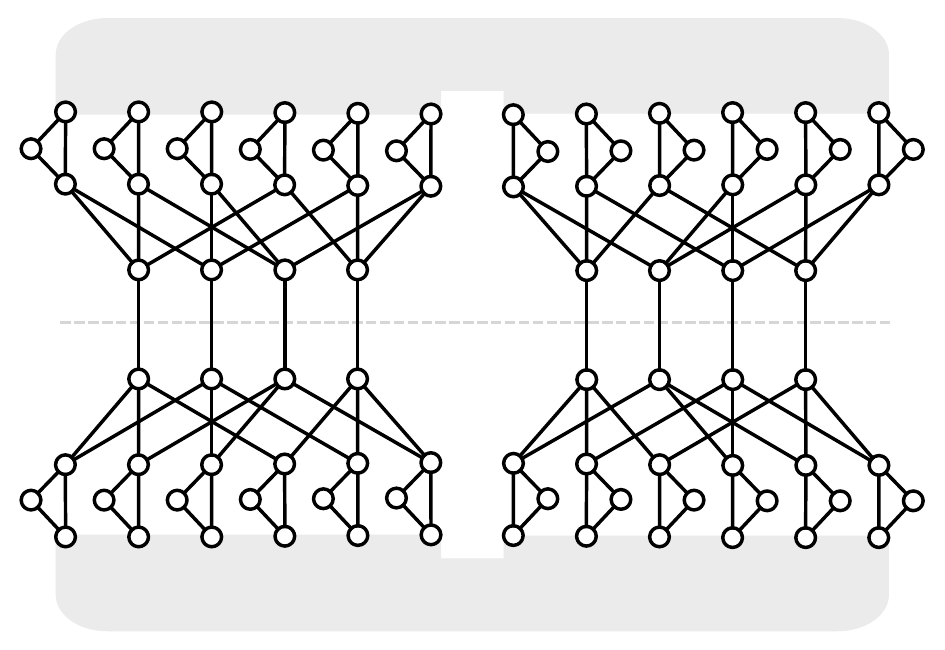}
\caption{Example of the hard instance for $H_k$-detection for $k = 2$. The shaded area is represents the encoding edges, which form a subgraph of a complete bipartite graph.}\label{fig:bounded-treewidth-lb}
\end{center}    
\end{figure}

% ------------------------------------------------------
\section{Multicolored problems}
% ------------------------------------------------------

In the \emph{multicolored (induced) subgraph detection}, we are given a pattern graph $H$ on $k$ nodes and an input graph $G$ with a (not necessarily proper) $k$-coloring, and the task is to find a (induced) copy of $H$ that is multicolored, i.e. a copy where all nodes have different colors.

% ------------------------------------------------------
\subsection{Reductions}\label{sec:multicolor-reductions}

We first prove that the complexities of multicolored $k$-clique and $k$-independent set are closely related to their standard versions also in the distributed setting. These results follow from standard fixed-parameter reductions~\cite{PIETRZAK2003757,FELLOWS200953}.

\begin{theorem}\label{thm:multicol-red}
If multicolored $k$-clique can be solved in $T(n)$ rounds in \congest, then $k$-clique can be solved $O(k^2 T(kn))$ rounds in \congest{}. If $k$-clique can be solved in $T(n)$ rounds in \congest, then multicolored $k$-clique can be solved $T(n)$ rounds in \congest{}.
\end{theorem}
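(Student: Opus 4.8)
The theorem has two directions, and I would prove each separately.

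\textbf{Plan for the forward direction (multicolored to standard).} The goal is to show that an algorithm for multicolored $k$-clique yields one for plain $k$-clique, at the cost of a blowup factor of $k$ in the number of nodes and $k^2$ in the round count. The standard centralized reduction here is the opposite of color-coding: given an uncolored instance $G$, we build a $k$-partite "blown-up" graph $G'$ on vertex set $V \times [k]$, where $(u,i)$ and $(v,j)$ are adjacent iff $i \ne j$ and $\{u,v\} \in E_G$. Coloring each copy $(v,i)$ with color $i$ makes $G'$ properly $k$-colored, and a multicolored $k$-clique in $G'$ corresponds exactly to a $k$-clique in $G$ (the $k$ chosen copies must use distinct colors $1,\dots,k$ and project to $k$ distinct mutually adjacent vertices of $G$, since same-vertex copies in different color classes are non-adjacent by the $i\neq j$ constraint forcing distinct base vertices once we note $(v,i)(v,j)$ is a non-edge). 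The distributed implementation is the crux: each node $v$ of $G$ simulates the $k$ copies $(v,1),\dots,(v,k)$. Since a copy's incident edges depend only on the color index and on $v$'s own incident edges in $G$, node $v$ can compute its simulated nodes' adjacencies locally with no extra communication. I would then argue that one round of \congest on $G'$ (which has $kn$ nodes) is simulated in $O(k^2)$ rounds on $G$: each real edge of $G$ must carry the messages of up to $k^2$ simulated edges (the $k$ copies of $u$ talking to the $k$ copies of $v$), and each such message has the same $O(\log n)$ bandwidth, so $O(k^2)$ real rounds suffice per simulated round. This yields the $O(k^2 T(kn))$ bound.

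\textbf{Plan for the reverse direction (standard to multicolored).} This direction should be essentially immediate: an algorithm for plain $k$-clique already solves multicolored $k$-clique with no overhead, because a multicolored $k$-clique is in particular a $k$-clique. The only subtlety is the output convention—we need the right nodes to report a yes-instance. I would note that we simply ignore the coloring and run the $k$-clique algorithm on the underlying graph $G$; any multicolored $k$-clique is detected as a clique, giving the $T(n)$ bound directly. A slight care point is that plain $k$-clique detection might report yes for a monochromatic or partially-monochromatic clique that is not multicolored, but for the promise/detection framing here this direction is stated only as an upper-bound transfer (detecting multicolored cliques is no harder than detecting cliques), so reporting any clique containing a multicolored one is acceptable, or one post-filters locally.

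\textbf{Main obstacle.} The only genuinely delicate step is the round-overhead analysis of the simulation in the forward direction: verifying that the congestion on each edge of $G$ stays bounded by $O(k^2)$ messages of the same bandwidth, and that each simulated node can locally reconstruct its adjacency list without communication. Both rely on the locality of the blow-up construction—adjacency of $(u,i),(v,j)$ depends only on $\{u,v\}\in E_G$ and the color indices—so I expect this to go through cleanly, mirroring the congested-clique simulation argument used earlier for Theorem~\ref{thm:path-lower-bound}.
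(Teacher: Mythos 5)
Your forward direction matches the paper's proof essentially exactly: the same blow-up of each node $v$ into $k$ colored copies forming an independent set, the same observation that each copy's adjacency is locally computable, and the same $O(k^2)$-per-round congestion argument giving $O(k^2 T(kn))$. (The paper adds edges $v_i u_j$ for \emph{all} $i,j$ rather than only $i \ne j$; this is immaterial since the multicolored constraint already forbids same-color pairs.)

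The reverse direction, however, has a genuine gap. Running the plain $k$-clique algorithm on $G$ while ignoring the coloring is \emph{complete} but not \emph{sound}: if $G$ contains a monochromatic (or otherwise non-multicolored) $k$-clique but no multicolored one, the plain algorithm reports yes, which is the wrong answer for multicolored $k$-clique. Your appeal to ``this direction is only an upper-bound transfer'' does not rescue this -- the multicolored problem is a bona fide decision problem and the reduction must be correct on no-instances too -- and ``post-filter locally'' is not available, since a detection algorithm need not output a witness, and even with one a node cannot in general verify the colors of all $k$ clique members locally. The paper's fix is a one-liner you are missing: first delete every edge whose endpoints share a color (each node learns its neighbors' colors in one round and drops the offending edges), and \emph{then} run the $k$-clique algorithm. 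In the resulting graph any $k$-clique has pairwise distinct colors, hence uses each of the $k$ colors exactly once, so it is multicolored; this restores soundness and gives the claimed $T(n)$ bound.
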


\begin{proof}
First assume we have a $T(n)$-round algorithm $\A$ for multicolored $k$-clique. We can then solve $k$-clique by simulating $\A$ as follows:
\begin{enumerate}
    \item Each node $v \in V$ simulates $k$ new nodes $v_1, v_2, \dotsc, v_k$, with $v_i$ having color $i$; these nodes form an independent set. For each neighbor $u \in N(v)$, all nodes $v_i$ are adjacent to $u_j$ for all $i, j \in [k]$.
    \item Nodes simulate $\A$ on the modified graph. This takes $O(kn)$ rounds on the modified graph, and simulation gives $O(k^2)$ overhead in the round complexity.
\end{enumerate}
By construction, any multicolor $k$-clique in the modified graph can contain at most one node $v_i$ for any $v \in V$, and the multicolored $k$-clique corresponds to a $k$-clique in $G$.

For the second part, assume we have a $T(n)$-round algorithm $\A$ for $k$-clique. Given a colored graph $G$, we can solve the multicolored $k$-clique by removing all edges between nodes of the same color, and running algorithm $\A$ on the remaining graph.
\end{proof}

In the centralized setting, clique and independent set are equivalent, so the above reductions work also for independent set. However, in the distributed setting, only one direction works immediately.

\begin{theorem}
If multicolored $k$-independent set can be solved in $T(n)$ rounds in \congest, then $k$-independent set can be solved $O(k^2 T(kn))$ rounds in \congest{}.
\end{theorem}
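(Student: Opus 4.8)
The plan is to adapt the multicolored-clique reduction from the first part of Theorem~\ref{thm:multicol-red}, exploiting the fact that for independent sets essentially the same gadget works once the adjacency pattern among the copies of a single vertex is flipped. Concretely, given an input graph $G = (V,E)$ for $k$-independent set, I would build a colored graph $G'$ in which each node $v \in V$ is replaced by $k$ copies $v_1, \dots, v_k$, with $v_i$ assigned color $i$. The edges of $G'$ are: (i) for every $v$, the copies $v_1,\dots,v_k$ are made pairwise adjacent, i.e.\ they form a clique; and (ii) for every edge $\{u,v\} \in E$, all copies $u_i$ and $v_j$ are joined, for all $i,j \in [k]$. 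Each node $v$ simulates its $k$ copies, and since the incident edges of $v_i$ depend only on the color set $[k]$ and on the edges incident to $v$ in $G$, node $v$ can construct the local inputs of its copies without any communication.

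For correctness I would show that multicolored $k$-independent sets in $G'$ correspond exactly to $k$-independent sets in $G$. A multicolored independent set in $G'$ contains one vertex of each color, say $w^{(1)}_1, \dots, w^{(k)}_k$. Because the copies of a single original vertex form a clique, no two chosen copies can come from the same original vertex, so the $w^{(i)}$ are distinct; and because every pair of copies of adjacent original vertices is joined, the absence of edges among the chosen copies forces $\{w^{(i)}, w^{(j)}\} \notin E$ whenever $i \neq j$. Hence $\{w^{(1)}, \dots, w^{(k)}\}$ is a $k$-independent set in $G$. Conversely, any $k$-independent set $\{w_1,\dots,w_k\}$ in $G$ yields the multicolored independent set $\{(w_1)_1, \dots, (w_k)_k\}$ in $G'$, since distinct non-adjacent vertices contribute copies of distinct colors with no edge between them.

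For the round complexity I would run the assumed algorithm $\A$ on $G'$, which has $kn$ nodes, costing $T(kn)$ rounds of the simulated execution. The simulation overhead is $O(k^2)$: the edges among the copies of a single vertex are resolved locally by the simulating node, while over each physical edge $\{u,v\}$ one must relay, per simulated round, the $k^2$ messages passing between the $k$ copies of $v$ and the $k$ copies of $u$, each of size $O(\log(kn))$. This yields the claimed total of $O(k^2 T(kn))$ rounds.

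The substance of the argument is the gadget design, and the one point that needs care is precisely the edge pattern among the copies of a single vertex: here they must form a clique (rather than an independent set, as in the clique reduction) so that at most one copy of any original vertex survives in an independent set, which is exactly what guarantees distinctness of the $w^{(i)}$. I would not attempt the reverse direction, since, as noted just before the statement, reducing multicolored independent set to plain independent set does not go through via the simple edge-deletion trick that worked for cliques.
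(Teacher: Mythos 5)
Your proposal is correct and matches the paper's proof, which simply reuses the multicolored-clique reduction with the copies $v_1,\dots,v_k$ of each vertex forming a clique instead of an independent set; your construction, correctness argument, and $O(k^2 T(kn))$ accounting are exactly what the paper intends. No issues.
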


\begin{proof}
The claim follows by an identical argument as in the previous theorem, with the exception that the simulated nodes $v_1, v_2, \dotsc, k$ form a clique instead of an independent set.
\end{proof}

% ------------------------------------------------------
\subsection{Lower bounds}\label{sec:multicolor-lb}

Next, we prove some simple unconditional lower bounds for multicolored (induced) cycle detection and multicolored induced path detection.

\begin{theorem}
For any $k \ge 4$, any \congest algorithm solving multicolored (induced) $k$-cycle detection needs at least $\Omega(n/\log n)$ rounds.
\end{theorem}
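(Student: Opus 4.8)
The plan is to construct a family of lower bound graphs for multicolored $k$-cycle detection and then invoke Theorem~\ref{lower-bounds}. The natural starting point is the induced even cycle construction of Lemma~\ref{lemma:even-cycle-lb}, which already gives a $6$-cycle (hence $k=6$) lower bound with $C(n)=n$; the task is to adapt it so that it works for every $k \ge 4$ and so that the sole copy of the $k$-cycle is \emph{multicolored} under a prescribed $k$-coloring. Since the multicolored version is concerned with coloring rather than inducedness, I expect the construction to be simpler than the induced case, because the coloring itself can be used to rule out spurious cycles that would otherwise require careful chord-analysis.

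First I would take a single $\disj_{\Theta(N^2)}$ instance $x_0,x_1 \in \{0,1\}^{N^2}$ and build a gadget with two ``halves'' analogous to the $A$-side and $B$-side of Lemma~\ref{lemma:even-cycle-lb}. The key modification is to assign each of the $k$ nodes of the target cycle $C_k$ a distinct color, and to place nodes of $G(n,x_0,x_1)$ into color classes so that the only way to realize a multicolored $k$-cycle is to traverse one node of each color in cyclic order. Concretely, I would lay out two paths of fixed colored nodes (padding the cycle up to the desired length $k$ using a path of appropriately-colored ``connector'' nodes, exactly as the path of length $3$ between $a_{2,k}$ and $b_{2,k}$ pads the $6$-cycle in the even-cycle construction), with the two coordinate-indexed node sets $A_1,A_2$ on Alice's side and $B_1,B_2$ on Bob's side, and the $x_0$- and $x_1$-dependent edges encoding the two disjointness inputs as before. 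The coloring forces any multicolored $k$-cycle to pick exactly one node from each color class, and the cut is set by placing $V_0 = A_1 \cup A_2$ (plus any connector nodes on that side), giving $C(n) = \Theta(n)$ cut edges just as before.

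The correctness lemma I need is that $G(n,x_0,x_1)$ contains a multicolored $k$-cycle if and only if $x_0$ and $x_1$ intersect. The forward direction is the routine ``if $x_0(i)=x_1(i)=1$ then the two encoding edges for the pair $(j,\ell)$ close a cycle of the intended length and color pattern'' argument. The backward direction is where the coloring does the real work: because each color appears on a controlled set of nodes, a multicolored $k$-cycle is forced to use the connector path and both encoding edges, so its existence certifies a common index. I would argue that the colors eliminate the need for the elaborate type/case analysis (T1--T3) of the induced construction, since any deviation from the intended cyclic color pattern is immediately excluded by the requirement that all $k$ colors appear exactly once; this should let me handle all $k \ge 4$ uniformly, including the odd lengths that the even-cycle construction does not cover.

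The main obstacle I anticipate is handling small and odd values of $k$ cleanly, in particular $k=4$ and $k=5$, where there is little room for a connector path and the coloring constraints interact tightly with the gadget geometry; I would treat these either by a direct small-case construction or by verifying that the general padding scheme degenerates gracefully. A secondary technical point is to confirm that the $k$-coloring, the node sets, and the cut partition jointly satisfy all of conditions (2a)--(2d) of Definition~\ref{def:lower-bound-family} --- in particular that every $x_i$-dependent edge lies inside $V_i \times V_i$ and that the cut $S$ has size $\Theta(n)$ --- after which Theorem~\ref{lower-bounds} with $\cc(\disj_{\Theta(n^2)}) = \rcc(\disj_{\Theta(n^2)}) = \Omega(n^2)$ and $C(n)=\Theta(n)$ and bandwidth $b(n)=O(\log n)$ yields the claimed $\Omega(n/\log n)$ bound.
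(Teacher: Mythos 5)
Your proposal matches the paper's proof: the paper takes the same four-set disjointness gadget ($A_1,A_2,B_1,B_2$ with the $x_0$-edges between $A_1$ and $A_2$ and the $x_1$-edges between $B_1$ and $B_2$), colors the four sets with colors $1$ through $4$ and a connector path of $k-3$ edges between $a_{1,i}$ and $b_{1,i}$ with the remaining colors, and lets the multicolored constraint replace the chord analysis exactly as you anticipate. Your worry about $k=4,5$ resolves in the ``degenerates gracefully'' way: for $k=4$ the connector path is just a single edge and the argument goes through unchanged.
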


\begin{proof}
The claim follows by adding colors to the standard odd cycle lower bound family construction of \citet{drucker13}. In more detail, let $N \in \mathbb{N}$ and $x_0, x_1 \in \{ 0, 1 \}^{N^2}$.  For $n = (k+1) N$, we construct a lower bound  graph $G(n,x_0,x_1)$ as follows:
\begin{enumerate}
    \item The node set of $G(n,x_0,x_1)$ contains four sets of $N$ nodes, denoted by $A_1$, $A_2$, $B_1$, and $B_2$. For the purposes of the construction, we denote the nodes in these sets as $A_i = \{ a_{i,1}, a_{i,2}, \dotsc, a_{i,N}\}$ and $B_i = \{ b_{i,1}, b_{i,2}, \dotsc, b_{i,N}\}$.
    \item Graph $G(n,x_0,x_1)$ has the following edges for all $x_0$ and $x_1$:
    \begin{enumerate}[label=(\alph*),noitemsep]
        \item For $i = 1, 2, \dotsc, N$, there is a path of $k-3$ edges between node $a_{1,i}$ and node~$b_{1,i}$.
        \item For $i = 1, 2, \dotsc, N$, there is an edge between node $a_{2,i}$ and node $b_{2,i}$.
    \end{enumerate}
    \item Depending on $x_0$ and $x_1$, we add the following edges to the graph. Fix an arbitrary bijection between $[N^2]$ and $[N] \times [N]$. For each $i \in [N^2]$ such that $x_0(i) = 1$, we add the edge $\{ a_{1,j}, a_{2,\ell} \}$ for the corresponding pair $(j,\ell)$. Likewise, for each $i \in [N^2]$ such that $x_1(i) = 1$, we add the edge $\{ b_{1,j}, b_{2,\ell} \}$ for the corresponding pair $(j,\ell)$.
    \item The nodes in sets  $A_1$, $A_2$, $B_1$, and $B_2$ are colored with colors $1$, $2$, $3$ and $4$, respectively. Each path between $A_1$ and $B_1$ uses each of the remaining colors once.
\end{enumerate}
The partition required by Definition~\ref{def:lower-bound-family} is obtained by setting $V_0 = A_1 \cup A_2$. The size of the cut $S$ is $2N$.

To see that the graph $G(n,x_0,x_1)$ contains a multicolored $k$-cycle if and only if $x_0$ and $x_1$ intersect, we observe that any possible multicolored cycle must use edge $\{ a_{2,i}, b_{2,i} \}$ and a path between $a_{1,j}$ and $b_{1,j}$ for some $i,j \in [N]$. Thus, the edges $\{ a_{1,i}, a_{2,j} \}$ and $\{ b_{1,i}, b_{2,j} \}$ are present for some $i,j\in [N]$ if and only if $G(n,x_0,x_1)$ contains a multicolored $k$-cycle.
\end{proof}

\begin{theorem}
For any $k \ge 6$, any \congest algorithm solving multicolored induced $k$-edge path detection needs at least $\Omega(n/\log n)$ rounds.
\end{theorem}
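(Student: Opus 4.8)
The plan is to construct a family of lower bound graphs in the sense of Definition~\ref{def:lower-bound-family} and invoke Theorem~\ref{lower-bounds} with $f_n = \disj_{\Theta(n^2)}$ and cut size $C(n) = \Theta(n)$; since $\cc(\disj_{N^2}) = \rcc(\disj_{N^2}) = \Omega(N^2)$ and the construction will have $n = \Theta(N)$, this yields the claimed $\Omega(n/\log n)$ bound. Following the preceding multicolored $k$-cycle and induced even-cycle constructions, I would take four index sets $A_1, A_2, B_1, B_2$, each of $N$ nodes indexed by $[N]$, place the $x_0$-dependent edges inside the pair $(A_1, A_2)$ and the $x_1$-dependent edges inside $(B_1, B_2)$, and join $B_1$ to $A_1$ and $A_2$ to $B_2$ by fixed \emph{same-index} connector paths whose internal nodes are index-specific. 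Setting $V_0 = A_1 \cup A_2$ together with the $A$-side halves of the connectors then makes all $x_0$-edges internal to $V_0$ and all $x_1$-edges internal to $V_1$, while the cut $S$ consists only of the (fixed) connector edges, of which there are $\Theta(N)$.

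The key design decision, and the point where induced paths behave differently from cycles, is the following. An induced path cannot simultaneously use an $x_0$-edge and an $x_1$-edge while also matching both coordinates: forcing both index-matchings together with both crossing edges closes a $4$-cycle on the four corner nodes, whose fourth edge is a forbidden chord. I would therefore use only the $x_0$-edge as a genuine path edge and let the $x_1$-edge serve as a \emph{forbidden chord}. The canonical copy runs $B_1(p) \to A_1(p) \to A_2(q) \to B_2(q)$, with the two connectors expanded into same-index paths so the total length is exactly $k$, and it uses the $x_0$-edge $A_1(p)A_2(q)$; its endpoints are $B_1(p)$ and $B_2(q)$, whose only possible adjacency is the $x_1$-edge $B_1(p)B_2(q)$, so being induced forces that edge to be absent. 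Consequently the graph has a multicolored induced $k$-path iff some cell $(p,q)$ satisfies $x_0(p,q) = 1$ and $x_1(p,q) = 0$. This equals $\neg\disj(x_0, \overline{x_1})$, and since the player holding $x_1$ can complement it locally, $f_n$ still has communication complexity $\Omega(N^2)$.

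I would color the four corner classes and the index-specific internal connector nodes with the $k+1$ colors of a $k$-edge path, so that a multicolored copy selects exactly one node per color class. Because the only edges among connector-internal nodes lie within a single index's connector, any multicolored path must route each connector through one index, forcing $B_1, A_1$ and their connector to a common index $p$ and $A_2, B_2$ to a common index $q$. The reverse direction (a witnessing cell gives a multicolored induced $k$-path) is then routine. The main obstacle is the forward direction: proving that \emph{every} multicolored induced $k$-path has the canonical shape---ruling out routings that use an $x_1$-edge as a path edge, mismatch the two coordinates, or shortcut among corner nodes---so that its induced-ness encodes exactly $x_1(p,q) = 0$. This calls for a case analysis on the admissible color sequences in the spirit of the type analysis in the induced even-cycle proof, and it is precisely here that the hypothesis $k \ge 6$ enters: the connectors must carry enough index-specific internal colors to make the structure rigid and to preclude the short spurious induced paths that would otherwise appear for $k \le 5$.
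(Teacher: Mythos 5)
There is a genuine gap in the forward direction, and it is not a matter of missing case analysis: the construction as you describe it does not compute the function you claim. The root cause is that, unlike a cycle, a path has two free endpoints, and nothing in your construction pins them down. Concretely, suppose the connector from $B_1(p)$ to $A_1(p)$ has internal node $u(p)$ and the connector from $A_2(q)$ to $B_2(q)$ has internal nodes $w_1(q), w_2(q)$. If $x_0(p,q)=1$ and $x_1(p',q)=1$ for some $p'\ne p$, then
\[ u(p)\,,\ A_1(p)\,,\ A_2(q)\,,\ w_1(q)\,,\ w_2(q)\,,\ B_2(q)\,,\ B_1(p') \]
is a multicolored \emph{induced} $6$-edge path: it hits each color class once, the $x_1$-edge $\{B_1(p'),B_2(q)\}$ is used as a genuine path edge, and the only candidate chord $\{u(p),B_1(p')\}$ is absent precisely because $p'\ne p$. (A symmetric spurious path starting at $w_2$ exists when $x_0(p,q)=1$ and $x_1(p,q')=1$ with $q'\ne q$.) So the graph can contain a multicolored induced $k$-path even when no cell satisfies $x_0=1,\ x_1=0$, and the predicate you actually encode is not $\neg\disj(x_0,\overline{x_1})$ (nor any function with a known $\Omega(N^2)$ communication bound). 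Your stated plan to "use the $x_1$-edge only as a forbidden chord" cannot be enforced by coloring alone, because $B_1(p)$ has neighbors in two distinct color classes (its connector node and the $B_2$ class) and hence is never forced to be an endpoint.

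The paper's proof addresses exactly this difficulty with two extra gadgets, and you would need something equivalent. First, it adds terminal nodes $s$ and $t$ whose entire neighborhoods are monochromatic ($s$ adjacent to all of $A_1$, $t$ adjacent to all of a fifth set $C$); a multicolored path contains only one node of each color, so $s$ and $t$ can each have at most one path neighbor and are forced to be the two endpoints. Second, it places a complete bipartite graph minus the perfect matching between $A_1$ and $C$, so that induced-ness of the path forces the $A_1$-node and the $C$-node to carry the \emph{same} index; with the endpoints and that index locked, both the $x_0$-edge and the $x_1$-edge are then used as genuine path edges and the reduction is from plain $\disj_{\Theta(n^2)}$. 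Your overall framework (lower bound family, cut of size $\Theta(n)$, Theorem~\ref{lower-bounds}) is the right one, but without an endpoint-forcing gadget of this kind the construction admits the false positives above and the proof does not go through.
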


\begin{proof}
For simplicity, we consider the case $k = 6$. Let $N \in \mathbb{N}$ and $x_0, x_1 \in \{ 0, 1 \}^{N^2}$.  For $n =  7N + 2$, we construct a lower bound graph $G(n,x_0,x_1)$ as follows:
\begin{enumerate}
    \item The node set of $G(n,x_0,x_1)$ contains five sets of $N$ nodes, denoted by $A_1$, $A_2$, $B_1$, $B_2$, and $C$. For the purposes of the construction, we denote the nodes in these sets as $A_i = \{ a_{i,1}, a_{i,2}, \dotsc, a_{i,N}\}$ and $B_i = \{ b_{i,1}, b_{i,2}, \dotsc, b_{i,N}\}$, and $C = \{ c_1, c_2 \dotsc, c_N \}$. In addition, we have two terminal nodes $s$ and $t$.
    \item For $i = 1, 2, \dotsc, N$, there is an edge between nodes $a_{2,i}$ and $b_{2,i}$, and between $c_i$ and~$b_{1,i}$.
    \item The nodes in $A_1$ and $C$ as well as the terminal nodes $s$ and $t$ form the gadget for controlling the end points of any multicolor $k$-edge path:
    \begin{enumerate}
        \item There is an edge between node $s$ and every node in $A_1$, and an edge between node $t$ and every node in $C$.
        \item There is a complete bipartite graph between $A_1$ and $C$, with every edge of form $\{ a_{1,i}, c_i \}$ removed.
    \end{enumerate} 
    \item The edges between $A_1$ and $A_2$ and the edges between $B_1$ and $B_2$ are used to encode the disjointness instance as in the previous proofs. That is, depending on $x_0$ and $x_1$, we add the following edges to the graph. Fix an arbitrary bijection between $[N^2]$ and $[N] \times [N]$. For each $i \in [N^2]$ such that $x_0(i) = 1$, we add the edge $\{ a_{1,j}, a_{2,\ell} \}$ for the corresponding pair $(j,\ell)$. Likewise, for each $i \in [N^2]$ such that $x_1(i) = 1$, we add the edge $\{ b_{1,j}, b_{2,\ell} \}$ for the corresponding pair $(j,\ell)$.
    \item The nodes in sets  $A_1$, $A_2$, $B_1$, $B_2$ and $C$ are colored with colors $2$, $3$, $4$, $5$ and $6$, respectively. The terminal nodes $s$ and $t$ have colors $1$ and $7$.
\end{enumerate}
The partition required by Definition~\ref{def:lower-bound-family} is obtained by setting $V_0 = A_1 \cup A_2 \cup C \cup \{ s, t \}$. The size of the cut $S$ is $2N$.

To see that the graph $G(n,x_0,x_1)$ satisfies the conditions of the lower bound family, consider any induced multicolored $k$-edge path $P$ in $G(n,x_0,x_1)$. Since all neighbors of $s$ have color $2$, node $s$ must be one endpoint of $P$; similarly, node $t$ is the other endpoint of $P$. Denote the node of $P$ in $A_1$ by $v_2$, and the node of $P$ in $C$ by $v_6$. Since the path $P$ has $k$ edges, nodes $v_2$ and $v_6$ cannot be adjacent on $P$, and since $P$ is induced, there cannot be an edge between $v_2$ and $v_2$ in $G(n,x_0,x_1)$. Thus, by construction, we have that $v_2 = a_{1,i}$ and $v_6 = c_i$ for some $i \in [N]$. Thus, the path $P$ must travel from $a_{1,i}$ to $c_i$ using one node of each remaining color, which is possible only if and only if $x_0$ and $x_1$ are not disjoint by the properties of the standard disjointness encoding.

Finally, one can observe that the proof works for $k \ge 7$ if one replaces the edges between $A_{2}$ and $B_2$ by paths of appropriate length.
\end{proof}

% ------------------------------------------------------
\section{Induced subgraph detection on bounded degeneracy graphs}\label{sec:induced-trees-degen}
% ------------------------------------------------------

\subsection{Induced tree detection}

We start by giving a parameterized distributed algorithm for detecting induced trees, parameterized by the degeneracy $\degen = \degen(G)$ of the input graph. This result is based on the \emph{random separation} algorithm of \citet{cai2006random}, adapted to distributed setting. For this result, we assume for convenience that all nodes are given the parameter $\degen$ as input; we discuss at the end how to remove this dependence for the randomized version of the algorithm.

\paragraph{Preliminaries.}

Let $G = (V,E)$ be a graph. We say that an orientation $\sigma$ of the edges of $G$ is an \emph{$\alpha$-bounded orientation}, or simply \emph{$\alpha$-orientation}, if every node $v \in V$ has out-degree at most $\alpha$ in $\sigma$, and $\sigma$ is acyclic. A graph $G$ is $\degen$-degenerate if and only if has an $\degen$-orientation; moreover, an $O(\degen)$-orientation can be computed fast in the \congest model:

\begin{lemma}[\cite{barenboim2010sublogarithmic}]\label{lemma:compute-orientation}
Let $G$ be a $\degen$-degenerate graph, and let $\varepsilon > 0$. We can compute a $(2+\varepsilon)\degen$-orientation of $G$ in $O(\log n)$ rounds in the \congest model, assuming $\degen$ is known to all nodes. If $\degen$ is not known, we instead can compute a $(4+\varepsilon)\degen$-orientation of $G$ in $O(\log n)$ rounds.
\end{lemma}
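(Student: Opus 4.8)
The plan is to prove both claims via the standard peeling, or \emph{$H$-partition}, construction, which builds a layering of $V$ from which the orientation can be read off directly. First I would set up the key structural fact: since $G$ is $\degen$-degenerate, so is every induced subgraph $G[R]$, and by a one-line induction on $|R|$ (repeatedly delete a vertex of degree at most $\degen$) every such subgraph has at most $\degen \cdot |R|$ edges. Hence the average degree in $G[R]$ is at most $2\degen$, and for any $\varepsilon > 0$ a simple counting argument shows that strictly more than an $\tfrac{\varepsilon}{2+\varepsilon}$ fraction of the vertices of $G[R]$ have degree at most $(2+\varepsilon)\degen$ inside $G[R]$.

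For the case where $\degen$ is known, I would run the following process using $O(1)$ rounds per phase. Maintain the set $R_i$ of not-yet-assigned vertices, with $R_1 = V$. In phase $i$, every $v \in R_i$ computes $\deg_{R_i}(v)$ by counting its still-active neighbours (one round), and every $v$ with $\deg_{R_i}(v) \le (2+\varepsilon)\degen$ joins layer $H_i$ and announces its removal (one round); the rest form $R_{i+1}$. By the counting fact, $|R_{i+1}| < \tfrac{2}{2+\varepsilon}|R_i|$, so after $\ell = O(\log n / \log(1+\varepsilon/2)) = O(\log n)$ phases every vertex has been assigned a layer. I then orient every edge from its endpoint in the earlier layer to the one in the later layer, breaking ties within a layer by vertex identifier. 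This orientation is acyclic because it is a linear extension of the total order given by (layer index, then identifier). Finally, if $v \in H_i$, all of its out-neighbours lie in $R_i$ at the moment $v$ is removed (they are in later layers, or in the same layer with larger identifier), so its out-degree is at most $\deg_{R_i}(v) \le (2+\varepsilon)\degen$, as required. Only $O(1)$ bits are communicated per edge per phase, so the procedure runs in $O(\log n)$ rounds in \congest.

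The main obstacle is the case where $\degen$ is unknown, since the threshold $(2+\varepsilon)\degen$ can no longer be evaluated locally. Here I would run the same peeling with a guessed estimate $\tilde\degen$, initialised to $1$ and only ever doubled: a phase that uses threshold $(2+\varepsilon)\tilde\degen$ yet fails to remove an $\tfrac{\varepsilon}{2+\varepsilon}$ fraction of $R_i$ certifies $\tilde\degen < \degen$ and triggers a doubling of $\tilde\degen$. Once $\tilde\degen \ge \degen$ every phase is productive, so $\tilde\degen$ is never doubled past the first power of two that is at least $\degen$; consequently $\tilde\degen < 2\degen$ throughout, every vertex has degree at most $(2+\varepsilon)\tilde\degen < (4+2\varepsilon)\degen$ when removed, and rescaling $\varepsilon$ yields a $(4+\varepsilon)\degen$-orientation. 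Since there are $O(\log n)$ productive phases (geometric shrinkage of $R_i$) and $O(\log \degen) = O(\log n)$ doublings, the round count stays $O(\log n)$. The one delicate point I would need to treat carefully, and where I would rely on the analysis of \cite{barenboim2010sublogarithmic}, is detecting that ``a phase removed too few vertices,'' which is a global predicate; this is handled by aggregating the per-phase removal count so that the doubling schedule stays adaptive without inflating the round complexity to $O(\log^2 n)$.
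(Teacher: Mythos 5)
The paper does not actually prove this lemma---it is imported verbatim from \cite{barenboim2010sublogarithmic}---so the only meaningful comparison is against the standard argument in that reference. Your treatment of the known-$\degen$ case reconstructs that argument correctly and completely: the edge bound $|E(G[R])| \le \degen\,|R|$ for induced subgraphs, the Markov-type step showing that strictly more than an $\varepsilon/(2+\varepsilon)$ fraction of $G[R_i]$ falls below the threshold $(2+\varepsilon)\degen$, the resulting $O(\log n)$ bound on the number of peeling phases (for fixed $\varepsilon$), and the observation that orienting each edge along the total order given by (layer index, identifier) yields an acyclic orientation whose out-degree at $v \in H_i$ is at most $\deg_{R_i}(v) \le (2+\varepsilon)\degen$. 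Each phase needs only $O(1)$-bit messages, so this half is fine in \congest.

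The unknown-$\degen$ half, however, is not established. Your adaptive doubling needs, after each phase, the truth value of the global predicate ``did this phase remove at least an $\varepsilon/(2+\varepsilon)$ fraction of $R_i$?'', and the fix you gesture at---aggregating the per-phase removal count---costs $\Theta(D)$ rounds per aggregation in \congest. Bounded-degeneracy graphs can have diameter $\Theta(n)$ (a path has $\degen = 1$), and these aggregations cannot be pipelined, because the outcome of phase $i$ determines the threshold used in phase $i+1$; so the implementation as described runs in $\Omega(D)$ rounds rather than $O(\log n)$. The obvious non-adaptive alternative---running $\Theta(\log n)$ phases for each successive estimate $2^0, 2^1, \dotsc$ before doubling---avoids aggregation but costs $O(\log\degen \cdot \log n)$ rounds, which is again not what the lemma claims. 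So neither variant extractable from your sketch delivers the $(4+\varepsilon)\degen$ out-degree and the $O(\log n)$ round bound simultaneously; the unknown-arboricity procedure of \cite{barenboim2010sublogarithmic} avoids global coordination by a genuinely different mechanism, and reconstructing that mechanism is exactly the missing piece. Since you explicitly defer it to the cited analysis, the second claim of the lemma remains unproven in your write-up---and note that the paper does rely on this half in the ``Unknown degeneracy'' discussion of Section~\ref{sec:induced-trees-degen}.
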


\paragraph{Multicolored induced trees with orientation.}

Let $T$ be a tree on $k$ nodes. We first to show how to solve a specific multicolored version of induced $T$-detection, given an acyclic orientation of $G$ as input. 

More precisely, let the graph $G$, let $\sigma$ be an $\alpha$-bounded orientation of $G$, and let $\chi \colon V \rightarrow \{ 0,1,\dots, k \}$ be a (not necessarily proper) $(k+1)$-coloring of $G$. Moreover, assume that the tree $T$ is labelled in a bottom-up manner with $1, 2, \dotsc, k$ with an arbitrary node as a root -- that is, the root has label $k$, and each node has a smaller label than their parent. We say that an induced copy $H$ of $T$ in $G$ is \emph{proper} w.r.t $\sigma$ and $\chi$ if the node in $H$ corresponding to node $i$ in $T$ has color $i$, and every node that is an out-neighbor of some node in $H$ has color $0$.

\begin{lemma}\label{lemma:proper-ind-subgraph}
Given a graph $G = (V,E)$, an orientation $\sigma$ of $G$, and a coloring $\chi$ as input, we can find a proper induced copy of a tree $T$ in $O(k)$ rounds using $O(1)$-bit messages in \congest model.
\end{lemma}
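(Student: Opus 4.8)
The plan is to exploit the bottom-up labelling of $T$ together with the acyclic orientation $\sigma$ to build the copy of $T$ in a single leaves-to-root sweep, using the color constraints to uniquely pin down which nodes can participate. The key structural observation is this: in a proper induced copy, node $i$ of $T$ must be realized by a node of $G$ colored $i$, and any node that is an out-neighbor (in $\sigma$) of a node in the copy must be colored $0$. Since $\sigma$ is an $\alpha$-orientation with out-degree at most $\alpha$, this means that whenever we look at a candidate node $v$ for label $i$, the portion of $T$ above $i$ (the parent and higher) is reached only through edges, and we can orient our search so that children are discovered among the neighbors of $v$ whose colors match their prescribed labels.

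First I would fix, for each node $v \in V$ with $\chi(v) = i$, the possibility that $v$ plays the role of node $i$ in $T$; call $v$ a \emph{candidate} for label $i$. The algorithm proceeds through the labels $1, 2, \dotsc, k$ and computes, for each candidate $v$ for label $i$, a Boolean flag indicating whether the entire subtree of $T$ rooted at node $i$ can be embedded as a proper induced copy with $v$ mapped to node $i$. Because $T$ is labelled bottom-up, when we process label $i$ all of its children in $T$ have strictly smaller labels and have therefore already been processed. The flag for $v$ is set to true exactly when, for each child $c$ of node $i$ in $T$, there is a neighbor $w$ of $v$ in $G$ with $\chi(w) = c$ whose own subtree flag is already true, and these chosen neighbors are pairwise distinct and induce no extra edges among themselves or with $v$ beyond those prescribed by $T$. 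The crucial point making this a local, $O(1)$-bit computation is that the coloring $\chi$ forces each child of $i$ to have a \emph{distinct} prescribed color, so the candidate children for different children of $i$ are automatically disjoint, and a node $v$ need only check for the presence of an appropriately-colored neighbor with a true flag rather than coordinate a global matching.

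The step I expect to be the main obstacle is verifying that the copy is genuinely \emph{induced}, i.e. that no unwanted edges appear among the selected nodes, and doing so with only $O(1)$-bit messages per round. The ``every out-neighbor has color $0$'' condition is precisely what handles this: if two selected nodes $u, v$ of the copy were adjacent in $G$ but not adjacent in $T$, then one of them, say the tail of the edge under $\sigma$, would have the other as an out-neighbor, which must then be colored $0$; but all selected nodes carry nonzero labels, yielding a contradiction. Thus the orientation and the color-$0$ requirement together rule out chords without any explicit non-edge checking, which is what keeps messages short. I would therefore spend the bulk of the argument confirming that (i) each round $i$ requires each candidate only to aggregate one bit per incident edge from its neighbors colored with a child-label, so $O(k)$ rounds and $O(1)$-bit messages suffice, and (ii) the bottom-up processing combined with the color-$0$ condition exactly characterizes proper induced copies, so that some candidate for the root label $k$ ends with a true flag if and only if a proper induced copy of $T$ exists. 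Finally I would note that reporting can be propagated back down the tree in another $O(k)$ rounds so that every node in a found copy learns it is part of an admissible copy.
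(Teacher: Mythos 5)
Your proposal matches the paper's proof essentially exactly: a bottom-up sweep over the labels $1,\dotsc,k$, one round per label, in which each node of color $i$ broadcasts a one-bit flag certifying that the subtree $T_i$ can be properly embedded at it, with distinctness of the chosen children guaranteed automatically by their distinct prescribed colors and induced-ness enforced not by explicit non-edge checks but by the requirement that out-neighbors under $\sigma$ be colored $0$. The only detail to state more carefully --- which the paper makes explicit in its check (a) --- is that the out-neighbor condition must exempt at most one out-neighbor of the parent's color $p(i)$, since a tree edge oriented from child to parent would otherwise cause a valid proper copy to be rejected; your phrasing ``every out-neighbor has color $0$'' needs this exemption for completeness of the detection.
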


\begin{proof}
For each node $i$, let $T_{i}$ be the subtree of $T$ rooted at $i$ and $p(i)$ as the parent of $i$. The algorithm for detecting a proper induced copy of $T$ proceeds as follows:

\begin{enumerate}
\item The nodes broadcast their color to their neighbors; since there are $k+1$ colors, we can do this in $O(\log k)$ rounds using 1-bit messages. Each node now knows the colors of its neighbors.

\item For iterative steps $i = 1, 2, \dotsc, k - 1$, each node $v$ of color $i$ checks locally if the following conditions are satisfied:
\begin{enumerate}[label=(\alph*)]
    \item There is at most one out-neighbor of $v$ of color $p(i)$, and all other out-neighbors of $v$ have color $0$. 
    \item Node $v$ has received message $1$ from at least one node of color $j$, for each child $j$ of $i$ in $T$.
\end{enumerate}
If the conditions (a) and (b) are satisfied, node $v$ broadcasts $1$, and otherwise it broadcasts~$0$.

\item Finally, each node $v$ of color $k$ checks that all its out-neighbors have color $0$, and if the condition (b) above is satisfied for $i = k$. If these checks pass, $v$ reports an induced copy of $T$.
\end{enumerate}

For any proper induced copy of $T$, one can easily verify by induction on $i$ that node of color $i$ in that copy passes the checks performed by the algorithm. Moreover, any copy $T'$ reported by the algorithm must be a proper induced copy: if there was a non-tree edge from node $v$ of color $i$ to node $u$ of color $j$, then either $v$ or $u$ would have failed the corresponding check. The algorithm clearly uses $k + O(\log k)$ rounds of communication.

\end{proof}

\paragraph{Induced trees.} Using Lemma~\ref{lemma:proper-ind-subgraph} as a subroutine, we now show how to detect induced copies of any tree $T$. To achieve this, we use random separation~\cite{cai2006random} and color-coding~\cite{alon1995color} techniques to reduce the general problem to detection of proper induced copies of $T$.

\begin{theorem}\label{thm:path}
Finding induced copy of a tree $T$ on $k$ nodes in a $\degen$-degenerate graph $G$ can be done in $k2^{O(\degen k)}k^k + O(\log n)$ rounds in the \congest model using a randomized algorithm.
\end{theorem}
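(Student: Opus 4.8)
The plan is to reduce induced $T$-detection to the proper-induced-copy detection of Lemma~\ref{lemma:proper-ind-subgraph} by combining \emph{color-coding} (to fix the labels $1,\dots,k$ on a target copy) with \emph{random separation} (to isolate a target copy along the orientation). First I would compute, once, a $(2+\varepsilon)\degen$-bounded acyclic orientation $\sigma$ using Lemma~\ref{lemma:compute-orientation}; this costs $O(\log n)$ rounds and supplies the additive $O(\log n)$ term. Write $\alpha = (2+\varepsilon)\degen = O(\degen)$. The core is a randomized trial repeated $R = 2^{O(\degen k)} k^k$ times: in each trial every node $v$ independently draws a color $\chi(v) \in \{0,1,\dots,k\}$, taking color $0$ with probability $1/2$ and each of $1,\dots,k$ with probability $1/(2k)$, and we run the $O(k)$-round procedure of Lemma~\ref{lemma:proper-ind-subgraph}, reporting a copy whenever one is found.

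Soundness is immediate: by Lemma~\ref{lemma:proper-ind-subgraph} any reported proper copy is a genuine induced copy of $T$, so the algorithm never errs on no-instances. For completeness I fix an arbitrary induced copy $H$ of $T$ and lower-bound the probability that $H$ is proper in a single trial. Two events must occur: (i) every vertex of $H$ receives its intended label color, which happens with probability $(2k)^{-k} = 2^{-k}k^{-k}$ since $H$ has $k$ vertices; and (ii) every vertex outside $V(H)$ that is an out-neighbor (in $\sigma$) of some vertex of $H$ receives color $0$. Here the bounded out-degree of $\sigma$ is used crucially: each of the $k$ vertices of $H$ has at most $\alpha$ out-neighbors, so there are at most $\alpha k = O(\degen k)$ such external vertices, and they are all colored $0$ with probability at least $2^{-\alpha k} = 2^{-O(\degen k)}$. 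The inducedness of $H$ comes for free from the orientation: since $H$ is induced, its only copy-vertex out-neighbors lie on tree edges, and any spurious chord between two copy vertices would be an out-edge at its tail and hence caught by condition (a) of Lemma~\ref{lemma:proper-ind-subgraph}. As events (i) and (ii) concern disjoint vertex sets, multiplying gives per-trial success probability at least $2^{-O(\degen k)}k^{-k}$, so $R = 2^{O(\degen k)}k^k$ independent trials detect $H$ with constant probability. The total round complexity is then $R \cdot O(k) + O(\log n) = k\,2^{O(\degen k)}k^k + O(\log n)$, as claimed.

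The one subtlety, which I expect to be the main obstacle, is that Lemma~\ref{lemma:proper-ind-subgraph} only detects copies whose tree edges, under $\sigma$, form an in-arborescence toward the root (each vertex's single tree out-edge points to its parent), whereas the induced orientation $\sigma|_H$ of a target copy can be an arbitrary acyclic orientation of $T$ in which some vertex has out-degree larger than one. To handle this I would enumerate, inside each trial, all acyclic orientations $\vec{T}$ of the pattern tree -- there are at most $2^{k-1}$ of them, which is absorbed into the $2^{O(\degen k)}$ factor -- and run for each a mild generalization of Lemma~\ref{lemma:proper-ind-subgraph}: relabel the vertices of $\vec{T}$ by a reverse topological order and have each copy vertex verify that its out-neighbors are exactly the $\vec{T}$-prescribed colored out-neighbors together with color-$0$ vertices, forwarding a $1$-bit once all of its in-neighbors have reported. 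Applying this with $\vec{T} \cong \sigma|_H$ makes $H$ proper precisely when events (i) and (ii) hold, while soundness is preserved because every chord is still caught at its tail. Finally, the algorithm can be made independent of knowing $\degen$ by using the $(4+\varepsilon)\degen$-orientation variant of Lemma~\ref{lemma:compute-orientation} together with a doubling guess for $\degen$, which only affects the constant in the exponent.
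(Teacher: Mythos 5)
Your proposal follows the paper's proof essentially step for step: compute a bounded out-degree acyclic orientation via Lemma~\ref{lemma:compute-orientation}, then repeat a combined color-coding/random-separation trial $2^{O(\degen k)}k^k$ times, invoking Lemma~\ref{lemma:proper-ind-subgraph} in each trial. The per-trial success probability analysis (roughly $2^{-O(\degen k)}$ for the $O(\degen k)$ external out-neighbors of the copy to receive color $0$, times $(2k)^{-k}$ for the copy to receive its intended labels) is the same as the paper's; the only cosmetic difference is that the paper runs $\Theta(\log n)$ times more trials for high-probability success and pipelines $\Omega(\log n)$ of them in parallel using the $O(1)$-bit messages of Lemma~\ref{lemma:proper-ind-subgraph}, arriving at the same round count.

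The one place you go beyond the paper is the ``subtlety'' you flag at the end, and you are right to flag it: the paper's proof does not address it. Check (a) of Lemma~\ref{lemma:proper-ind-subgraph} forces every tree edge of a detected copy to be oriented from child to parent under $\sigma$, whereas $\sigma$ restricted to an induced copy $H$ is merely some acyclic orientation of $T$ --- it need not be an in-arborescence toward any root (an internal node of the copy may have two out-edges into $H$), and no amount of recoloring changes this, since $\sigma$ is fixed before the random choices. The paper's definition of a proper copy and its one-line completeness claim silently assume this alignment. Your fix --- enumerating the at most $2^{k-1}$ orientations of the tree edges, relabelling by reverse topological order, and running the correspondingly generalized verification, with the extra factor absorbed into $2^{O(\degen k)}$ --- is sound and repairs the argument at no asymptotic cost. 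So: same approach as the paper, with a correct patch for a step the paper leaves unjustified.
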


\begin{proof}

Given an input graph $G$ with degeneracy $\degen$, the algorithms is as follows:
\begin{enumerate}
    \item Compute an $\alpha$-orientation $\sigma$ of $G$ for $\alpha = 5\degen$. This can be done in $O(\log n)$ rounds by Lemma~\ref{lemma:compute-orientation}.
    \item Perform the following steps $t$ times, for $t$ to be determined later:
    \begin{enumerate}[label=(\alph*)]
        \item Each node $v$ picks color $\chi(v) = 0$ with probability $1/2$.
        \item If node $v$ did not color itself at previous step, they pick a color $\chi(v)$ from set $\{ 1, 2, \dotsc, k \}$ uniformly at random.
        \item The nodes use the algorithm of Lemma~\ref{lemma:proper-ind-subgraph} to detect if there is a proper induced copy of $T$ w.r.t.~$\sigma$ and $\chi$.
    \end{enumerate}
\end{enumerate}

If there is an induced copy $H$ of $T$ in $G$, then under the orientation $\sigma$, there are at most $5 \degen k$ nodes that are out-neighbors of nodes of $H$. Thus, the probability that the coloring process of steps (a) and (b) gives a coloring $\chi$ such that $H$ is proper is at least $2^{-5\degen k}k^{-k}$.
Thus, if we pick $t = \Theta(2^{5\degen k}k^k \log n)$, the algorithm detects an induced copy of $T$ with a high probability if one exists.

For the running time, computing the orientation takes $O(\log n)$ rounds. We repeat the color-coding and proper induced subgraph detection $t = \Theta(2^{5 \degen k}k^k \log n)$ times; by Lemma~\ref{lemma:proper-ind-subgraph}, this part can be implemented with $O(1)$-bit messages, so we can perform $\Omega(\log n)$ instances in parallel. Thus, the total round cost for this part is $O(k2^{5\degen k}k^k)$ rounds.
\end{proof}

\paragraph{Derandomization.}

Finally, we note that the algorithms can be derandomized using standard derandomization tools from fixed-parameter algorithms. Specifically, we use the derandomization of \citet{alon2007linear} to avoid incurring extra $O(\log n)$ factor that would follow from the original derandomization of \citet{cai2006random}. We first recall the definition of \emph{families of perfect hash functions}.

\begin{definition}[\cite{cygan2015parameterized}]
An \emph{$(n,k)$-family of perfect hash functions $\F$} is a family of functions $f \colon [n] \to [k]$ such that for any $S \subseteq [n]$ with $|S| = k$ there is a function $f \in \F$ that is bijection on $S$, i.e. $f(S) = [k]$.
\end{definition}

Families of perfect hash functions can be constructed efficiently in the centralized setting, as captured by the following lemma. However, as we are not considering the local computation cost of the algorithm, we are mainly interested in the size bound given the lemma.

\begin{lemma}[\cite{naor1995splitters}]\label{lemma:perfect-hash}
For any $n, k \geq 1$ one can deterministically construct an $(n, k)$-family of perfect hash functions of size $e^{k}k^{O(\log k)} \log n$ in time $e^{k}k^{O(\log k)}n \log n$ in the centralized setting.
\end{lemma}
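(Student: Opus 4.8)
The plan is to follow the classical splitter-based construction of \citet{naor1995splitters}, assembling the family in two stages linked by composition: a cheap \emph{universe-reduction} stage whose size carries the $\log n$ dependence, and a near-optimal \emph{perfect-hashing} stage on a universe of size $\mathrm{poly}(k)$ that carries the $e^{k}k^{O(\log k)}$ factor but no dependence on $n$.

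The glue is a composition lemma. Call a family $\F_1$ of functions $[n]\to[m]$ \emph{$k$-separating} if every $k$-subset $S\subseteq[n]$ is injective under some $f\in\F_1$. If $\F_1$ is $k$-separating into $[m]$ and $\F_2$ is an $(m,k)$-family of perfect hash functions, then $\{\, g\circ f : f\in\F_1,\ g\in\F_2 \,\}$ is an $(n,k)$-family of perfect hash functions of size $|\F_1|\cdot|\F_2|$: given $S$ with $|S|=k$, choose $f\in\F_1$ injective on $S$, so that $f(S)$ is a $k$-subset of $[m]$, and then choose $g\in\F_2$ that is a bijection on $f(S)$; the composition $g\circ f$ is then a bijection on $S$. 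Because the two stages multiply, I can design and bound them independently.

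For the universe-reduction stage I would take $m=k^{2}$ and use an $(n,k,k^{2})$-splitter, i.e.\ a family that splits every $k$-set as evenly as possible among $k^{2}$ buckets; since $k^{2}\ge k$, ``as evenly as possible'' means at most one element per bucket, so such a splitter is exactly a $k$-separating family into $[k^{2}]$. The intuition for why this is cheap is a birthday bound---a nearly-universal hash function sends a fixed $k$-set injectively into $k^{2}$ buckets with constant probability, since the expected number of collisions is $\binom{k}{2}/k^{2}<1/2$---and \citet{naor1995splitters} turn this into an explicit deterministic splitter of size $\mathrm{poly}(k)\log n$, constructible in $\mathrm{poly}(k)\,n\log n$ time. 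After this stage the working universe has size $k^{2}=\mathrm{poly}(k)$, independent of $n$.

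The technical heart, and the step I expect to be the main obstacle, is the second stage: an $(m,k)$-family of perfect hash functions of size $e^{k}k^{O(\log k)}$ on a universe of size $m=k^{2}$. The factor $e^{k}$ is essentially forced---a uniformly random $g\colon[m]\to[k]$ is injective on a fixed $k$-set only with probability $k!/k^{k}=\Theta(e^{-k}\sqrt{k})$ (the Fredman--Koml\'os barrier)---so the real content is to attain it \emph{deterministically} while paying only a quasipolynomial $k^{O(\log k)}$ overhead rather than the $\mathrm{poly}(k)\log m$ slack of a crude union-bound derandomization. Here I would invoke the recursive balanced-splitter construction of \citet{naor1995splitters}: repeatedly apply a splitter that divides the current $k$-set evenly into two halves and recurse on each half using disjoint target ranges, so that after $O(\log k)$ levels every element is isolated. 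Each level contributes a $\mathrm{poly}(k)$-size splitter family (yielding the $k^{O(\log k)}$ overhead over the $O(\log k)$ levels), while the $e^{k}$ factor accumulates from the central-binomial cost of the even splits; controlling this accumulation and the alignment of functions across recursion levels is the delicate part. Composing the two stages gives size $\mathrm{poly}(k)\log n\cdot e^{k}k^{O(\log k)}=e^{k}k^{O(\log k)}\log n$ and total construction time $e^{k}k^{O(\log k)}\,n\log n$, as claimed.
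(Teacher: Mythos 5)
The paper offers no proof of this lemma: it is imported verbatim from \citet{naor1995splitters}, and the surrounding text explicitly says only the size bound is used. So there is nothing in the paper to compare against line by line; what can be assessed is whether your reconstruction matches the cited construction. It does. The two-stage decomposition you describe --- an $(n,k,k^{2})$-splitter of size $\mathrm{poly}(k)\log n$ for universe reduction (correctly identified with a $k$-separating family since $k^{2}\ge k$ forces at most one element per bucket), composed with a $(k^{2},k)$-family of perfect hash functions of size $e^{k}k^{O(\log k)}$ built by recursive balanced splitting --- is exactly the architecture of the Naor--Schulman--Srinivasan proof, and your composition lemma and the multiplication of sizes and construction times are correct.

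The one honest caveat is that your argument is not self-contained where it matters most: the $e^{k}k^{O(\log k)}$ bound for the second stage is the entire technical content of the cited result, and you explicitly defer it back to \citet{naor1995splitters} rather than carrying out the accumulation argument across recursion levels. You flag this yourself, which is fair, but it means your write-up is an annotated citation rather than a proof --- which happens to be precisely the status the lemma has in the paper. For the purposes of this paper (only the size bound is consumed, by Theorem~\ref{thm:derandomisation}), your level of detail is adequate and accurate.
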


We can now use families of perfect hash functions to derandomize Theorem~\ref{thm:path}, assuming that all nodes know the value $\degen(G)$.

\begin{theorem}\label{thm:derandomisation}
Finding induced copy of a tree $T$ on $k$ nodes in a $\degen$-degenerate graph $G$ can be done in $f(\degen,k) + O(\log n)$ rounds in the \congest model using a deterministic algorithm for some function $f$, assuming $\degen$ is known to all nodes.
\end{theorem}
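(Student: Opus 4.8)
The plan is to derandomize the randomized algorithm of Theorem~\ref{thm:path} by replacing the two independent sources of randomness---the choice of which nodes receive color $0$ (the random separation) and the choice of which of the colors $1, \dotsc, k$ the surviving nodes receive (the color-coding)---with deterministic enumerations built from explicit combinatorial objects, so that a proper induced copy is guaranteed to be exposed in at least one iteration. First I would recall the structure of the correctness argument: if an induced copy $H$ of $T$ exists, then under the fixed orientation $\sigma$ there is a set $W$ consisting of the $k$ nodes of $H$ together with their at most $5\degen k$ out-neighbors, and $H$ becomes proper precisely when (i) every out-neighbor of $H$ not already in $H$ is assigned color $0$, and (ii) the $k$ nodes of $H$ receive the $k$ distinct labels matching the bottom-up labelling of $T$. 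The whole task is to deterministically produce a list of colorings $\chi$ such that at least one $\chi$ achieves both (i) and (ii) for this (unknown) set $W$.

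Next I would handle the two ingredients separately. For the color-coding part (ii), I would invoke Lemma~\ref{lemma:perfect-hash} to construct an $(n,k)$-family $\F$ of perfect hash functions of size $e^k k^{O(\log k)} \log n$; for the correct separation, some $f \in \F$ is a bijection on the $k$-node set of $H$, and composing $f$ with the fixed bottom-up labelling gives the required assignment of labels $1, \dotsc, k$. For the random-separation part (i), I would replace the biased coin flips by a deterministic family that, for every bounded-size set $W$, contains a function correctly separating the $k$ core nodes of $H$ from the remaining out-neighbor nodes; this can again be supplied by a perfect-hash / universal-set style construction whose size depends only on $|W| = O(\degen k)$ and on $\log n$, since $|W|$ is bounded by a function of $\degen$ and $k$. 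I would then iterate over all pairs (separation function, hash function $f \in \F$), defining $\chi$ by assigning color $0$ to the nodes the separation function discards and label $f$ composed with the tree labelling to the rest, and run the subroutine of Lemma~\ref{lemma:proper-ind-subgraph} for each such $\chi$.

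For the round complexity, I would note that the orientation is computed once in $O(\log n)$ rounds by Lemma~\ref{lemma:compute-orientation}, and that each of the deterministic colorings can be evaluated with the $O(k)$-round, $O(1)$-bit-message subroutine of Lemma~\ref{lemma:proper-ind-subgraph}; since these messages are short, $\Omega(\log n)$ of the iterations run in parallel, absorbing the $\log n$ factor from the family sizes into the additive $O(\log n)$ term. The total number of $(\text{separation}, f)$ pairs is bounded by a function $g(\degen, k)$ independent of $n$, so the overall cost is $f(\degen, k) + O(\log n)$ rounds for an appropriate $f$, as claimed; here I rely on the assumption that $\degen$ is known to all nodes so that the correct orientation bound $\alpha = 5\degen$ and the correct family sizes can be fixed in advance.

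The main obstacle I anticipate is the derandomization of the random-separation step rather than the color-coding step: color-coding is a textbook application of perfect hash functions, but separation must simultaneously give color $0$ to \emph{all} out-neighbors of $H$ outside $H$ while keeping the $k$ core nodes non-zero, for an unknown vertex set $W$ whose size is only bounded (not fixed). The clean way around this is to observe that $|W|$ is bounded by $5\degen k + k$, a function of the parameters alone, and to use a derandomization that produces, for every set of this bounded size, some function realizing the desired two-way split---this is exactly the strengthening provided by \citet{alon2007linear} over the plain color-coding of \citet{cai2006random}, and it is what lets us avoid an extra $O(\log n)$ factor and keep the $n$-dependence purely additive. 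Verifying that the guarantees of these families compose correctly, and that the parallel evaluation of short messages genuinely collapses the $\log n$ multiplicative factor into the additive term, is the part that requires care.
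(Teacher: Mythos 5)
Your proposal is correct and follows essentially the same route as the paper: derandomize via the perfect hash families of Lemma~\ref{lemma:perfect-hash}, exploiting that the set to be controlled (the $k$ copy nodes plus their at most $\alpha k$ out-neighbors) has size bounded by a function of $\degen$ and $k$ alone, and absorb the $\log n$ factor in the family size by running $\Omega(\log n)$ of the $O(1)$-bit-message iterations of Lemma~\ref{lemma:proper-ind-subgraph} in parallel. The only cosmetic difference is that you combine a separate separation family with an $(n,k)$-perfect hash family, whereas the paper uses a single $(N, k+\degen k)$-family and enumerates the surviving $k$-subset $L$ of the color set together with a bijection $L \to [k]$, which achieves the same effect in one construction.
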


\begin{proof}
Let $N$ be the polynomial upper bound the nodes have for the number of nodes in the graph $G$, and let $\sigma$ be the $\alpha$-orientation of $G$ for $\alpha = 3\degen$. We want to construct a family of colorings $\F$ of $V$ so that if $G$ contains an induced copy $H$ of the pattern graph, then there is some $\chi \in \F$ such that $H$ is proper w.r.t.~$\sigma$ and $\chi$. The nodes construct the desired $\F$ as follows:
\begin{enumerate}
    \item Each node $v$ locally constructs an $(N,k+\degen k)$-family of perfect hash functions $\F'$ from $[N]$ to $C = \{ 1, 2, \dotsc, k + \alpha k \}$. This is done deterministically based on $N$, $\alpha$ and $k$, so each node has the same $\F'$.
    \item For every $f \in \F'$, subset $L \subseteq C$ of size $k$, and bijection $b \colon L \to \{ 1, 2, \dotsc, k \}$, each node $v$ adds the function $\chi \colon [N] \to \{ 1, 2, \dotsc, k \}$ defined by
\[ \chi(i) =
\begin{cases}
(g \circ f)(i) & \text{ if $f(i) \in L$, and}\\
0 & \text{ otherwise,}
\end{cases} \]
to the family $\F$.
\end{enumerate}
By construction, we have that for any disjoint sets $S, T \subseteq [N]$ with $|S| = k$ and $|T| = \alpha k$ and  any bijection $b \colon S \to [k]$ there is $\chi \in \F$ such that $\chi$ agrees with $b$ on $S$ and $T$ is colored entirely with color $0$. 

To derandomize the algorithm of Theorem~\ref{thm:path}, the nodes construct the family of colorings $\F$ as above and use each function $\chi \in \F$ to select their colors as $\chi(v)$ in turn instead of the random choice. The total number of function in $\F$ is $s(\degen,k) \log N = O(s(\degen,k)\log n)$ for some function $s$, the running time is $s(\degen,k) + O(\log n)$.
\end{proof}

\paragraph{Unknown degeneracy.} The only part where the randomized algorithm uses the knowledge of $\degen(G)$ is for deciding how many repeats of the random coloring it performs; Lemma~\ref{lemma:compute-orientation} can be used without knowing $\degen(G)$. Without knowledge of $\degen(G)$, nodes can determine the largest out-degree in orientation $\sigma$ in their radius-$k$ neighborhood and use that as a proxy for $\degen(G)$ to determine how many repeats of the random coloring they should participate in; it is easy to verify that this still retains the correctness of the algorithm. The only caveat is that different nodes can terminate at different times, and cannot determine when all nodes have terminated.

The deterministic algorithm, on the other hand, requires that all nodes know the degeneracy $\degen(G)$, or the same upper bound for this value. While we can compute an $O(\kappa(G))$-orientation $\sigma$ for $G$ in $O(\log n)$ rounds, all nodes do not necessarily learn the largest out-degree in $\sigma$; indeed, one can trivially see that having all nodes learn $\degen(G)$ requires $\Omega(D)$ rounds in the worst case.

% ------------------------------------------------------
\subsection{Induced subgraph detection for bounded treewidth patterns}\label{sec:lb-low-treewidth-degen}

We now show that with slight modification, the hard treewidth $2$ patterns presented in Section~\ref{sec:lb-low-treewidth} can be adapted to bounded degeneracy setting. Recall that as mentioned in the introduction, any pattern graph on $k$ nodes can be detected in $O(k \degen(G) n)$ rounds by having all nodes gather full information about their distance-$k$ neighborhood; thus, the following lower bound is almost tight.

\begin{theorem}
For any $k \ge 2$, there exists a pattern graph $H_k$ of treewidth $2$ such that \congest algorithm solving either $H$-detection or induced $H$-detection on graphs of degeneracy $2$ needs at least $\Omega(n^{1-1/k})$ rounds.
\end{theorem}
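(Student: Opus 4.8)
The plan is to reuse the pattern $H_k$ and the lower bound family $G(n,x_0,x_1)$ from Lemma~\ref{lemma:bounded-treewidth-lb}, modifying them so that the input graphs have degeneracy exactly~$2$ while the pattern keeps treewidth~$2$. In that family the degeneracy is large for two reasons: the disjointness-encoding edges joining the ``node~$1$'' vertices of the $A$- and $B$-triangles form, in the worst case, a dense bipartite graph of degeneracy $\Theta(N)$; and the shared endpoints $a_{i,\ell},b_{i,\ell}$ of the length-$3$ paths are forced by the injection $\rho$ to have large degree. I would eliminate both by \emph{subdividing} the offending edges: replace each potential disjointness edge between node~$1$ of $A_{1,j}$ and node~$1$ of $A_{2,\ell}$ (and symmetrically on the $B$-side) by a length-$2$ path through a fresh middle vertex, and subdivide the two path-edges incident to each shared vertex $a_{i,\ell},b_{i,\ell}$. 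The pattern is modified accordingly to $H_k'$, subdividing its $A_1$--$A_2$ and $B_1$--$B_2$ edges and lengthening each of its disjoint paths. Since subdividing edges preserves series--parallel structure, $H_k'$ again has treewidth~$2$.

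To certify degeneracy~$2$ I would exhibit an acyclic orientation of out-degree at most~$2$. Make every subdivision vertex a source by orienting both of its edges outward; orient each triangle as $1\to2$, $1\to3$, $2\to3$; and orient each surviving cut edge as $a_{i,\ell}\to b_{i,\ell}$. Then every node~$1$ keeps only its two triangle out-edges and receives all of its (subdivided) disjointness edges, every node~$2$ receives all of its (subdivided) path edges and keeps only the single out-edge $2\to3$, and each $a_{i,\ell},b_{i,\ell}$ has out-degree at most~$1$. The orientation is acyclic because the subdivision vertices are sources and each triangle is internally ordered. Intuitively, subdividing a bipartite graph makes it $2$-degenerate—the degree-$2$ middle vertices can always be peeled off first—so this single device neutralizes both dense parts at once, and the four triangles show the degeneracy is exactly~$2$.

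Next I would re-prove the analogue of Lemma~\ref{lemma:bounded-treewidth-lb-aux1}: $G(n,x_0,x_1)$ contains $H_k'$ iff $x_0$ and $x_1$ intersect, and every copy is induced, so the family lower-bounds both $H_k'$-detection and induced $H_k'$-detection. The forward direction is immediate, since an intersection at a bit $(j,\ell)$ activates the relevant middle vertices and produces the expected copy on triangles $A_{1,j},A_{2,\ell},B_{1,j},B_{2,\ell}$. For the converse I would follow the original argument: the only triangles in $G$ are the $A_{i,j},B_{i,j}$ (subdivision and $a/b$ vertices have degree-$2$ neighborhoods whose two endpoints are non-adjacent, so they create no new triangles), the two length-$2$ connections between triangle node~$1$s force a matching pair of disjointness bits to be present, and the disjoint (now length-$5$) paths pin the indices down through the injectivity of $\rho$; the same rigidity shows the copy is chordless, hence induced.

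Finally I would read off the parameters and apply Theorem~\ref{lower-bounds}. The subdivision forces $n=\Theta(N^2)$—indeed any $2$-degenerate graph has at most $2n$ edges, so an $N^2$-bit instance must live on $\Omega(N^2)$ vertices—while the encoded function is still $f_n=\disj_{\Theta(N^2)}=\disj_{\Theta(n)}$, so $\cc(f_n)=\Theta(n)$; the cut is carried only by the surviving $a$--$b$ edges, whose number $\Theta(N^{1/p})$ is governed by the size $p$ of the subsets used by $\rho$ (equivalently, the number of disjoint paths in $H_k'$). Choosing $p$ so that this cut is $\Theta(n^{1/k})$ and feeding $\cc(f_n)=\Theta(n)$ and $C(n)=\Theta(n^{1/k})$ into Theorem~\ref{lower-bounds} gives the claimed $\Omega(n^{1-1/k})$ bound. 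I expect the main obstacle to be the interface between the two requirements: the subdivisions must be placed so that they simultaneously destroy every high-degree vertex (for degeneracy~$2$) and introduce no unintended short cycle or chord that would either create a spurious copy of $H_k'$ when $x_0,x_1$ are disjoint or raise the degeneracy above~$2$; verifying the explicit orientation together with the ``no new triangles'' claim is where the real care is needed.
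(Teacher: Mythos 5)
Your proposal is correct and reaches the same bound, but it diverges from the paper in one substantive design choice. Both proofs start from the treewidth-$2$ construction of Section~\ref{sec:lb-low-treewidth} and both lengthen the disjoint paths (subdividing the edges between node $2$ of each triangle and the shared vertices $a_{i,\ell},b_{i,\ell}$, so the pattern's paths become length $5$); the difference is how the dense disjointness encoding between the node-$1$ vertices is tamed. The paper simply shrinks the encoded instance from $\disj_{\Theta(N^2)}$ to $\disj_{\Theta(N)}$, realizing it as a \emph{matching} (an edge between node $1$ of $A_{1,j}$ and node $1$ of $A_{2,j}$, same index $j$), which keeps $n=\Theta(N)$, leaves the pattern's $A_1$--$A_2$ and $B_1$--$B_2$ connections as single edges, and makes the degeneracy certificate one line: every vertex has at most two neighbours of degree larger than $2$. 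You instead keep all $\Theta(N^2)$ potential encoding edges and subdivide each through a fresh middle vertex, so $n=\Theta(N^2)$ and the pattern must also have its $A_1$--$A_2$ and $B_1$--$B_2$ edges subdivided; the parameters then work out the same way ($\cc=\Theta(n)$, cut $\Theta(n^{1/k})$ after tuning the number of disjoint paths to $\lceil k/2\rceil$). Your route buys a nice structural remark -- a $2$-degenerate graph has $O(n)$ edges, so a linear-size disjointness instance is the most one can hope to encode, and subdivision is the generic way to force it -- at the cost of more verification work: you must check that the $\Theta(n)$ new degree-$2$ middle vertices create no unintended length-$2$ connections between triangles and no new triangles, and your explicit out-degree-$2$ acyclic orientation does this correctly. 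The paper's matching encoding avoids all of that bookkeeping, which is why its correctness lemma can be discharged by reference to Lemma~\ref{lemma:bounded-treewidth-lb-aux1} essentially verbatim.
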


We use the same construction for $k \ge 2$ for the pattern graph as in Lemma~\ref{thm:bounded-treewidth-pattern}, but add paths of length $5$ instead of paths of length $3$ between triangles $A_1$ and $B_1$, and triangles $A_2$ and $B_2$. Let us denote the resulting graph by $H'_k$.

\begin{center}
\includegraphics[width=0.6\textwidth]{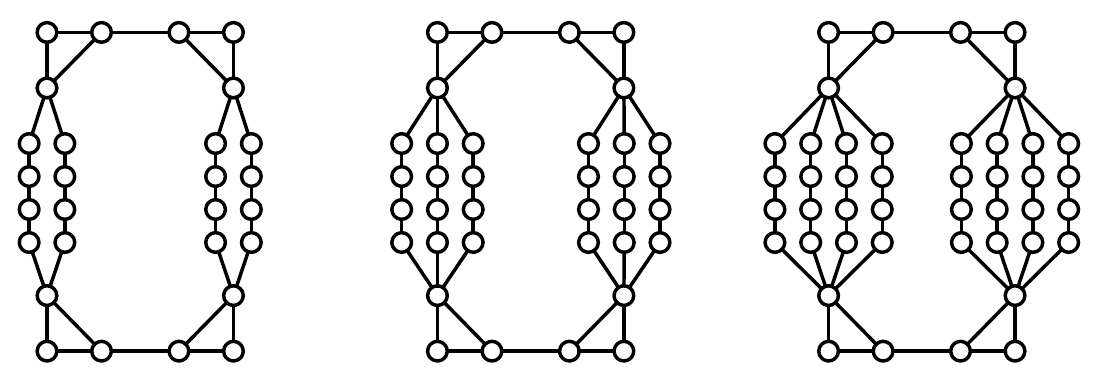}
\end{center}

\begin{lemma}
Let $k \ge 2$ be fixed. There exists a family of lower bound graphs of degeneracy $2$ for $H'_k$-detection and induced $H'_k$-detection with $f_n = \disj_{\Theta(n)}$ and $C(n) = \Theta(n^{1/k})$.
\end{lemma}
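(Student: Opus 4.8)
The plan is to reuse the construction of Lemma~\ref{lemma:bounded-treewidth-lb} essentially verbatim, changing only two things so that the input graphs become $2$-degenerate. First, because $H'_k$ has $k$ disjoint paths of length $5$ (instead of $3$) between the relevant triangles, I realize each such path in $G(n,x_0,x_1)$ as a length-$5$ path that runs from node $2$ of a triangle $A_{i,j}$ through a private vertex $u$, then the shared port $a_{i,\ell}$, the shared port $b_{i,\ell}$, a private vertex $w$, and finally to node $2$ of the triangle $B_{i,j}$; here the ports are assigned to triangles through the same injection $\rho$ from $[N]$ into $k$-subsets of $[K]$ with $K = \Theta(N^{1/k})$, while $u$ and $w$ are \emph{private} to a single triangle–port incidence. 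The cut is placed on the middle edges $\{a_{i,\ell}, b_{i,\ell}\}$, so that $u$ and $a_{i,\ell}$ lie on the $A$-side and $b_{i,\ell}$ and $w$ on the $B$-side. Second, I sparsify the disjointness encoding to a \emph{matching}: the only potential $x_0$-edge at triangle $A_{1,j}$ is the edge between node $1$ of $A_{1,j}$ and node $1$ of $A_{2,j}$, and symmetrically for $x_1$ and the $B$-triangles. This makes the instance $\disj_{\Theta(N)}$ and, crucially, bounds by $O(1)$ the degree that encoding edges contribute to any triangle.

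I then verify three things. Correctness mirrors Lemma~\ref{lemma:bounded-treewidth-lb-aux1}: in the yes-case, a coordinate $j$ with $x_0(j) = x_1(j) = 1$ gives an induced copy of $H'_k$ on the triangles $A_{1,j}, A_{2,j}, B_{1,j}, B_{2,j}$ together with the $2k$ length-$5$ paths through the ports $\rho(j)$, and the private vertices introduce no chords. In the converse direction, an edge-connected triangle pair must be $(A_{1,j}, A_{2,j})$ or $(B_{1,j}, B_{2,j})$ by the matching encoding, and the $\rho$-signature argument forces the two path-partner triangles to share the index $j$; the two triangle-connecting edges then witness $x_0(j) = x_1(j) = 1$. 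For the partition I take $V_0$ to be the $A$-side (all $A$-triangles, the private vertices $u$, and the ports $a_{i,\ell}$) and $V_1$ the $B$-side; the matching edges stay within $V_0$ and $V_1$, so the only crossing edges are the $\Theta(K) = \Theta(N^{1/k})$ port edges, giving $C(n) = \Theta(n^{1/k})$ and condition (2) of Definition~\ref{def:lower-bound-family}.

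The step I expect to be the crux is the degeneracy bound, and this is exactly where the length-$5$ paths earn their keep. I would give an explicit elimination order witnessing $2$-degeneracy: every private vertex $u$ or $w$ has degree exactly $2$ at all times (one triangle neighbor and one port neighbor), so all of them can be removed first; once they are gone, each shared port keeps only its single cut edge (degree $1$) and each node $2$ of a triangle becomes a leaf, so the dense triangle-to-port incidence collapses. What remains is the triangle apices (node $3$, degree $2$), the triangle base edges, and the matching encoding edges — a disjoint union of short paths and triangles, all eliminable at degree at most $2$. The contrast with length-$3$ paths shows why the insulation is essential: there the ports are adjacent directly to the node $2$ vertices, and after deleting the degree-$2$ apices one is left with a dense bipartite incidence between node-$2$ vertices of degree $k$ and ports of degree $\Theta(N^{1-1/k})$, which has no vertex of degree at most $2$, so the degeneracy is unbounded. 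Applying Theorem~\ref{lower-bounds} to the resulting family then yields the claimed bound.
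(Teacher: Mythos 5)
Your proposal matches the paper's construction essentially verbatim: the same matching-style disjointness encoding (giving $\disj_{\Theta(N)}$), the same subdivision of the triangle-to-port edges into length-$2$ paths so that the port edges $\{a_{i,\ell},b_{i,\ell}\}$ complete length-$5$ paths and form the $\Theta(n^{1/k})$ cut, and the same deferral of correctness to the argument of Lemma~\ref{lemma:bounded-treewidth-lb-aux1}. Your explicit elimination order is just a more detailed rendering of the paper's degeneracy observation (every node has at most two neighbors of degree exceeding $2$); the only nit is that after deleting the private vertices, node $2$ of a triangle has degree $2$ rather than becoming a leaf, which does not affect the argument.
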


\paragraph{Construction.} For the family of lower bound graphs, we also use a variation of the construction for Lemma~\ref{lemma:bounded-treewidth-lb}. Let $N \in \mathbb{N}$ and $x_0, x_1 \in \{ 0, 1 \}^{N}$. We construct a graph $G(n,x_0,x_1)$ as follows:
\begin{enumerate}
    \item We start with the same node set as in Lemma~\ref{lemma:bounded-treewidth-lb}.
    \item Instead of the edges described in the proof of Lemma~~\ref{lemma:bounded-treewidth-lb}, the graph $G(n,x_0,x_1)$ has the following additional edges for all $x_0$ and $x_1$:
    \begin{enumerate}[label=(\alph*),noitemsep]
        \item For each $j \in \{ 1, 2, \dotsc, K \}$, there are edges $\{ a_{1,j}, b_{1,j}\}$ and $\{ a_{2,j}, b_{2,j}\}$ (same as Lemma~\ref{lemma:bounded-treewidth-lb}.)
        \item Fix an injection $\rho$ from $[N]$ to subsets of $[K]$ of size $k$. For each $i \in \{ 1,2 \}$ and $j \in [N]$, we add a \emph{path of length $2$} between node $2$ of triangle $A_{i,j}$ and $a_{i,\ell}$ for each element $\ell \in \rho(j)$, and likewise between between node $2$ of triangle $B_{i,j}$ and $b_{i,\ell}$.
    \end{enumerate}
    \item Depending on $x_0$ and $x_1$, we add the following edges to the graph. For each $j \in [N]$ such that $x_0(j) = 1$, we add the edge between node $1$ of triangle $A_{1,j}$ and node $1$ of triangle $A_{2,j}$. Likewise, for each $j \in [N]$ such that $x_1(j) = 1$, we add the edge between nodes $1$ of triangles $B_{1,j}$ and $B_{2,j}$.
\end{enumerate}
Figure~\ref{fig:bounded-treewidth-lb-degen} shows an example of this construction.

The total number of nodes in the graph $G(n,x_0,x_1)$ is $n = 12N + 4 kN^{1/k} + kN = \Theta(N)$. The partition required by Definition~\ref{def:lower-bound-family} is obtained by setting $V_0$ to contain all triangles $A_{i,j}$ and nodes $a_{i,\ell}$. The size of the cut $S$ is $kN^{1/k} = \Theta(n^{1/k})$.

To see that $G(n,x_0,x_1)$ has degeneracy $2$, we observe that any node in $G(n,x_0,x_1)$ has at most $2$ neighbors of degree larger than $2$. Thus, any induced subgraph $G'$ of $G(n,x_0,x_1)$ either contains a node with degree $2$ in $G(n,x_0,x_1)$, or every node in $G'$ has degree at most $2$.

\begin{figure}
\begin{center}
\includegraphics[width=0.5\textwidth]{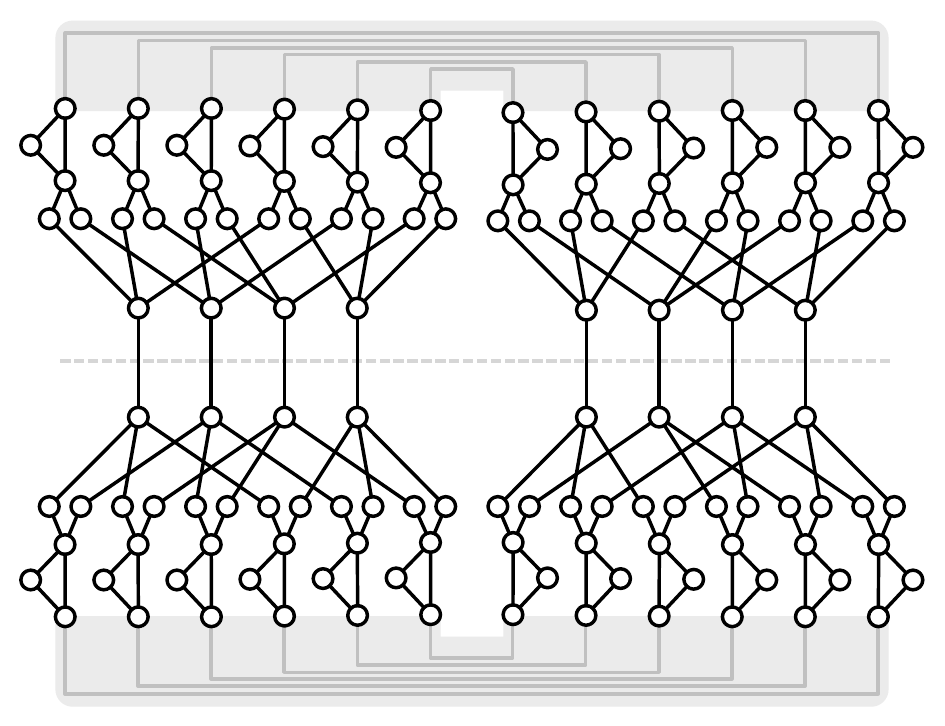}
\caption{Example of the hard instance for $H'_k$-detection for $k = 2$. The shaded area is represents the encoding edges.}\label{fig:bounded-treewidth-lb-degen}
\end{center}    
\end{figure}

\begin{center}

\end{center}

\begin{lemma}
Let $G(n, x_0, x_1)$ be the graph defined above. Then we have that $G(n, x_0, x_1)$ contains a copy of $H'_k$ if and only if $x_0$ and $x_1$ are not disjoint. Moreover, any copy of $H'_k$ in $G(n, x_0, x_1)$ is an induced copy of $H'_k$.
\end{lemma}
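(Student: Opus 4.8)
The plan is to follow the proof of Lemma~\ref{lemma:bounded-treewidth-lb-aux1} closely, adapting it to the two changes in the present construction: the connecting paths now have length $5$ rather than $3$, and the disjointness instance is encoded at matching indices (an edge between $A_{1,j}$ and $A_{2,j}$ rather than between $A_{1,j}$ and $A_{2,\ell}$). For the forward direction, suppose there is $j \in [N]$ with $x_0(j) = x_1(j) = 1$. I would exhibit an explicit copy of $H'_k$ on the triangles $A_{1,j}, A_{2,j}, B_{1,j}, B_{2,j}$: the encoding edges between node $1$ of $A_{1,j}$ and node $1$ of $A_{2,j}$, and between node $1$ of $B_{1,j}$ and node $1$ of $B_{2,j}$, are present by assumption, and for each $\ell \in \rho(j)$ the concatenation
\[ \text{node }2\text{ of }A_{1,j} - m - a_{1,\ell} - b_{1,\ell} - m' - \text{node }2\text{ of }B_{1,j} \]
(where $m, m'$ are the midpoints of the two length-$2$ paths) is a path of length $2+1+2 = 5$. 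Since $|\rho(j)| = k$, this yields $k$ internally disjoint length-$5$ paths, and symmetrically on the $A_2$--$B_2$ side. The resulting subgraph is isomorphic to $H'_k$, and the sparsity of the construction guarantees it is induced.

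For the backward direction I would first record that the only triangles in $G(n,x_0,x_1)$ are the designated triangles $A_{i,j}$ and $B_{i,j}$: the bridge nodes $a_{i,\ell}, b_{i,\ell}$, the path midpoints, and the endpoints of the encoding edges have no two mutually adjacent neighbours, so they lie in no $3$-clique. Hence the four triangles of any copy of $H'_k$ are among the designated ones, and the two edge-connected triangle pairs demanded by the pattern must each map to a pair joined by an encoding edge, i.e.\ to some $(A_{1,j}, A_{2,j})$ with $x_0(j) = 1$ or some $(B_{1,j}, B_{2,j})$ with $x_1(j) = 1$, necessarily at a matching index.

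The heart of the argument, and the step I expect to be the main obstacle, is a routing analysis showing that the $k$ length-$5$ paths force the two pairs to share an index. I would argue that any length-$5$ path starting at node $2$ of an $A_1$-triangle must traverse a bridge of the form $2(A_{1,j}) - m - a_{1,\ell} - b_{1,\ell} - m' - 2(B_{1,j'})$ with $\ell \in \rho(j) \cap \rho(j')$, by tracing the (essentially forced) option at each low-degree node and ruling out routes that turn back into an $A$-triangle or remain inside the triangle gadgets. In particular a length-$5$ path leaving an $A_1$-triangle can only reach a $B_1$-triangle, so the two pattern pairs cannot both be of $A$-type or both of $B$-type; say they are $(A_{1,j}, A_{2,j})$ and $(B_{1,j''}, B_{2,j''})$. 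The $k$ internally disjoint paths between node $2$ of $A_{1,j}$ and node $2$ of $B_{1,j''}$ then require $k$ distinct indices $\ell \in \rho(j) \cap \rho(j'')$, forcing $\rho(j) = \rho(j'')$ and hence $j = j''$ by injectivity of $\rho$. Thus $x_0(j) = x_1(j) = 1$, so the instances intersect, and as before the sparsity of the construction leaves no chords among triangle, bridge, and midpoint nodes, so every copy of $H'_k$ is induced.
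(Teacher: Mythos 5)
Your proposal is correct and matches the paper's approach: the paper's own proof of this lemma is simply the remark that the argument is identical to that of the general-graph version (Lemma~\ref{lemma:bounded-treewidth-lb-aux1}), and what you have written is exactly that argument with the two modifications (length-$5$ bridges, diagonal encoding of disjointness) spelled out explicitly. If anything, your routing analysis forcing $\rho(j)=\rho(j'')$ and hence $j=j''$ is more detailed than what the paper records.
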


\begin{proof}
Argument is identical to Lemma~\ref{lemma:bounded-treewidth-lb-aux1}.
\end{proof}

% ------------------------------------------------------
\section{Bounded vertex cover number and MCIS}\label{sec:mcis}
% ------------------------------------------------------

Finally, we consider induced subgraph detection parameterized by vertex cover number $\vc(G)$. Specifically, we show that a more general problem of \emph{maximum common induced subgraph (MCIS)} can be solved in constant rounds on graphs of constant vertex cover number, which implies our results for induced subgraph detection.

\subsection{Preliminaries}

\paragraph{Maximum common induced subgraph.}

In the centralized version of maximum common induced subgraph, we are given graphs $G = (V_G, E_G)$ and $H = (V_H, E_H)$ as input, and the task is to find the maximum-size graph $G^*$ such that $G^*$ appears as induced subgraph of both $G$ and $H$. More precisely, the output should be a function $f \colon V_G \to V_H \cup \{ \bot \}$ such that for the set $U_G = \{ v \in V_G \colon f(v) \ne \bot \}$, the function $f$ restricted to $U_G$ is an isomorphism between $G[U_G]$ and $H[f(U_G)]$.

In this section, we consider MCIS parameterized by the sum of the vertex cover numbers $\vc(G) + \vc(H)$. Note that when $H$ is a complete graph and $|G| = |H|$, the problem is equivalent to maximum clique and hence NP-hard. It is W[1]-hard when parameterized by the solution size $k$, and W[1]-hard parameterized by the size of a minimum vertex cover of only one of the input graphs, even when restricted to bipartite graphs (see e.g.~\cite{ABUKHZAM201499,ABUKHZAM201769} for more discussion).

\paragraph{Distributed MCIS.} In the distributed version of MCIS, the input graph $G = (V_G, E_G)$ is the communication network, and full information about the second input graph $H = (V_H, E_H)$ is given to every node as local input. Each node $v$ needs to give a local output $f(v) \in V_H \cup \{ \bot \}$ such that the global function $f$ satisfies the conditions of MCIS solution.

A key observation for our distributed algorithm is the following trivial lemma; a graph of bounded vertex cover number also has bounded diameter, which allows us to perform global coordination easily.

\begin{lemma}\label{lemma:vc-diameter}
If graph $G$ has vertex cover number $\vc(G) \le k$, then $G$ has diameter at most $2k$.
\end{lemma}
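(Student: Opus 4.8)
The plan is to exploit the structure forced by a small vertex cover: if $S \subseteq V(G)$ is a vertex cover with $|S| \le k$, then $I = V(G) \setminus S$ is an independent set, so every edge of $G$ has at least one endpoint in $S$. I would bound the distance between an arbitrary pair of vertices $u, v$ by analysing a \emph{shortest} $u$--$v$ path, using the key fact that such a path cannot contain two consecutive vertices of $I$.

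Concretely, I would first fix a shortest path $P = (v_0, v_1, \dots, v_\ell)$ with $u = v_0$ and $v = v_\ell$ (this exists since the communication network $G$ is connected), and note that as a geodesic its vertices are pairwise distinct. Next I would observe that no two consecutive vertices $v_i, v_{i+1}$ can both lie in $I$: the pair $\{v_i, v_{i+1}\}$ is an edge of $G$, and an uncovered edge would contradict that $S$ is a vertex cover. Hence, among the $\ell+1$ vertices of $P$, the vertices of $I$ occupy no two consecutive positions, so at most $\lceil (\ell+1)/2 \rceil$ of them lie in $I$ and therefore at least $\lfloor (\ell+1)/2 \rfloor$ lie in $S$. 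Since these $S$-vertices are distinct and all contained in $S$, this yields $\lfloor (\ell+1)/2 \rfloor \le |S| \le k$, which rearranges to $\ell \le 2k$ in both the even and odd cases. As $u$ and $v$ were arbitrary, the diameter is at most $2k$.

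The argument is self-contained, and the only point requiring mild care is the counting step — converting ``no two consecutive $I$-vertices'' into the bound $\lfloor (\ell+1)/2 \rfloor \le k$ rather than a weaker off-by-one estimate — together with the tacit assumption that $G$ is connected (which always holds for the network in \congest); for a disconnected graph the statement should be read per connected component. I do not expect any genuine obstacle here, consistent with the lemma being labelled trivial.
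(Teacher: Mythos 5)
Your argument is correct and complete; the paper itself states this lemma without proof (explicitly calling it trivial), and the shortest-path counting you give -- no two consecutive geodesic vertices lie in the independent set $I = V(G)\setminus S$, so at least $\lfloor(\ell+1)/2\rfloor$ distinct vertices of the path lie in $S$, forcing $\ell \le 2k$ -- is exactly the standard argument one would supply. Your caveats about connectivity (guaranteed for the communication network) and the per-component reading for disconnected graphs are appropriate and do not affect the claim.
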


\subsection{Centralized algorithm}

We first present the centralized algorithm of \citet{ABUKHZAM201769}, and subsequently show how to implement it in the distributed setting. The algorithm is a fairly straightforward enumeration of possible solutions.

Let $\vc = \max(\vc(G),\vc(H))$ be an upper bound for both of the vertex cover numbers. We denote by $C_G$ and $I_G$ the minimum vertex cover of $G$ and its complement, respectively; note that the latter is an independent set. Likewise, we denote by $C_H$ and $I_H$ the minimum vertex cover of $H$ and its complement. By definition, we have $|C_G|, |C_H| \le \vc$.

The MCIS algorithm of \citet{ABUKHZAM201769} performs the following steps to find a maximum common induced subgraph:
\begin{enumerate}
	\item We first enumerate all ways to partition $C_G$ into three sets $\mcistoC{G}$, $\mcistoI{G}$, and $C_{G, \to\bot}$ and $C_H$ into three sets $\mcistoC{H}$, $\mcistoI{H}$, and $C_{H, \to\bot}$. For $\Gamma \in \{ G, H \}$, the nodes in $\mcistoC{\Gamma}$ are to be mapped to the vertex cover in the other graph, nodes in $\mcistoI{\Gamma}$ are to be mapped to the independent set in the other graph, and nodes in $C_{G, \to\bot}$ are to be mapped to $\bot$. Partitionings where $|\mcistoC{G}| \ne |\mcistoC{H}|$ are ignored. There are $9^\vc$ possible ways to select these partitionings.
	\item For $U \subseteq \mcistoC{G} \cup \mcistoI{G}$, we define
	\[ I_{G,U} = \{ v \in I_G \colon N(v) \cap \mcistoC{G} \cup \mcistoI{G}  = U  \}\,.\]
	This partitions $I_G$ into equivalence classes with the same neighborhoods w.r.t.\ nodes that are not mapped to $\bot$. Similarly, we define equivalence classes $I_{H,W}$ for $W \subseteq \mcistoC{H} \cup \mcistoI{H}$. Note that there are at most $2^\vc$ equivalence classes in either graph.
	\item For each partition in Step~(1), we enumerate all bijective mappings from $\mcistoC{G}$ to $\mcistoC{H}$, mappings from $\mcistoI{G}$ to equivalence classes of $I_H$, and mappings from $\mcistoI{H}$ to equivalence classes of $I_G$. There are at most $\vc^\vc$ ways to select the first mapping, and at most $2\cdot (2^\vc)^\vc$ ways to select the latter two mappings.
	\item For each set of mappings selected in Step~(3), we check that each equivalence class $I_{G,U}$ (resp. $I_{H,W}$) is an image of at most $|I_{G,U}|$ elements in $\mcistoI{H}$ (resp. at most $|I_{H,W}|$ elements in $\mcistoI{G}$). If this holds, we then construct a partial mapping $f \colon V_G \to V_H$ according to mappings selected in Step~(3), by selecting arbitrarily representatives from the equivalence classes, and the check that $f$ is a valid partial isomorphism.
	\item Finally, to complete the mapping $f$, we select for each remaining unmapped node $v \in I_{G,U}$ an image from the corresponding equivalence class in $I_H$ -- i.e. from the equivalence class $I_{H,W}$ that satisfies $f(U) = W$ -- as long as there remain free nodes. The remaining nodes $v \in V_G$ are mapped to $\bot$.
\end{enumerate}
Throughout the enumeration process, the algorithm keeps track of the best solution found so far, and returns the best solution upon termination of the algorithm. It follows immediately that any function $f$ enumerated by this algorithm is a valid MCIS solution, and an optimal solution will be listed by the algorithm.

\subsection{Distributed implementation}

We now show how to implement the algorithm of \citet{ABUKHZAM201769} in the \congest model.

\begin{theorem}\label{thm:dist-mcis}
Solving the maximum common induced subgraph problem on communication graph $G$ and target graph $H$ can be done in $2^{O(\vc^2)}$ rounds in the \congest model deterministically, where $\tau = \max(\vc(G),\vc(H))$.
\end{theorem}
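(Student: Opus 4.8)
The plan is to exploit the small diameter guaranteed by Lemma~\ref{lemma:vc-diameter}: since $\vc(G) \le \tau$, the network $G$ has diameter $O(\tau)$, so global broadcast and convergecast of any single value cost only $O(\tau)$ rounds. The strategy is then to (i) gather to every node a \emph{compact description} of $G$ that fixes $G$ up to isomorphism, (ii) have every node run the centralized algorithm above locally on this description together with the locally-known $H$ (local computation is free in \congest, so all nodes agree on the same optimal solution with no further communication), and (iii) turn the combinatorial solution into a consistent distributed labeling $f \colon V_G \to V_H \cup \{\bot\}$. The whole difficulty is to carry out (i) and (iii) without ever paying an $O(\log n)$ factor.

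For (i), the key observation is that a graph with a vertex cover of size $s$ is determined up to isomorphism by the $O(s^2)$ adjacencies inside the cover together with, for each of the $2^{s}$ possible neighborhoods $U$, the number of independent-set vertices whose neighborhood is exactly $U$. To obtain a cover $C_G$ of size $O(\tau)$ cheaply, I would let $D = \{v : \deg(v) > \tau\}$ be the high-degree vertices; each is identifiable locally, and each must lie in every minimum cover (otherwise all its $>\tau$ neighbors would be in the cover), so $|D| \le \tau$. Deleting $D$ leaves a graph $G[V \setminus D]$ of maximum degree $\le \tau$ and cover number $\le \tau$, hence with at most $\tau^2$ edges; these can be collected at a leader by a pipelined convergecast in $O(\tau^2)$ rounds, after which the leader computes a minimum cover $W$ of $G[V\setminus D]$ locally, so $C_G = D \cup W$ is a cover of size $\le 2\tau$. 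Broadcasting the $O(\tau)$ identifiers of $C_G$ lets every independent-set vertex read off its class $U \subseteq C_G$, and the $2^{O(\tau)}$ class counts and $O(\tau^2)$ internal edges are then aggregated and broadcast to all nodes in $2^{O(\tau)}$ rounds. Since all of $H$ is known locally, no communication is spent on its side.

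For (iii), the cover vertices of $G$ can read their images directly off the agreed solution. The independent-set vertices are the delicate part: the solution prescribes, for each class $I_{G,U}$, a target class $I_{H,W}$ and a number of vertices to map, but we must assign \emph{distinct} images consistently while never gathering the $\Theta(n)$ individual identifiers. My plan is to give each independent-set vertex a unique rank within its class by a prefix-sum scan over a BFS tree, carrying one subtree-count per class (a vector of length $2^{O(\tau)}$); a vertex of rank $r$ in class $I_{G,U}$ then maps to the $r$-th canonically-free vertex of $I_{H,W}$ when $r$ does not exceed the prescribed count and to $\bot$ otherwise, with the few reserved preimages demanded by $\mcistoI{H}$ assigned to the lowest ranks. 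As the canonical ordering of $I_H$ is computable locally from $H$, this yields a globally consistent, valid, and optimal MCIS solution.

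I expect the main obstacle to be exactly this realization step together with keeping the cover small: a naive cover (say, all non-isolated low-degree vertices) would have size $\Theta(\tau^2)$ and blow up the exponent, while a naive assignment would either gather $\Omega(n)$ data or incur an $O(\log n)$ factor. The high-degree/low-degree-core split keeps $|C_G| = O(\tau)$, and the tree-scan ranking realizes the assignment in $2^{O(\tau)}$ rounds. Combined with the $O(\tau)$ diameter, every step is independent of $n$, so the total round complexity stays within the claimed $2^{O(\tau^2)}$.
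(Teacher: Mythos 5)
Your proposal is correct and follows the same skeleton as the paper's proof: exploit Lemma~\ref{lemma:vc-diameter} to get an $O(\vc)$-depth BFS tree, reduce everything to the enumeration of Abu-Khzam et al.\ over partitions of the two covers and over neighborhood-equivalence classes of the independent sets, and realize the per-class assignment by giving each independent-set vertex a rank via a tree-based prefix-sum (the paper's ``DFS order over the BFS tree'' with total cost $O(\vc 2^{\vc})$ is exactly the scan you describe). You differ in two implementation choices, both of which work. First, the paper obtains a \emph{minimum} vertex cover $C_G$ by invoking the distributed parameterized algorithm of Ben-Basat et al.\ with binary search, in $O(\vc^2\log\vc)$ rounds; you instead use the elementary high-degree kernelization ($D=\{v:\deg(v)>\vc\}$ plus a locally computed cover of the $\le\vc^2$ residual edges), which is self-contained and yields a cover of size $\le 2\vc$ --- this is harmless, since the correctness of the enumeration needs only that $C_G$ is a cover and its complement an independent set, and the factor $2$ disappears into $2^{O(\vc^2)}$ (it does require a known bound on $\vc$, handled by doubling, just as the paper's binary search does). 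Second, the paper keeps the enumeration distributed --- each candidate partition/mapping is checked with a round of communication (learning class sizes, exchanging tentative $f$-values with neighbors to verify the partial isomorphism) --- whereas you first broadcast an isomorphism-invariant summary of $G$ (cover adjacencies plus the $2^{O(\vc)}$ class counts) and run the entire centralized algorithm locally at every node, communicating only to realize the output. Your variant avoids the distributed validity checks entirely and in fact brings the communication cost down to $2^{O(\vc)}$ rounds, with the $2^{O(\vc^2)}$ term surviving only as local computation; both versions establish the stated bound.
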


\begin{proof}
First, using standard techniques, we can construct a BFS tree $T$ of depth $O(D)$ in $O(D)$ rounds~\cite{Peleg00}, and since $D = O(\tau)$ by Lemma~\ref{lemma:vc-diameter}, we get a BFS tree of depth $O(\vc)$ in $O(\vc)$ rounds.

We can compute a minimum vertex cover $C_G$ of $G$ in $O(\vc^2 \log \vc)$ rounds using the algorithm of \citet{ben2018parameterized} and binary search, and since all nodes know the graph $H$, we can also compute a minimum vertex cover $C_H$ for $H$ locally at each node. Broadcasting over the BFS tree $T$, we can have all nodes learn the identifiers of all nodes in $C_G$ in $O(\vc)$ rounds. Moreover, we assume that all nodes learn the identifiers of their neighbors and whether they belong to $C_G$ at this point.

We now describe how to implement the steps of the centralized MCIS algorithm in \congest.

\noindent\textbf{Implementing Step~(1).} All nodes know the sets $C_G$ and $C_H$ in full, so all nodes can list the possible partitionings locally. The subsequent steps are performed for one fixed partitioning at a time, iterating over all possibilities.

\noindent\textbf{Implementing Step~(2).} All nodes now know the current partitioning of $C_G$ and $G_H$. Thus, each node in $I_G$ can compute their own equivalence class, and all nodes can compute all equivalence classes of nodes in $I_H$.

\noindent\textbf{Implementing Step~(3).} All nodes know the current partitioning of $C_G$, so each node knows the possible equivalence classes of $I_G$; however, note that nodes do not know their sizes at this step, and some may be empty. By the previous step, all nodes know the exact partitioning of $I_H$ into equivalence classes. Based on this information, all nodes can enumerate the possible mappings locally. The subsequent steps are performed for one fixed partitioning and one fixed set of mappings at a time, iterating over all possibilities.

\noindent\textbf{Implementing Step~(4).} To check if the mapping $f$ can be constructed, the nodes need to learn the sizes of the equivalence classes of $I_G$ and $I_H$. The latter is known to all nodes per the previous step. For the former, we use standard gathering and broadcast over the BFS tree $T$ to learn the size of each $I_{G,I}$ for $U \subseteq \mcistoC{G} \cup \mcistoI{G}$; this takes $O(\vc 2^\vc)$ rounds. Each node can then perform the necessary checks locally.

Each node $v$ in $\mcistoC{G}$ locally fixes their output $f(v)$ according to the mapping selected in Step~(3). Each node $v$ in $\mcistoI{G}$ selects an output $f(v)$ from the equivalence class $I_{H,W}$ they are mapped to; in more detail, since $v$ knows the identifiers of all nodes in $\mcistoI{G}$ that are mapped to the same equivalence $I_{H,W}$, it can compute index $i$ such that $v$ is the node with $i$th largest identifier among the nodes mapped to $I_{H,W}$. Node $v$ then selects $w \in I_{H,W}$ that has the $i$th largest identifier in set $I_{H,W}$ as the image $f(v)$.

To fix the remaining outputs assigned at this step, we first, for each equivalence class $I_{G,U}$, assign indices $i \in \{ 1, 2, \dotsc, |I_{G,U}| \}$ to each node $v \in I_{G,U}$ so that each node $v$ in the class knows its own index and has a distinct index. This can be done by e.g.~assigning indices for each class separately in DFS order over the BFS tree $T$, taking total time $O(\vc 2^\vc)$ for all equivalence classes. Using this numbering, all nodes can locally compute a mapping from the nodes in $\mcistoI{H}$ to the indices of the equivalence classes as above. Each node $v \in I_{G}$ can locally determine if they are paired with a node $w \in \mcistoI{H}$, and set their local output as $f(v) = w$.

Finally, the nodes learn the assigned output values $f(v)$ of their neighbors, and check that $f$ is a valid partial isomorphism.

\noindent\textbf{Implementing Step~(5).} At this point, all nodes know how many unassigned nodes are left in each equivalence class $I_{G,U}$ and $I_{H,W}$, and that the assigned nodes are the ones with smallest indices (in case of $I_{G,U}$) and smallest identifiers (in case of $I_{H,W}$). For each pair of $U \subseteq \mcistoC{G} \cup \mcistoI{G}$ and $W \subseteq \mcistoC{H} \cup \mcistoI{H}$ with $f(U) = W$, we assign as many of the remaining nodes from the equivalence class $I_{G,U}$ as possible to be mapped to nodes in $I_{G,W}$ in the identifier order. As each node $v \in I_G$ knows their index within their equivalence class, they can locally determine if they are to be mapped to a node $w \in V_G$, and output $f(v) = w$. All remaining nodes $v \in V_G$ output $f(v) = \bot$.

\noindent\textbf{Running time.} There is at most total of $9^\vc \vc^\vc 2 \cdot 2^{\vc^2}$ selections of the partitions and mappings that are checked by the algorithm, and checking one combination can be done in $O(\vc 2^\vc)$ rounds. The total running time is thus $O(18^\vc \vc^{\vc + 1} 2^{\vc^2 + 1})$ rounds.
\end{proof}

\subsection{Induced subgraph detection on bounded vertex cover number graphs}

As an immediate consequence of the MCIS algorithm, we obtain a parameterized distributed algorithm for detecting an induced copy of $H$, for any pattern graph $H$. That is, any graph $H$ on $k$ nodes has vertex cover number at most $k$, and the optimal MCIS solution for graph $G$ and $H$ has size $k$ if and only if $H$ appear as an induced subgraph in $G$, assuming $G$ has more nodes than $H$.

\begin{theorem}\label{thm:subgraph-vcn}
Let $H$ be a pattern graph on $k$ nodes. Finding induced copy $H$ can be done in $2^{O((\vc(G) + k)^2)}$ rounds in the \congest model deterministically.
\end{theorem}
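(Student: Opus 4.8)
The plan is to reduce induced $H$-detection directly to the distributed MCIS algorithm of Theorem~\ref{thm:dist-mcis}. First I would run that algorithm on the communication graph $G$ with target graph $H$. Since $H$ has $k$ nodes, its vertex cover number satisfies $\vc(H) \le k$, so the parameter $\tau = \max(\vc(G), \vc(H))$ controlling the running time is bounded by $\max(\vc(G), k) \le \vc(G) + k$. Substituting this into the $2^{O(\tau^2)}$ bound of Theorem~\ref{thm:dist-mcis} yields the claimed $2^{O((\vc(G)+k)^2)}$ round complexity.

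For correctness, the key observation is that any common induced subgraph of $G$ and $H$ has at most $\min(|V_G|, |V_H|)$ nodes, and when $|V_G| \ge k$ it has exactly $k = |V_H|$ nodes precisely when all of $H$ embeds into $G$ as an induced subgraph. Thus the optimal MCIS value equals $k$ if and only if $G$ contains an induced copy of $H$. (The degenerate case $|V_G| < k$ forces a no-instance and is trivial, since the entire network then has fewer than $k$ nodes.) The MCIS algorithm outputs a function $f \colon V_G \to V_H \cup \{\bot\}$; letting $U_G = \{v : f(v) \ne \bot\}$, the solution size is $|U_G|$, so it remains only to test whether $|U_G| = k$ and to propagate the verdict.

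The final step converts this into the per-node reporting semantics of detection. By Lemma~\ref{lemma:vc-diameter} the graph has diameter $O(\vc(G))$, so I would aggregate $|U_G|$ over a BFS tree and broadcast whether it equals $k$ in $O(\vc(G))$ additional rounds, which is absorbed into the round bound. If the optimum has size $k$, every node outputs that the instance is a yes-instance; in particular every node belonging to an induced copy of $H$ reports yes, as required. The main---and only mild---subtlety is reconciling the single optimal solution produced by MCIS with the requirement that all nodes on a copy report yes; this is handled entirely by the cheap global broadcast afforded by the bounded diameter, so essentially all of the real work has already been discharged in Theorem~\ref{thm:dist-mcis}.
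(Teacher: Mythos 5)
Your proposal is correct and takes essentially the same approach as the paper: Theorem~\ref{thm:subgraph-vcn} is derived there as an immediate corollary of Theorem~\ref{thm:dist-mcis}, using $\vc(H) \le k$ to bound the parameter and the observation that the optimal MCIS value equals $k$ exactly when $G$ contains an induced copy of $H$. Your final paragraph on aggregating and broadcasting the verdict over a BFS tree of depth $O(\vc(G))$ merely makes explicit a reporting step the paper leaves implicit.
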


\section{Conclusions and open problems}

A central takeaway of this work is that centralized parameterized complexity offers both algorithmic techniques and perspectives for distributed computing. In particular, we believe that the study of structural graph parameters is a valuable paradigm for understanding sparse and structured networks in general. However, we note that there still remain open research directions related to topics studied in this paper:
\begin{itemize}
	\item In terms of immediate open questions left by our work, we note that we currently do not have any systematic results on separation between the hardness of induced and non-induced subgraph detection for a given pattern $H$. For example, the induced cycle detection lower bound of Le Gall and Miyamoto~\cite{legall2021induced} gives a near-linear -- or super-linear, in case of even cycles -- gap between induced and non-induced cycle detection, but it would be interesting to explore similar results for other pattern graphs in systematic fashion.
	\item More generally, we do not understand the complexity of subgraph detection type problems in the distributed setting as well as in the centralized setting. For example, the complexity of $k$-independent set detection in \congest remains open, whereas in the centralized setting, it is equivalent to $k$-clique -- a correspondence that does not hold in \congest. 
	\item Besides degeneracy and vertex cover number, there are many other structural graph parameters commonly studied in parameterized complexity -- for example, feedback vertex and edge sets, treewidth, and pathwidth. Whereas Li~\cite{li2018distributed} provides a framework for using treewidth for global optimization problems, it does not directly imply results for local problems such as subgraph detection; one might expect that considering something akin to \emph{local} treewidth of a graph would be more appropriate for local graph problems. A secondary question is understanding what structural graph parameters are relevant from the perspective of real-world networks.
\end{itemize}

\subsection*{Acknowledgements}

Project has been supported by the European Research Council (ERC) under the European Union’s Horizon 2020 research and innovation programme (grant agreement No 805223 ScaleML), and by the LABEX MILYON (ANR-10-LABX-0070) of Universit\'e de Lyon, within the program ``Investissements d'Avenir'' (ANR-11-IDEX-0007)  operated by the French National Research Agency (ANR). We thank Fran\c{c}ois Le Gall and Masayuki Miyamoto for sharing their work on lower bounds for induced subgraph detection~\cite{legall2021induced}.

% ------------------------------------------------------
% References
% ------------------------------------------------------
\DeclareUrlCommand{\Doi}{\urlstyle{same}}
\renewcommand{\UrlFont}{\footnotesize\sf}
\renewcommand{\doi}[1]{\href{http://dx.doi.org/#1}{\footnotesize\sf doi:\Doi{#1}}}

\bibliographystyle{plainnat}
\bibliography{congest-parameterised}

\clearpage

\appendix

\section{Induced short paths}\label{sec:induced-short-paths}

Induced paths with two edges can be detected in $O(1)$ rounds, in contrast to the situation with e.g.~triangle detection. The proof follows the centralized algorithm of \citet{vassilevska2008efficient}.

\begin{theorem}
Given a graph $G$ on $n$ nodes, detecting an induced path of length 2 on $G$ can be done in $O(1)$ rounds in the broadcast \congest model.
\end{theorem}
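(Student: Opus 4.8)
The plan is to reduce induced $P_3$-detection to a collection of \emph{set-equality tests} between adjacent nodes, one per edge, and to resolve each such test in a single round by fingerprinting. The combinatorial starting point is the standard fact that a graph is free of induced paths of length $2$ exactly when it is a disjoint union of cliques; equivalently, reflexive adjacency is transitive. I would first prove the sharper, vertex-local statement that is actually required by the reporting condition. Writing $N[v] = \{v\} \cup \{u : \{u,v\} \in E\}$ for the closed neighborhood, I claim that a node $v$ lies on some induced $P_3$ if and only if $v$ has a neighbor $u$ with $N[v] \ne N[u]$. Both directions are immediate: if $N[v] \ne N[u]$ with $u \sim v$, pick $z$ in the symmetric difference, and by symmetry assume $z \in N[v] \setminus N[u]$; then $z \ne u$ (as $u \in N[u]$), $z \ne v$ (as $v \in N[u]$), and $z \sim v$, $z \not\sim u$, so $\{z,v,u\}$ induces a $P_3$ through $v$. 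Conversely, if $v$ is the center of an induced $P_3$ with leaves $a,b$, then $a \in N[v] \setminus N[b]$, whence $N[v]\ne N[b]$ with $b \sim v$; and if $v$ is a leaf of an induced path $v\text{--}c\text{--}b$, then $b \in N[c]\setminus N[v]$, whence $N[v]\ne N[c]$ with $c \sim v$. Thus the task of reporting exactly reduces to: each node must decide whether some neighbor has a differing closed neighborhood.

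The algorithm is then two rounds. In the first round every node broadcasts its identifier, so that afterwards each node knows $N[v]$ as a set of identifiers. In the second round each node $v$ computes a short fingerprint of $N[v]$ and broadcasts it; the broadcast constraint is respected, since $v$ emits a single message to all neighbors. Each node then compares, for every incident edge, the fingerprint it receives against a fingerprint of its own set recomputed with the sender's parameters, and reports a yes-instance iff some neighbor's closed neighborhood is found to differ from its own. Correctness of the report follows from the characterization above, modulo the reliability of the equality tests.

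The only real obstacle is carrying out these neighborhood comparisons within $O(\log n)$-bit messages, and this is where the argument becomes randomized. Encoding $N[v] \subseteq [\mathrm{poly}(n)]$ as an integer $M_v$ through its characteristic vector, node $v$ draws a uniformly random prime $p_v$ from among the first $\mathrm{poly}(n)$ primes and broadcasts the pair $(p_v, M_v \bmod p_v)$, which fits in $O(\log n)$ bits (both the prime and the residue are $O(\log n)$-bit numbers, and $M_v \bmod p_v$ is computed locally without ever materializing $M_v$). A neighbor $u$ with $N[u]=N[v]$ always agrees, while if $N[u]\ne N[v]$ then $M_u \ne M_v$ and $p_v$ divides $M_u - M_v$ for only a $1/\mathrm{poly}(n)$ fraction of the candidate primes, so the discrepancy is caught with high probability. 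A union bound over the $O(n^2)$ ordered incident pairs yields overall success probability $1 - 1/\mathrm{poly}(n)$. This is precisely the private-coin protocol for \textsc{Equality}, and I expect the clean statement to be inherently randomized: a deterministic constant-round comparison of neighborhoods of size up to $n-1$ would run afoul of the $\Omega(n/\log n)$ bandwidth barrier. The centralized algorithm of~\citet{vassilevska2008efficient} supplies the same cluster-graph characterization; the entire distributed contribution lies in realizing the neighborhood-equality checks in $O(1)$ rounds.
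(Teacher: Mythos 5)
Your proof is correct, but it takes a genuinely different route from the paper's, and it proves a slightly weaker statement. You reduce the problem to exact closed-neighborhood equality tests across each edge ($v$ lies on an induced $P_3$ iff some neighbor $u$ has $N[u]\ne N[v]$ -- this characterization and its proof are fine) and then implement each test by randomized fingerprinting, so your algorithm is correct only with high probability. The paper's algorithm is deterministic and much coarser: each node broadcasts only its degree and the minimum identifier occurring in its closed neighborhood, and a node raises the flag if some neighbor has a different degree or if two of its neighbors carry different minimum-identifier labels; one then argues globally that if no node raises the flag, every connected component is a clique. This refutes your side remark that a deterministic constant-round solution should be impossible due to a bandwidth barrier for comparing $(n-1)$-element neighborhoods: exact per-edge equality of $N[u]$ and $N[v]$ is sufficient but not necessary for deciding the global predicate, and an $O(\log n)$-bit summary (degree plus minimum closed-neighborhood identifier) already certifies the disjoint-union-of-cliques structure. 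What your approach buys in exchange for randomness is a sharper per-node guarantee: since a fingerprint mismatch never produces a false positive, each node learns (w.h.p.) exactly whether it itself lies on an induced $P_3$, whereas the paper's deterministic test only guarantees that \emph{some} node reports whenever an induced $P_3$ exists -- e.g.\ in a $5$-cycle the centre of an induced $P_3$ can fail both of the paper's local checks when it holds the smallest identifier of its component, even though another node of the component does report.
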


\begin{proof}
As the first step, we assign a label $\ell(v)$ for each node as follows. First, each node $v \in V$ broadcast its identifiers to all its neighbors $N(v)$, and then each node $v$ picks the label $\ell(v)$ to be the smallest identifier from the set that it received, or its own identifier if that is smaller. The nodes then broadcast their label $\ell(v)$ and their degree $\deg(v)$ to all their neighbors.

Each node $v$ then checks the following conditions, and reports that induced $2$-path exists if at least one of them is satisfied:
\begin{enumerate}[label=(\alph*)]
    \item The exists a neighbor $u \in N(v)$ with $\deg(v) \ne \deg(v)$.
    \item There exists neighbors $u, w \in N(v)$ with $\ell(u) \ne \ell(w)$.
\end{enumerate}

For the correctness of the algorithm, we first observe that a graph does not contain an induced $2$-path if and only if each connected component is a clique. If none of the nodes report an induced $2$-path, then by conditions (a) and (b), each connected component is a clique. Likewise, if $G$ consists of disjoint cliques, no node will report an induced $2$-path. Finally, we note that the algorithm takes 3 rounds in \congest.
\end{proof}
%%%%%%%%%%%%%%%%%%%%%%%

\end{document}